  \theoremstyle{plain}
  \newtheorem*{lem*}{\protect\lemmaname}
\newtheorem{lemma}{Lemma}
\newtheorem{proposition}{Proposition}
\newtheorem{corollary}{Corollary}
\newtheorem{remark}{Remark}
\newtheorem{theorem}{Theorem}
\newtheorem{definition}{Definition}
  \providecommand{\lemmaname}{Lemma}
\begin{document}
\title{Optimal Exploration of an Exhaustible Resource \\ with Stochastic Discoveries}
\author{
    Ivar Ekeland\thanks{CEREMADE, Universite Paris-Dauphine},  
    Wolfram Schlenker\thanks{School of International and Public Affairs and The Earth Institute, Columbia University}, 
    Peter Tankov\thanks{ENSAE, Institut Polytechnique de Paris}, and
    Brian Wright\thanks{Department of Agricultural and Resource Economics, University of California at Berkeley}
}

\date{}

\maketitle

\begin{abstract}
\noindent The standard Hotelling model assumes that the stock of an exhaustible resource is known.  We expand on the model by Arrow and Chang that introduced stochastic discoveries and for the first time completely solve such a model using impulse control. The model has two state variables: the ``proven'' reserves as well as a finite unexplored area available for exploration with constant marginal cost, resulting in a Poisson process of new discoveries.  We prove  that  a frontier of critical levels of ``proven'' reserves exists, above which exploration is stopped, and below which it happens at infinite speed. This frontier is increasing in the explored area, and higher ``proven'' reserve levels along this critical threshold are indicative of more scarcity, not less.  In this stochastic generalization of Hotelling's rule, the expected shadow price of reserves  rises at the rate of interest across exploratory episodes. However, the actual trajectories of prices realized prior to exhaustion of the exploratory area {may jump  up or down upon exploration. Conditional on non-exhaustion, expected price arises at a rate bounded above by the rate of interest, consistent with  most  empirical tests based on observed price histories.}


\end{abstract}
\vspace*{2cm}

Key words: Exhaustible resource, optimal exploration, Hotelling's rule, impulse control 

JEL classification: Q32, C61

\newpage
\onehalfspacing
In the seminal work of  \citeasnoun{Hotelling1931},  the  socially optimal  price of a non-renewable commodity with known total reserves, net of marginal extraction cost, rises at the rate of interest. This simple model, in which rising price and falling reserves are both signals of increasing scarcity, has been the core of the economics of nonrenewable resources for nine decades. However, empirical tests of this deterministic model using observed price series are at best mixed \cite{Young1985,HalvorsenSmith1991,ChermakPatrick2001,SladeThille2009}.

Most nonrenewable commodities (for example, petroleum, natural gas, and many other minerals) are characterized by uncertainty about future discoveries. A typical market for such nonrenewable commodities includes, besides known reserves awaiting extraction (as in \citename{Hotelling1931} 1931), an area of what we shall call exploratory resources available for exploration for currently unknown quantities of additional reserves. This can be a crucial feature of the market. For example, the global reserve to production ratio increased from 30 to 50 between 1980 and 2019 (BP Statistical Review of Energy 2021). A model that assumes no growth in total consumable reserves does not appropriately capture a key component of such markets. Moving to a model with both proven reserves and finite explorable resources, one interesting question is whether reserves and observed  prices behave over time as implied by this extended model. 

\citeasnoun{ArrowChang1982} proposed a model that adds exploration with stochastic discoveries to his deterministic ``cake-eating'' model assuming zero cost of extraction of reserves for consumption. The authors call the known commodity stock ``reserves $R$,'' and the unexplored area that might yield further discoveries ``unexplored resources $x$.''   Additional reserves can be identified (``proven'') by exploration of the resource at cost per unit of explored area, resulting in additional discoveries of a known deterministic size that are Poisson distributed. \citeasnoun{ArrowChang1982} conclude that ``The price history will show fluctuations with little upward trend when [the unexplored area] is large; presumably the upward trend is stronger as [the unexplored area] approaches zero, but this requires a probabilistic analysis not yet performed.''  

Our first contribution is to provide the complete and rigorous solution of the Arrow and Chang model using the mathematical theory of impulse control. While Arrow and Chang, as well as several follow-up papers, {start with a bounded exploration rate and use commonsense arguments to conjecture that the rate is infinite at the optimum, we rigorously prove the optimality of impulse strategies within a general class of rules which may include both continuous and jump-type exploration, possibly with simultaneous consumption. This involves substantial mathematical difficulties, yet provides important implications for the realized price of reserves and extraction profile that has puzzled earlier studies. Confirming another conjecture of Arrow and Chang, we prove the existence of a \emph{critical frontier} separating exploration and consumption regions, and show that this frontier is increasing in the amount of explored area. We derive the implications of the model for the behavior of the histories of prices, exploration and reserves given the continuing presence of unexplored resources, showing that while the expected price always rises continuously at the rate of interest, a realized price trajectory may in general jump up or down upon exploration, with a final upward jump upon resource exhaustion.  

Our proof requires several highly non-trivial and non-standard mathematical steps. A defining feature of the model from the mathematical point of view is that information is acquired through exploration, {the timing and  pace of which are determined by the agent.} In other words, unlike standard stochastic control problems where the strategy of the agent is adapted to an exogenous information filtration, here the information filtration depends on the strategy. This makes defining admissible strategies and finding the optimal ones a highly nontrivial task. For this reason, we start by defining a class of "simplified" strategies (which we call bang-bang strategies), where exploration happens in zero time and cannot occur simultaneously with consumption. We use these bang-bang strategies to define the value function and prove rigorously that this value function satisfies an HJB equation (more precisely, a system of inequalities), which is different from the one found in the literature, but is more amenable to mathematical analysis, in particular, it does not require the value function to be differentiable in the explored area. 
We then prove the existence of the critical frontier, show that the frontier is increasing in explored area, and then establish that the value function is differentiable across the frontier. The smoothness of the value function allows in turn to prove that our HJB equation is equivalent to the one found in the literature, and prove that our bang-bang strategies are optimal withing a general class of strategies for this problem. }


Our second contribution is to provide new insights into interpreting empirical tests of the Hotelling model for non-renewable commodities that rely on realized price series. We show that, while price always rises at the rate of interest in expectation, the expected price conditional on not having exhausted all unexplored area rises at less than the rate of interest, explaining why realized historic price series when exploration is still ongoing { often have been found to  rise on average at less than the rate of interest.} Moreover, in the case of markets with continual exploration, which is a defining feature of most exhaustible resources (minerals, fossil fuels, etc), the classical statement that a higher level of ``proven'' reserves implies { higher expected future consumption  and hence less "scarcity" is no longer necessarily true:} as the unexplored area approaches zero, the critical reserve level $R^{\ast}(x)$ when exploration starts increases, as does scarcity. {Finally, we address the Arrow and Chang conjectures regarding the behavior of prices close to exhaustion.}
  
The structure of the paper is as follows. 
Section \ref{model.sec} further discusses and contrasts previous modelling approaches to this problem before introducing the admissible strategies and defining the value function $V\left(x,R\right)$, which we characterize in Section \ref{solving.sec}  as the solution of a suitable HJB equation (Theorem \ref{HJB.thm}). We prove the existence of a free boundary $R^{\ast}\left(x\right)$ separating the exploration region below from the pure consumption region above (Theorem
\ref{cons.thm}) 
and show that $R^{\ast}\left( x\right)  $ increases as unexplored resources $x$ decrease and that the smooth pasting condition holds across the free boundary. 
{As a final result of Section \ref{solving.sec} we consider an alternative class of strategies allowing for both continuous and impulse exploration with simultaneous consumption, and prove that the bang-bang strategies considered in this paper are optimal within this much larger class of strategies, providing the first rigorous proof of optimality of bang-bang strategies in the Arrow and Chang model. }
Section \ref{price.sec} confirms that $\mathbb{E}[p_{t}]=p_{0}e^{rt}$, the expectation of price follows the Hotelling rule. This does not mean that every realization follows the Hotelling rule, even loosely. 
Indeed, our  numerical simulations presented in Section~\ref{numerics.sec} show a wide range of behaviours with minimum reserves increasing, and sawtooth paths of price realizations becoming bounded by the path of price expectations conditional on the initial value of $x$, as the latter approaches zero. Finally, Section~\ref{conclusion.sec} concludes. Details of the mathematical proofs are given in Appendix Section~\ref{proofs.sec}.
\section{Model Setup and the Value Function} \label{model.sec}
\subsection{Previous Literature on Endogenous Discoveries}
Several surveys conclude that empirical tests of the deterministic \citeasnoun{Hotelling1931} model on mineral price histories are at best mixed, even when assumptions regarding reserve homogeneity, extraction cost functions, and market structure are substantially modified \cite{Krautkraemer1998,Kronenberg2008,SladeThille2009,Livernois2009,LivernoisThille2015,CunhaMissemer2020}. 

The one exception where prices have closely followed the Hotelling rule is old-growth timber \cite{LivernoisThilleZhang2006}.  While old-growth timber (Douglas fir) is not strictly non-renewable, the regrowth would at least take decades and the authors argue it is effectively non-renewable. The unique feature of the market is that the stock of old-growth timber can be quite precisely observed with no new ``discoveries.'' 

The distinguishing feature of most other nonrenewable commodities (for example, petroleum, natural gas, and many minerals) is the uncertainty about future discoveries. In a scarcely referenced paragraph in his article, \citeasnoun[p. 174]{Hotelling1931} briefly recognized the need  to extend his dynamic model to make it relevant to nonrenewable commodities in general: 

\begin{quote} The problems of exhaustible resources involve the time in another way besides bringing on exhaustion and higher prices, namely, as bringing increased information, both as to the physical extent and condition of the resource and as to the economic phenomena 	attending its extraction and sale.
\end{quote}

Economists have struggled to construct a stylized stochastic extension of Hotelling's  model to accommodate { costly exploration yielding stochastic discoveries.} \citeasnoun{Slade1982} plotted raw price series for many exhaustible resources and found them to have a U-shaped pattern, although a revised setup by \citeasnoun{BerckRoberts1996} emphasizes that prices are difference and not trend stationary due to stochastic shocks to supply and demand.  \citeasnoun{Gilbert1979} presented a model that included  exploratory resources containing potential reserves, focusing on the informational value of exploration when accumulation of stocks is feasible. { However, Gilbert restricts the total amount of current and potentially discoverable reserves to be equal one of two potential sizes, each with positive probability.} The true size is revealed only when cumulative extraction, at constant marginal cost, equals the smaller size of potential reserves. Exploration is extraction of reserves beyond what is immediately consumed. So if storage cost of reserves is negligible, the cost of exploration is the interest cost of advancing extraction to learn more about the true cumulative size of reserves, avoid a finite probability of a jump of consumption to zero, and more generally  improve the intertemporal allocation of consumption. If the  marginal value of consumption is unbounded, then it is always optimal to maintain via exploration a positive stock of extracted (or ``proven'') reserves. These insights are important for understanding the model we present below. \citeasnoun{SwierzbinskiMendelsohn1989} present a Bayesian model of investment in information about an exhaustible resource  that has a backstop technology that is a perfect substitute. Information shocks shift Hotelling price histories, so mean rates of price increase measured on price histories will differ from the interest rate. The authors argue that this implies that valuations of deposits, which are ex ante expectations, will tend to be more likely to confirm the Hotelling valuation rule than the more error-ridden paths of realizations. 

\citeasnoun{DeshmukhPliska1983} present a controlled process model of  optimal exploration and consumption of a ``non-renewable'' but not exhaustible resource. Exploratory resources are infinite but exploration over time follows a Poisson process with bounded intensity. Utility of zero consumption is zero and marginal utility is bounded; unlike \citeasnoun{Gilbert1979} proven reserves may be zero, and there is no learning; the model is stationary in the sense that consumption, exploration costs  and exploration outcomes are independent of both calendar time  and cumulative discovered reserves. Relevant to results below, they claim (p. 192) to prove that the expected scarcity rent rises at the discount rate as in \citeasnoun{Hotelling1931}. 

{\citeasnoun{ArrowChang1982} presented a mathematically rich model with deep economic implications; it has been studied and extended in several subsequent papers.

\citeasnoun{derzko1981optimal} allow for impulses (infinite exploration rates), assuming that the exploration is instantaneous and continues until a mine is found or the resource is exhausted completely. 
They derive heuristically the variational inequality for the impulse control formulation and, assuming that the utility function is $u(c) = \sqrt{c}$, they provide insights on the solution in the part of the domain when the area to explore is small. In particular, with a first order expansion they show that a frontier separating consumption and exploration regions exists. 
Finally, they argue  that new discoveries always depress the shadow price. {In truth, as we show in this paper, if the discovery comes after only a small decrease in resources, the surprise is positive and the price drops downwards, whereas if a sufficiently large area is explored before a discovery is made, the surprise is negative and the price moves upwards.} 

\citeasnoun{Lasserre1984}, following the same setup as \citeasnoun{ArrowChang1982} and \citeasnoun{derzko1981optimal}, recognizes that the price trajectory may  rise or fall upon exploration, yet asserts that the ``price of reserves is expected to drop upon exploration.'', while we argue in this paper that the price always rises at the rate of interest in expectation. However, the proof in \citeasnoun{Lasserre1984} uses a strict inequality obtained for a discrete change in the unexplored area $X$ (e.g., inequality (2) on page 196). When taking the limit for $\Delta X \rightarrow 0$, this strict inequality reverses to a weak inequality. 

Independently from \citeasnoun{derzko1981optimal}, \citeasnoun{hagan1994optimal} also derive the variational inequality for the impulse control formulation.\footnote{A preliminary version of this paper has circulated since 1981, it is already referenced in \citeasnoun{duffie1986diffusion}.} They prove that the value function is differentiable in the reserves $R$ across the exploration frontier, that the shadow price grows with the rate of interest in the consumption region and that the expected shadow price is continuous during exploration episodes. Furthermore, they use singular expansion techniques to develop intuition about the solution in the small uncertainty regime, that is, when the finds are frequent and of small size. Like  \citeasnoun{derzko1981optimal}, they  claim that new discoveries always depress the shadow price, and that a positive jump in the price may only happen at exhaustion. 

\citeasnoun{duffie1986diffusion} review the contributions of \citeasnoun{hagan1994optimal} and \citeasnoun{derzko1981optimal}, and in particular mention the difficulty of rigourously defining admissible strategies and proving the optimality of impulse control strategies within a general class of strategies. They formalize the problem using the notion of "control time" and propose a diffusion approximation to the original Arrow and Chang model, which changes certain properties of the solution, in particular, simultaneous exploration and consumption become possible in this case. 

\citeasnoun{Quyen1991} extends the Arrow and Chang framework assuming that the distribution of mineral deposits is not known to the agent, who therefore uses Bayesian learning to update the estimate of this distribution as new discoveries are made. The paper has a slightly different setup where the unexplored area is discretized into a finite number of cells. Hence the exploration decision becomes to either explore a cell or not, while in \citeasnoun{ArrowChang1982} and in our approach, the choice of the exploration area is continuous. However, when the number of discrete
cells approaches infinity, in the absence of learning, the two solutions should become identical. Nevertheless, \citeasnoun{Quyen1991} claims that as the exploration program unfolds, the critical level of proven reserves which triggers exploration decreases, whereas we { prove} in this paper that the opposite is true. We believe that the proof in \cite{Quyen1991} has a sign error and the results are hence not due to different modelling assumptions. 

Finally, \citeasnoun{farid1999optimal} apply the theory of piecewise-deterministic Markov processes to the version of the Arrow and Chang model with bounded exploration rate, assuming in addition that the exploration costs are paid from the reserves. They then derive a slightly different HJB equation and { claim that the optimal control is includes  periods of exploration at finite maximum intensity without consumption alternate with periods of consumption without exploration.}


}

\subsection{Model Setup}
We build on the familiar cake-eating problem \cite{Hotelling1931}. There is a single, infinite-lived consumer, who discounts a flow of instantaneous utility at the constant rate of interest, $r>0$, which coincides with the rate of time preference. There is a single, non-renewable, good, initially available as a finite quantity of proven reserves, $R_{0}>0$. Reserves do not depreciate over time and can be stored at no cost. We assume that there is implicitly an unlimited supply of another good -- called ``goods in general'' in \citeasnoun{ArrowChang1982} -- which is the numeraire. { The utility of the consumer is linear in the numeraire,  and  exploration costs, which we shall consider in the following section, are paid in units of numeraire. }

As in Hotelling's simplest case, we assume no cost of extraction of reserves for consumption.\footnote{The {role of reservoir pressure in limiting the response of production from oil wells is} addressed in Anderson, Kellog and Salant (2018). } The rate of consumption of the nonrenewable good at time $t$ is $c(t)\geq 0$,  $R(t)\geq 0$ is the  quantity of proven reserves at time $t$, and the instantaneous utility function gross of any cost of exploration $u:\left(  0,\ \infty\right)  \rightarrow\mathbb{R}$, measured in units of numeraire,
is concave, increasing and $C^{2}$, with\ $u^{\prime\prime}\left(  c\right) <0$ for $c>0$. 

The representative agent's problem  is to choose { the rate of  consumption of the non-renewable good} so as to maximise the present value of consumer surplus  from its consumption  over time. 
The optimization problem is:
\begin{eqnarray*}
\mathcal U(R_0)&=&\max_{c\left(  \cdot\right)  }\int_{0}^{\infty}u\left(  c\left(  t\right)
\right)  e^{-rt}dt \qquad\text{subject to}  \\
 \frac{dR}{dt}  & =  & -c(t),\ \ c\left(  t\right)  \geq0,\ \ R(t)\geq 0\ \ \forall t\geq 0.\\
R\left(  0\right) & =& R_{0}.%
\end{eqnarray*}

Here $\mathcal U(R_0)$ is the value of $R_0$, the present value of the future stream of instantaneous utility the agent draws from consuming an aggregate  quantity $R_0$ of the good.\footnote{Since this present value is measured in units of numeraire, {the  price that the agent is willing to pay for an extra unit of non-renewable good, is given by the derivative of the value function:}
$p_0 = \mathcal U'(R_0),$ and the same holds for all future dates: $p_t = \mathcal U'(R_t).$} Optimization of consumption over time implies Hotelling's rule:
$p_t = p_0 e^{rt}$. In his  deterministic model of optimal consumption of a finite stock of reserves $R$, rising price and falling  reserves are equivalent  signals of increasing scarcity. 
Before we  extend this model to include stochastic exploration, a few further notation:

Throughout the paper, unless specified otherwise, we will take the utility function $u\left(  c\right)  =\frac{1}{\alpha}c^{\alpha}$, with $0<\alpha<1$.\footnote{We denote by $u^*$ the convex conjugate of $u$, given { by $u^*(p) = \sup_{c>0} \{u(c) - cp\} = \frac{1-\alpha}{\alpha}p^{\frac{\alpha}{\alpha-1}}.$}} With our choice of utility function, the present value of future utility in the Hotelling problem without exploration is given explicitly by 
\begin{equation}
\mathcal U(R) =  \left(\frac{1-\alpha}{r}\right)^{1-\alpha} \frac{R^\alpha}{\alpha}.\label{uhotel}
\end{equation}

Following \citeasnoun{ArrowChang1982}, we now introduce a costly exploration process that can yield additional reserves. They assume a known area  of a potentially mineral-bearing resource, identified  for example by its observable surface geological characteristics, containing an unknown number of deposits following a Poisson distribution, each consisting of reserves of size $a>0$. 

Exploration of any interval of unexplored resources { may proceed at a finite rate or instantaneously.} As we establish below, if exploration were costless the entire unit resource would be instantaneously explored at the beginning, generating perfect knowledge of the total stock of reserves. With this information, the path of consumption can be fully optimized, exploiting the costless storability of this stock with the reserve level as the sole state variable, and the Hotelling rule holds. 

Henceforth, we assume a constant marginal cost $k>0$ for exploration of the resource. If exploration starts at $x$ and stops at $x^{\prime}<x$, because a deposit is found at $x^{\prime}$ or $x^{\prime}=0$ (there is nothing left to explore), then the exploration cost incurred is $k\left(x-x^{\prime}\right).$ In the exploratory regime there are two state variables, the remaining unexplored interval $x\geq 0$, and the reserves $R>0$ already discovered (``proven'') that have not yet been consumed, (henceforth called the ``reserves'').  
When $x$ approaches zero, unexplored resources are exhausted and the model transitions via a price jump to the deterministic Hotelling consumption regime starting with current reserves $R$. 

In this dynamic stochastic model,  optimization  of social welfare under uncertainty involves  trade-offs. The value of  early exploratory information regarding remaining reserves for optimization of the subsequent consumption path, given the non-negativity of reserves,  must be balanced against the benefits of delaying investment in costly exploration.
 
Our results are as follows. The agent's strategy is completely determined by the critical reserve frontier  which is a smooth decreasing function of unexplored resources, $R^*:[0,\infty)\to (0,\infty)$.  
When reserves are above $R^*$, $x$ is constant and the agent is in consumption mode; an instantaneous exploratory episode occurs when reserves decline to $R^*$. If exploration finds  one or more deposits that raise reserves above the critical level given remaining resources, exploration stops until consumption reduces reserves to the  critical level. 

For given unexplored area $x$, there is a one-to-one correspondence between reserves and price. 
Hence there is also a  critical price level $P^*:[0,\infty)\to (0,\infty)$. 
It is  optimal to explore  only when price is not below this critical level.
 
In finite time, after the last exploratory episode the resources will decline to zero.  At this point, the exploratory regime terminates and price jumps up to the deterministic increasing path that follows the familiar Hotelling rule, given proven reserves at the time of transition, which subsequently decline monotonically.  

A regime-ending jump occurs with positive probability in any exploratory episode. This fact has important implications for price behavior within the exploratory regime defined by $x>0$. In this regime, the price path rises at the rate of interest between exploratory episodes. Since the expected price path also rises at the rate of interest and price jumps upward if the regime ends in exhaustion of unexplored resources at that episode, it must jump down in expectation, conditional on remaining within the exploratory regime. Given positive  resources remaining after exploration, price again rises smoothly at the discount rate, until the next exploratory episode. In general, realized price in the exploratory regime has this saw-tooth pattern. 
 
After a sufficiently large interval of unsuccessful exploration, the increase in the exploration frontier as $x$ decreases dictates that realized price might have a positive jump even though more reserves are discovered and exploration ceases with positive remaining unexplored resources. Hence a price path within the exploratory region including such positive jump might have an average rate of price increase greater than the rate of interest. Furthermore, the common discrete size of each deposit means there is an $\hat{x}>0$ such that starting at unexplored resources $x<\hat x$ all price paths realized within the exploratory regime are bounded above by a path rising at the rate of interest. 
Taken together, these two results falsify the conjecture of Arrow and Chang that the upward trend in realized price  will be strongest as unexplored resources approach exhaustion.
 
 The realized price is a valid measure of the scarcity of the mineral; it takes into account anticipated  future exploratory outcomes given reserves and unexplored resources. On the other hand, a positive jump in reserves does not necessarily imply reduced scarcity.

\subsection{Illustration of the Model: Sample Path}\label{samplePath.sec}
Before we dive into the mathematical model, an analysis of the optimal consumption-exploration strategy might offer further insights into the model. A typical path is shown in Figure~\ref{samplePath.fig}.  The left graph illustrates the optimal exploration policy by plotting known reserves (y-axis) against the unexplored area (x-axis). There exists a critical reserve level $R^{\ast}(x)$ shown in yellow that decreases in the explored area $x$.  If known reserves exceed the critical level, exploration is zero and one is the consumption phase analogous to the standard Hotelling model: ``proven'' reserves are drawn down and price rises at the rate of interest. As soon as consumption decreases the stock of known reserve level to the critical level, exploration starts at infinite speed (zero time) until either enough new discoveries are found to push the stock of known reserves above the critical level or all unexplored area is exhausted. A sample path of various quantities described by our model is shown by $A \rightarrow B_0 \rightarrow B \rightarrow C \rightarrow D_0 \rightarrow D \rightarrow E \rightarrow F_0 \rightarrow F \rightarrow G_0 \rightarrow G \rightarrow H \rightarrow K \rightarrow L$. The right graph plots unexplored area, known reserves, price and consumption rate (y-axes) against time (x-axis), using the same letters to mark events.  Note that the points with ``zero'' subscript do not appear in the right graph because they are indistinguishable from the points without the subscript since the exploration happens in zero time.  

The sample path starts at point $A$ when known reserves are large enough so exploration is zero, price rises at the rate of interest, and the consumption rate decreases in time until known reserves hit level $B_0$, the critical reserve level when the system switches from consumption to exploration. The unexplored area decreases in zero time until a new discovery is made ($B$), which pushes known reserves up by the normalized size of the discovery ($C$). Note on the right graph how all variables jump in zero time to their new levels between $B$ and $C$.  The size of the jumps depends on the random amount of explored area that is required for the next discovery. At point $C$, reserves are again above the critical reserve level and the process of no exploration, price rising at the rate of interest, consumption rate declining repeats itself $C \rightarrow D_0$, before the next exploration starts when proven reserves are drawn down to $D_0$.  The next cycle $D_0 \rightarrow D \rightarrow E$ follows the analogous pattern of exploration and subsequent consumption $E \rightarrow F_0$. The next cycle is slightly different: the required area to find the next reserve ($F_0 \rightarrow F$ in the left graph) is large, so even though the new discovery pushes the stock of known reserves up to $G_0$, it is still below the critical level. As a result, the next exploration starts immediately ($G_0 \rightarrow G$), and the discovery pushes the known reserves above the critical level again ($G \rightarrow H$).  There are two noteworthy observations: first, since the second exploration starts right away in zero time, the right graph shows the system jumping from $F \rightarrow H$ in zero time, sides-stepping point $G$. Second, the price increases as we move from $F$ to $H$ ; the unexplored area decreased so much that despite the two new discoveries and higher reserve level, scarcity increased. Lastly, the next cycle starts when consumption decreases the reserve stock from $H$ to $K$, which results in no additional discoveries and the unexplored area $x$ is exhausted ($L$), at which point prices jump up one last time before the follow the classical Hotelling path of rising at the rate of interest. Prices will always show a positive jump when the unexplored area is exhausted.

\subsection{Admissible strategies and the Value Function}\label{value.sec}
In this section we define the value function of our optimal consumption-exploration problem. Let $N$ denote a Poisson process with intensity $\lambda$, which models our stochastic exploration process. The jump times of $N$, which correspond to the locations of the deposits, will be denoted by $(\xi_n)_{n\geq 0}$, with $\xi_0 = 0$. Denote by $(\mathcal F_u)_{u\geq 0}$ the natural filtration of the process $N$.

A {bang-bang} consumption-exploration strategy consists of an
increasing sequence of random variables $(\theta_n)_{n\geq 0}$ (moments where the exploration starts and continues either until the next find or until the explorable land is exhausted)  and a sequence of random maps $c_n : \Omega\times\mathbb R_+ \to \mathbb R_+$, where $c^0$ defines the consumption strategy before the first exploration date, and for $n\geq1$, $c^n$ defines the consumption strategy in the interval between $\theta_{n-1}$ and $\theta_n$. {We call such strategies bang-bang since within such a strategy, exploration occurs in zero time until either the next deposit is found or the entire resource is exhausted, and no exploration takes place during a consumption episode. In this paper, we first study the  strategies of bang-bang type. In Theorem \ref{strat.thm} we characterize the optimal strategies from this class. Then, in Theorem \ref{optimality}, we show that such strategies are optimal within a general class of strategies, where exploration may take place with a finite rate and may be simultaneous with consumption.}

%


To define which strategies are admissible, assume that the agent starts at time $t=0$ with reserve level $R$ and unexplored area $x$. We call the pair $(x,R)$ initial data of the problem. 
Define the explored after $n$ exploration episodes, $X_n$ as follows:
$$
X_{n} = \xi_n\wedge x \quad n\geq 0. 
$$
The total number of exploration episodes is $N_{x}+1$ and the consumption process is given by 
\begin{equation}
c_t =  c^0(t)\mathbf 1_{t<\theta_0} + \sum_{n= 1}^{N_x} c^n(t) \mathbf
1_{\theta_{n-1}\leq t< \theta_{n}} + c^{N_{x}+1} \mathbf
1_{\theta_{N_x}\leq t}.\label{state.cons}
\end{equation}

The $\sigma$-field $\mathcal F_{X_n}$ contains the information available to the agent after $n$ exploration episodes. Recall that a set $A$ belongs to $\mathcal F_{X_n}$ if and only if, for all $u\geq 0$, $A\cap \{X_n \leq u\} \in \mathcal F_u$. 

\begin{definition}[Admissible bang-bang strategy]\label{defbb}We say that the consumption-exploration strategy $(\theta,c)$ is admissible for initial data $(x,R)$ if
\begin{itemize}
\item[i.] For each $n\geq0$, the random variable $(\theta_n)$ is measurable with respect to $\sigma$-field $\mathcal F_{X_n}$,
and the function $c^n$ is $\mathcal F_{X_n} \times\mathcal B(\mathbb R_+)$ measurable; 
\item[ii.] The budget constraint
\begin{equation}
  R_t:=R - \int_0^t c_s ds + a \sum_{n=0}^{N_x-1} \mathbf 1_{\theta_n
    \leq t} \geq 0\label{budget}
\end{equation}
is satisfied a.s. for all $t\geq 0$. 
\end{itemize}

\end{definition}

Condition i.~ensures that the strategy of the agent at time $t$ may only depend on the information acquired at time $t$ through exploration, and equation (\ref{budget}) guarantees that the reserve level remains positive at all times. We denote the set of all such admissible strategies by $\mathcal A(x,R)$.

The value function is defined by
\begin{equation}
V(x,R) = \sup_{(\theta,c)\in \mathcal A(x,R)} \mathbb E\left[\int_0^\infty e^{-rt} u(c_t)
  dt- k\sum_{n=0}^{N_x} e^{-r\theta_n}(X_{n+1}-X_n)\right]\label{vf}
\end{equation}
for $(x,R)\in  [0,\infty)^2$. 

We next present a proposition gathering some a priori results regarding the behavior of the value function $V$. 
\begin{proposition}${}$\label{apriori.prop}
\begin{itemize}
    \item[i.] If the exploration is costless ($k=0$), it is optimal to explore all unexplored area immediately at $t=0$, and the value function is given by
$$
V(x,R) = \sup_{(c,\theta)\in \mathcal A(x,R)} \mathbb E\left[\int_0^\infty e^{-rt} u(c_t)
  dt\right] = \mathbb E[\mathcal U(R+aN_{x})].
$$
\item[ii.] The value function is increasing and locally Lipschitz continuous in the reserve level $R$.
\item[iii.] The value function admits the following bounds
\begin{equation}
\mathcal U(R) \leq V(x,R)\leq \mathbb E[\mathcal U(R + aN_{x})]\leq C(1+R^\alpha),\quad R\geq 0,\label{ubound}\label{lbound}
\end{equation}
for some $C<\infty$.
\end{itemize}
\end{proposition}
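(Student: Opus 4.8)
The plan rests on a single engine for parts (i) and (iii) — a pathwise comparison with the deterministic Hotelling value $\mathcal U$ — and on a scaling construction for part (ii). For the upper bounds I would first extract from the budget constraint \eqref{budget} a pathwise total-mass estimate: letting $t\to\infty$ in \eqref{budget} and using $R_t\ge0$ together with $\sum_{n=0}^{N_x-1}\mathbf 1_{\theta_n\le t}\le N_x$ gives $\int_0^\infty c_s\,ds\le R+aN_x$ almost surely, for every admissible strategy. For a fixed realization $\omega$ the map $t\mapsto c_t(\omega)$ is an admissible deterministic consumption plan with total mass at most $R+aN_x(\omega)$, so the very definition of the Hotelling value and its monotonicity yield the pathwise inequality $\int_0^\infty e^{-rt}u(c_t)\,dt\le\mathcal U\big(R+aN_x\big)$. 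Since the exploration cost $k\sum_n e^{-r\theta_n}(X_{n+1}-X_n)$ is nonnegative ($k\ge0$ and $X$ is nondecreasing), the objective \eqref{vf} is bounded above pathwise by $\mathcal U(R+aN_x)$; taking expectations and then the supremum gives $V(x,R)\le\mathbb E[\mathcal U(R+aN_x)]$, which is the upper bound in (iii) and half of (i) when $k=0$. The moment bound $\mathbb E[\mathcal U(R+aN_x)]\le C(1+R^\alpha)$ then follows from the explicit form \eqref{uhotel}, the elementary subadditivity $(R+aN_x)^\alpha\le R^\alpha+a^\alpha N_x^\alpha$ for $\alpha\in(0,1)$, and Jensen's inequality $\mathbb E[N_x^\alpha]\le(\mathbb E N_x)^\alpha=(\lambda x)^\alpha$.

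For the matching lower bounds I would exhibit explicit admissible strategies. The inequality $\mathcal U(R)\le V(x,R)$ comes from the ``never explore'' strategy (all $\theta_n$ set to $+\infty$, so no discoveries and no cost are ever incurred), which reduces the problem to pure Hotelling consumption of $R$ and attains $\mathcal U(R)$. When $k=0$, the ``explore everything at $t=0$'' strategy ($\theta_n\equiv0$ until $X_{N_x+1}=x$) reveals $N_x$ immediately, is $\mathcal F_x$-measurable hence admissible, incurs zero cost, and lets the agent consume the Hotelling-optimal plan for the now-known stock $R+aN_x$, attaining $\mathbb E[\mathcal U(R+aN_x)]$; combined with the upper bound this proves both the optimality claim and the value formula in (i).

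Part (ii) is where the one genuinely useful idea enters. Monotonicity is immediate: the objective \eqref{vf} depends on $R$ only through feasibility, and enlarging $R$ only relaxes \eqref{budget}, so any strategy admissible for $R$ is admissible for every $R'>R$ with the same value, whence $V(x,\cdot)$ is nondecreasing. For the local Lipschitz estimate on a compact $[R_{\min},R_{\max}]\subset(0,\infty)$, fix $R<R'$, set $\gamma=R/R'\in(0,1)$, take an $\varepsilon$-optimal strategy $(\theta,c')$ for $(x,R')$, keep the \emph{same} exploration $\theta$, and scale consumption to $c=\gamma c'$. A short computation shows $c$ is feasible for reserve $R=\gamma R'$, because the discovery term $a\sum\mathbf 1_{\theta_n\le t}$ is only scaled down while the remaining budget $\gamma(R'-\int_0^t c'_s\,ds)$ stays nonnegative. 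Crucially, since $\theta$ is unchanged and the cost term in \eqref{vf} depends neither on $R$ nor on $c$, the two cost terms cancel, while homogeneity of $u$ gives $u(\gamma c')=\gamma^\alpha u(c')$. Hence
\[
V(x,R')-V(x,R)\le(1-\gamma^\alpha)\,\mathbb E\Big[\int_0^\infty e^{-rt}u(c'_t)\,dt\Big]+\varepsilon\le(1-\gamma^\alpha)\,C(1+R_{\max}^\alpha)+\varepsilon,
\]
using the already-established bound on the utility integral. Estimating $1-\gamma^\alpha\le\alpha\gamma^{\alpha-1}(1-\gamma)\le\alpha R_{\min}^{-1}(R'-R)$ and letting $\varepsilon\downarrow0$ yields the Lipschitz bound with constant depending only on $R_{\min},R_{\max},x$.

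The main obstacle is not the analysis but the bookkeeping forced by the nonstandard information structure: one must verify that the ``explore everything'', ``never explore'', and scaled strategies all satisfy the measurability requirement (i) of Definition \ref{defbb} with respect to the strategy-dependent $\sigma$-fields $\mathcal F_{X_n}$. For the scaling argument this is harmless — multiplying an $\mathcal F_{X_n}$-measurable $c'^n$ by the constant $\gamma$ preserves measurability and $\theta$ is untouched — which is precisely why I would route the Lipschitz proof through a multiplicative rather than an additive perturbation of consumption. An additive removal of $R'-R$ units would require locating that mass in time in an $\mathcal F_{X_n}$-measurable way and would collide with the unbounded marginal utility $u'(0^+)=\infty$, whereas the scaling leaves the cost invariant and transfers the whole estimate onto the already-controlled quantity $\mathbb E[\mathcal U(R'+aN_x)]$.
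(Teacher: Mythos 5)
Your proposal is correct and follows essentially the same route as the paper: the same ``never explore'' and ``explore everything at $t=0$'' strategies for the lower bounds, and the same multiplicative scaling $c\mapsto (R/R')c$ combined with the homogeneity $u(\gamma c)=\gamma^{\alpha}u(c)$ for the local Lipschitz estimate. The only differences are cosmetic: you obtain the upper bound by a direct pathwise comparison with $\mathcal U(R+aN_x)$ rather than by first rewriting an arbitrary strategy as an ``explore-first'' strategy, and you control $\mathbb E[N_x^{\alpha}]$ via Jensen's inequality instead of the paper's explicit Poisson computation; both variants are valid.
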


To prepare the ground for solving the model in the next section, we define the following \emph{exploration operator}, applied to the value function: 
\begin{equation}
MV(x,R)  = \int_0^{x} V(x-s,R+a) \lambda e^{-\lambda s} ds + \mathcal U(R)
e^{-\lambda x} - k \frac{1-e^{-\lambda x}}{\lambda}.\label{exp.op}
\end{equation}
This operator represents the value of starting exploration immediately and exploring until either a new deposit is found or the entire area is exhausted, when the reserve level equals $R$ and the unexplored area equals $x$.

In particular, when the reserve level is zero, from equation (\ref{lbound}),
\begin{eqnarray*}
MV(x,0) &=& \int_0^{x} V(x-s,a) \lambda e^{-\lambda s} ds - k \frac{1-e^{-\lambda x}}{\lambda} \\ &\geq& \int_0^x \mathcal U(a) \lambda e^{-\lambda s} ds - k \frac{1-e^{-\lambda x}}{\lambda}  = (1-e^{-\lambda x})\left(\mathcal U(a) - \frac{k}{\lambda}\right).
\end{eqnarray*}
We thus arrive to a natural and simple condition
\begin{equation}
\mathcal U(a) \geq \frac{k}{\lambda},\label{assump2}
\end{equation}
which guarantees that exploration is optimal at zero reserve level for all $x> 0$. Throughout the paper we shall always assume that this condition is satisfied and introduce the parameter
\begin{equation}
\varepsilon:= \frac{k}{\lambda \mathcal U(a)}. \label{epsdef}
\end{equation}

\section{Solving the Model}\label{solving.sec}
We now show that the value function satisfies a variant of the HJB equation.

\begin{theorem}[Characterization of the value function]\label{HJB.thm}
  The value function $V(x,R)$ is concave and continuously differentiable in $R$ and increasing and continuously differentiable in $x$ on its entire domain. It is the solution of the following HJB equation: 
$$
\max\left\{u^*\left(\frac{\partial V}{\partial R}\right) -
    rV,MV-V\right\}=0,\quad V(0,R) = \mathcal U(R).
  $$
Conversely, if a non-negative function $\widetilde V(x,R)$ is continuously differentiable in $R$ on $(0,\infty)$ for every $x\geq 0$, satisfies the HJB equation, and admits the bound
$$
\widetilde V(x,R) \leq C(1+U(R)),\quad R\geq 0,
$$
for some $C<\infty$,  it is given by equation~(\ref{vf}).
\end{theorem}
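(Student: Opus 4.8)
The plan is to treat the two implications separately, starting with the converse (verification) statement, which is the more self-contained, and then to use it to organize the forward characterization: once verification is in hand, any concave $C^{1}$-in-$R$ solution of the HJB system with the right growth is automatically equal to $V$, so $V$ inherits whatever regularity we can build into an explicit candidate.

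\textbf{Verification (the converse).} Fix initial data $(x,R)$ and let $\widetilde V$ be a non-negative, $C^{1}$-in-$R$ solution of the HJB system satisfying $\widetilde V\le C(1+U(R))$. I would first prove $\widetilde V(x,R)\ge \mathbb E[\,\cdot\,]$ for \emph{every} strategy in $\mathcal A(x,R)$. Since there is no diffusion, the controlled process is piecewise deterministic: between two exploration dates $x$ is frozen and $\dot R_t=-c_t$, so on such an interval
\[
\frac{d}{dt}\big(e^{-rt}\widetilde V(x,R_t)\big)=e^{-rt}\big(-r\widetilde V-c_t\,\partial_R\widetilde V\big)\le -e^{-rt}u(c_t),
\]
where the bound uses $u^{*}(\partial_R\widetilde V)\ge u(c_t)-c_t\,\partial_R\widetilde V$ (definition of $u^{*}$) together with the supersolution inequality $u^{*}(\partial_R\widetilde V)-r\widetilde V\le0$. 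At each exploration date $\theta_n$ the impulse inequality $M\widetilde V\le\widetilde V$ controls the jump: conditioning on $\mathcal F_{X_n}$, the expected post-exploration value net of the discounted exploration cost equals $e^{-r\theta_n}M\widetilde V$ at the pre-exploration state, which is $\le e^{-r\theta_n}\widetilde V$. Chaining these two estimates across consumption intervals and exploration dates gives a supermartingale-type telescoping bound; letting the horizon tend to infinity and using $\widetilde V\le C(1+U(R))$ with $R_t\le R+aN_x$ (so that the transversality term $e^{-rt}\mathbb E[\widetilde V(x,R_t)]\to0$, finiteness coming from the Poisson moments of $N_x$ and Proposition \ref{apriori.prop}) yields $\widetilde V\ge V$. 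For the reverse inequality I would run the strategy prescribed by $\widetilde V$: explore exactly on $\{M\widetilde V=\widetilde V\}$ and consume at the rate $c^{*}=(u')^{-1}(\partial_R\widetilde V)$, the maximizer in $u^{*}$, on $\{u^{*}(\partial_R\widetilde V)=r\widetilde V\}$. Along this strategy each inequality above becomes an equality, so $\widetilde V(x,R)=\mathbb E[\,\cdot\,]\le V(x,R)$, provided the constructed policy is admissible in the sense of Definition \ref{defbb}.

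\textbf{Forward characterization.} For $V$ itself I would begin from a dynamic programming principle, splitting the first decision into ``explore now'' versus ``consume up to a stopping time and re-optimize.'' This gives simultaneously $V\ge MV$ and $V\ge\int_0^\tau e^{-rt}u(c_t)\,dt+e^{-r\tau}V(x,R_\tau)$, with one of the two tight at each point; standard localization then yields $\max\{u^{*}(\partial_R V)-rV,\;MV-V\}=0$ together with $V(0,R)=\mathcal U(R)$, first in the viscosity sense and then classically via the regularity below. Concavity in $R$ follows from the linearity of the budget constraint (\ref{budget}): a convex combination of near-optimal consumption policies for $(x,R_0)$ and $(x,R_1)$ is admissible for the corresponding combination of reserve levels, and $u$ is concave. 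Monotonicity in $x$ is the comparison that enlarging the unexplored area only relaxes the problem (excess area may be ignored). Given concavity, $\partial_R V$ exists off a countable set; on the interior of the consumption region $V$ solves the separable first-order ODE $u^{*}(\partial_R V)=rV$ and is smooth, while on the exploration region $V=MV$ and differentiating the integral defining $M$ shows $\partial_R V$ is continuous there as well.

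\textbf{Main obstacles.} Two points carry the real difficulty. The first is the strategy-dependent filtration: because information is revealed only through exploration and the explored amount is itself controlled, the $\sigma$-fields $\mathcal F_{X_n}$ of Definition \ref{defbb} depend on the chosen strategy, so both the rigorous proof of the dynamic programming principle and the admissibility of the candidate optimal policy in the verification step require care with no analogue in classical stochastic control. The second is regularity across the free boundary: the within-region smoothness above does not by itself give $\partial_R V$ continuous \emph{at} the critical frontier, nor differentiability in $x$. I expect to obtain these by constructing, via the free-boundary analysis of the next section, a concave candidate satisfying the HJB system together with smooth fit across $R^{*}(x)$, reading off $\partial_x V$ on the exploration region from the identity $\partial_x MV=\lambda\big(V(x,R+a)-MV(x,R)\big)-k$; the verification step then identifies the candidate with $V$, so that $V$ inherits the full $C^{1}$ regularity in both variables. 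Establishing the smooth fit at $R^{*}(x)$ is the crux where most of the work lies.
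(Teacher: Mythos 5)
Your overall architecture matches the paper's: a dynamic programming principle splitting ``explore now'' against ``consume then re-optimize'' for the forward direction, and a supermartingale/telescoping verification with a transversality argument for the converse, with the candidate strategy turning every inequality into an equality. The verification half is essentially the paper's proof. But two points in the forward half do not hold up as written. First, your concavity argument --- ``a convex combination of near-optimal consumption policies for $(x,R_0)$ and $(x,R_1)$ is admissible for the corresponding combination of reserve levels'' --- breaks down in this model: an admissible strategy is a pair $(\theta,c)$ of exploration dates \emph{and} consumption rates, and two near-optimal strategies will generally have different exploration dates $\theta_n$, different numbers of exploratory episodes, and hence different (strategy-dependent) information flows; there is no well-defined convex combination of them, and the exploration cost term is not linear in any such mixture. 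The paper never argues concavity this way: it obtains $\partial_R V$ explicitly as $u_1(rV)$ in the consumption region (so $\partial_R V$ is decreasing because $u_1$ is decreasing and $V$ is increasing) and propagates second-derivative information through the operator $M$ via the stopping-time representation of Lemma \ref{cinf}.

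Second, several steps you defer are precisely where the content of the theorem lies, and your plan does not indicate how to carry them out. (i) Differentiability of $V$ in $R$ at points where $V>MV$ cannot be read off from ``concavity plus the ODE,'' since writing the ODE already presupposes the derivative; the paper gets it from a two-sided estimate on the difference quotient using the parametrized value $\widetilde U(\tau,Q)$ of consuming a budget $Q$ over a horizon $\tau$ (Proposition \ref{HJB.prop}, part iii). (ii) That the sets $\{V>MV\}$ and $\{V=MV\}$ are separated by a \emph{single} threshold $R^*(x)$ in $R$ --- needed before one can even speak of ``the'' free boundary and smooth fit --- is proved in the paper by a contradiction argument applying Jensen's inequality to $Z=V(x-X_{\tau_C},\widehat R(\tau_C))^{1-1/\alpha}$ (Proposition \ref{boundary.prop}); nothing in your outline produces this. (iii) The smooth fit itself, which you correctly identify as the crux, is obtained in the paper not by constructing a separate candidate and invoking verification, but by directly comparing the one-sided derivatives $V'_-=MV'_R$ and $V'_+=u_1(rV)$ at $R^*(x)$ and squeezing them together using the HJB inequalities (Lemma \ref{prop.lm}, Step 1). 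As it stands your proposal is a correct road map with the right difficulties flagged, but the concavity step is wrong and the hardest links in the chain are asserted rather than proved.
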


\begin{remark}
{ As we pointed out in the introduction, the above characterization of the value function differs from the HJB inequalities obtained by other authors in the context of the Arrow and Chang model, given below and proved in Lemma \ref{prop.lm} in the Appendix.
\begin{equation}
\max\left\{u^*\left(\frac{\partial V}{\partial R}\right)-rV,\lambda(V(x,R+a)-V(x,R)) - \frac{\partial V}{\partial x} - k\right\} = 0,\quad V(0,R) = \mathcal U(R). \label{hjb.classic}
\end{equation}
While formally the two expressions are equivalent, on the one hand, our formulation is more natural in the context of impulse control, since the value of no exploration is compared directly to the value obtained after one complete exploratory episode rather than the value of exploring an infinitesimal amount of land, and on the other hand, our setting is more amenable to mathematical analysis, as it does not require the value function to be differentiable in $x$, which is difficult to prove a priori.}
\end{remark}


From the HJB relation above, it follows that the domain $(x,R)\in  \mathbb R^2_+$ can be divided into two disjoint domains: $\mathcal C = \{(x,R):V(x,R)>MV(x,R)\}$ and $\mathcal E = \{(x,R):V(x,R)=MV(x,R)\}$. In the domain $\mathcal C$, starting the exploration immediately decreases the value function of the agent, it is therefore not optimal to explore, and the agent will consume until the reserve level becomes sufficiently small to start the exploration. We call $\mathcal C$ the \emph{consumption region}. In the consumption region, the value function satisfies the equation
$$
u^*\left(\frac{\partial V}{\partial R}\right) =rV.
$$
This equation is satisfied by the Hotelling value function $\mathcal U$ over the entire domain. 
By contrast, in the region $\mathcal E$, starting the exploration immediately does not decrease the value function, so we call this region the \emph{exploration region}. 

The following theorem provides a precise characterization of the consumption and the exploration regions. 
\begin{theorem}[Characterization of the consumption and exploration regions]\label{cons.thm}
For every $x>0$ there exists $R^*(x)\in (0,\infty)$ such that $\mathcal C = \{(x,R): R> R^*(x)\}$ and $\mathcal E = \{(x,R): R\leq R^*(x)\}$. The function $R^*(x)$ is decreasing, continuously differentiable, and satisfies
$ R^*(0) = R_0$
where $R_0$ is the solution of 
$$
\alpha\left(1+\frac{a}{R_0}\right)^{\alpha-1}  +  (1-\alpha)\left(1+\frac{a}{R_0}\right)^\alpha-(1-\alpha)\left(\frac{a}{ R_0}\right)^{\alpha}\varepsilon  = 1.
$$
\end{theorem}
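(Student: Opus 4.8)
The plan is to fix $x>0$ and analyze the scalar function $G(R):=MV(x,R)-V(x,R)$, which is $C^1$ in $R$ by Theorem \ref{HJB.thm} and satisfies $G\le 0$, with $\mathcal E\cap(\{x\}\times\mathbb R_+)=\{R:G(R)=0\}$ and $\mathcal C$ its complement. Set $R^*(x):=\sup\{R\ge 0:G(R)=0\}$ and first pin down the two ends. For finiteness $R^*(x)<\infty$, since $V\ge\mathcal U$ it suffices to show $MV(x,R)<\mathcal U(R)$ for large $R$; bounding $V(x-s,R+a)\le\mathbb E[\mathcal U(R+a+aN_{x-s})]$ via Proposition \ref{apriori.prop} and using concavity of $\mathcal U$ gives $MV(x,R)-\mathcal U(R)\le a(1+\lambda x)\mathcal U'(R)-\frac{k}{\lambda}(1-e^{-\lambda x})$, which is negative once $R$ is large because $\mathcal U'(R)\to 0$. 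For positivity $R^*(x)>0$, note that the Inada property $\mathcal U'(0^+)=\infty$ and the term $\mathcal U(R)e^{-\lambda x}$ in (\ref{exp.op}) force $MV_R(x,0^+)=+\infty$; if instead $R^*(x)=0$, then $(x,R)\in\mathcal C$ for small $R$, where the consumption equation $u^*(\partial_R V)=rV$ forces $V(x,\cdot)=\mathcal U(\cdot+B(x))$ for a shift $B(x)$, and continuity together with $V(x,0)=MV(x,0)$ gives $B(x)=\mathcal U^{-1}(MV(x,0))>0$, hence $V_R(x,0^+)<\infty$ and $G_R(x,0^+)=+\infty$, contradicting $G\le 0$ with $G(x,0)=0$.

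The crux, and the step I expect to be the main obstacle, is the single-threshold structure $\mathcal E\cap(\{x\}\times\mathbb R_+)=[0,R^*(x)]$, equivalently that $\mathcal C$ is upward closed in $R$. This does not follow from sign tests: every interior zero of $G$ is a maximum (so $G_R=0$ there), and although $V$ and $MV$ are both concave in $R$, their difference need not be, so a spurious \emph{consumption bubble} below the top component must be excluded by hand. One approach is to use that on every component of $\mathcal C$ the consumption ODE forces $V=\mathcal U(\cdot+B)$, and to show, via monotonicity of the shadow price $\partial_R V$ (concavity) and the explicit form of $MV_R$ from (\ref{exp.op}), that the value-matching and smooth-pasting conditions at the endpoints of a lower component are incompatible with the unbounded top component. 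The route I would actually prefer is constructive: build a candidate $\widetilde V$ from a single free boundary $R^*(x)$ by solving the coupled value-matching/smooth-pasting system for $B(x)$, verify that $\widetilde V$ obeys the HJB system of Theorem \ref{HJB.thm} and the required growth bound, and invoke the uniqueness (``conversely'') part of that theorem to conclude $\widetilde V=V$, so that the single threshold is built in. The genuine difficulty in either route is the same: establishing existence and global admissibility of the single-boundary solution.

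Monotonicity of $R^*$ in $x$ I would obtain from a differential inequality. Differentiating (\ref{exp.op}) gives the identity $MV_x=\lambda(V(x,R+a)-MV)-k$, while in $\mathcal C$ the exploration part of the classical HJB (\ref{hjb.classic}), established in Lemma \ref{prop.lm}, reads $V_x\ge\lambda(V(x,R+a)-V)-k$. Writing $F:=MV-V$, these combine to $F_x\le-\lambda F$, i.e. $\frac{d}{dx}(e^{\lambda x}F)\le 0$; since $F<0$ throughout $\mathcal C$, the quantity $e^{\lambda x}F$ stays negative, so $\mathcal C$ is upward closed in $x$ and $R^*$ is non-increasing. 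Continuous differentiability (and strictness) of $R^*$ then follows from the implicit function theorem applied to the pair of relations $\mathcal U(R^*+B)=MV(x,R^*)$ and $\mathcal U'(R^*+B)=MV_R(x,R^*)$, whose Jacobian in $(R^*,B)$ is nondegenerate because $u''<0$.

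Finally, $R^*(0)=\lim_{x\to 0^+}R^*(x)$ comes from a first-order expansion as $x\downarrow 0$. Using $V(0,\cdot)=\mathcal U$, so that $B(0)=0$, write $B(x)=\beta x+o(x)$ and expand (\ref{exp.op}) to get $MV(x,R)=\mathcal U(R)+x[\lambda(\mathcal U(R+a)-\mathcal U(R))-k]+o(x)$, with the analogous expansion of $MV_R$. Imposing value matching $V=MV$ and smooth pasting $\partial_R V=MV_R$ at $R=R^*(x)=R_0+o(1)$ and collecting the $O(x)$ terms yields $\mathcal U'(R_0)\beta=\lambda(\mathcal U(R_0+a)-\mathcal U(R_0))-k$ together with $\mathcal U''(R_0)\beta=\lambda(\mathcal U'(R_0+a)-\mathcal U'(R_0))$. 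Eliminating $\beta$ and substituting the explicit form (\ref{uhotel}), the identity $k=\varepsilon\lambda\mathcal U(a)$ from (\ref{epsdef}), and $\rho:=1+a/R_0$ reduces the two conditions to the single transcendental equation $\alpha\rho^{\alpha-1}+(1-\alpha)\rho^\alpha-(1-\alpha)(a/R_0)^\alpha\varepsilon=1$, which is the stated characterization of $R_0$.
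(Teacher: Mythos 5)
Your outline is organized around the right objects, and several pieces are sound: the upper bound on the frontier matches Lemma \ref{uppercons.prop}; your small-$x$ expansion with value matching and smooth pasting at $R^*(x)=R_0+o(1)$ reproduces exactly the relation the paper derives in Step 2 of its proof (eliminating $\beta$ gives $\frac{1-\alpha}{\alpha}\frac{\lambda(\mathcal U(R_0+a)-\mathcal U(R_0))-k}{\mathcal U(R_0)}+\lambda\frac{\mathcal U'(R_0+a)-\mathcal U'(R_0)}{\mathcal U'(R_0)}=0$, which is the stated transcendental equation); and your monotonicity argument via $F:=MV-V$, $F_x\leq-\lambda F$, hence $\frac{d}{dx}\bigl(e^{\lambda x}F\bigr)\leq 0$, is a genuinely different and cleaner route than the paper's, which instead computes $\partial g/\partial x$ at the frontier, reduces it to the sign of an explicit scalar function $\phi$, and closes with a geometric argument involving two auxiliary curves $R^\dag$ and $R^\circ$. (Your route is legitimate because it only invokes the inequality half of (\ref{hjb.classic}), established in Lemma \ref{prop.lm} before Theorem \ref{cons.thm} is proved, though note it yields non-increasing rather than strictly decreasing, and it presupposes the smoothness results which themselves rest on the single-threshold structure.)

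The genuine gap is precisely the step you flag as ``the crux'' and then leave open: excluding a consumption bubble, i.e.\ proving that $\{R: V(x,R)>MV(x,R)\}$ has no bounded connected component. You sketch two possible routes and concede that ``the genuine difficulty in either route is the same,'' so nothing in the proposal actually establishes that $\mathcal E\cap(\{x\}\times\mathbb R_+)$ is an interval $[0,R^*(x)]$ — and without this the set $\sup\{R:G(R)=0\}$ does not characterize the two regions as claimed. The paper closes this in Proposition \ref{boundary.prop} with an argument you do not anticipate: on a putative bounded component $(a,b)$ the function $f=V-MV$ is smooth (by Lemma \ref{cinf}) and has an interior maximum $\bar R$ with $f'(\bar R)=0$, $f''(\bar R)\leq 0$; the stopped-exploration representation $MV^{(n)}_R(x,R)=\mathbb E[V^{(n)}_R(x-X_{\tau_C},\widehat R(\tau_C))]$ together with the consumption-region identity $V_R=u_1(rV)$ converts these two conditions into $\mathbb E[Z]=V(x,\bar R)^{1-\frac1\alpha}$ and $\mathbb E\bigl[Z^{\frac{2-\alpha}{1-\alpha}}\bigr]\leq\mathbb E[Z]^{\frac{2-\alpha}{1-\alpha}}$ for $Z=V(x-X_{\tau_C},\widehat R(\tau_C))^{1-\frac1\alpha}$, and Jensen's inequality then forces $Z$ to be deterministic, which is impossible for $x>0$. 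That probabilistic convexity argument is the missing idea; neither of your two proposed routes (endpoint incompatibility, or construct-and-verify via the converse part of Theorem \ref{HJB.thm}) is carried far enough to substitute for it. Your positivity argument ($MV_R(x,0^+)=+\infty$ versus a finite $V_R(x,0^+)$ on a hypothetical consumption region touching $R=0$) is a plausible and simpler alternative to the paper's Step 1, but it too only rules out $R^*(x)=0$ for the top component and still presupposes the interval structure.
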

The above theorem establishes the existence of a \emph{critical reserve frontier}: a smooth function $R^*(x)$ such that it is optimal to explore when and only when the reserves fall below $R^*(x)$. The critical reserve frontier is decreasing in $x$ (equivalently, increasing in the amount of explored area): with increasing scarcity it becomes beneficial to explore early, as this allows the agent to optimize the consumption of the remaining reserves.

The understanding of the shape of consumption and exploration regions enables us to give a precise characterization of the optimal {bang-bang} strategy in the next theorem. 
\begin{theorem}[Characterization of the optimal bang-bang strategy]\label{strat.thm}
  Starting with initial data $(x,R)$, the optimal bang-bang consumption-exploration strategy is defined as follows. Before the first exploration date,
   \begin{eqnarray}
c^0(t) &= & c_0 e^{-\frac{r}{1-\alpha}t},\quad c_0 = \left(\frac{\partial V}{\partial R}(x, R)\right)^{\frac{1}{\alpha-1}},\label{strat.c0}\\
    \theta_0 &= & \inf\{t\geq 0: R^0(t)\leq R^*(x)\}, \label{strat.theta0} 
\\
    R^0(t)& =& R - \frac{1-\alpha}{r} c_0 \left\{1-e^{-\frac{r}{1-\alpha}t}\right\}. \label{strat.R0}
   \end{eqnarray}
After the first exploration date,  for $n\geq 1$, 
  \begin{eqnarray}
c^n(t) &= & c_n e^{-\frac{r}{1-\alpha}(t-\theta_{n-1})},\quad c_n = \left(\frac{\partial V}{\partial R}(x-X_n, R_{\theta_{n-1}})\right)^{\frac{1}{\alpha-1}},\label{strat.c}\\
    \theta_n &= & \inf\{t\geq \theta_{n-1}: R^n(t)\leq R^*(x-X_n)\}, \label{strat.theta} 
\\
    R^n(t)& =& R_{\theta_{n-1}} - \frac{1-\alpha}{r} c_n \left\{1-e^{-\frac{r}{1-\alpha}(t-\theta_{n-1})}\right\}, \label{strat.R}\\
    R_{\theta_{n-1}} &=& R^{n-1}(\theta_{n-1})+ a\mathbf 1_{\xi_{n}\leq x}. \label{strat.Rn}          
  \end{eqnarray}
\end{theorem}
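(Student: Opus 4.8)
The plan is to argue by verification, taking the value function $V$, its regularity, and the frontier $R^*$ as given from Theorems~\ref{HJB.thm} and~\ref{cons.thm}. Because $V(x,R)$ is by definition (\ref{vf}) the supremum of the payoff over $\mathcal A(x,R)$, it suffices to establish two facts about the explicit strategy $(\theta,c)$ of the statement: that it belongs to $\mathcal A(x,R)$, and that its payoff equals $V(x,R)$; optimality within the bang-bang class then follows at once. Admissibility I would check directly: each $c^n$ is strictly positive, and since $R^*$ and $\partial_R V$ are deterministic functions of the current state, the hitting times $\theta_n$ and rates $c^n$ depend only on the explored area $X_n$ and the post-jump reserve $R_{\theta_{n-1}}$, so condition~(i) of Definition~\ref{defbb} holds with respect to $\mathcal F_{X_n}$; the budget constraint~(ii) holds because consumption is stopped as soon as $R^n$ reaches the strictly positive level $R^*(x-X_n)$ in each exploratory round, while in the terminal round (area exhausted) the Hotelling profile drives $R$ monotonically down to $0$ without going negative.

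The core of the proof is to show that, along the state process $(x_t,R_t)$ generated by this strategy, the process
\[
Y_t = e^{-rt} V(x_t,R_t) + \int_0^t e^{-rs} u(c_s)\,ds - k \sum_{\theta_n \leq t} e^{-r\theta_n}\big(X_{n+1}-X_n\big)
\]
is a martingale started at $Y_0 = V(x,R)$. Between consecutive exploration dates $x_t$ is frozen and $\dot R_t = -c_t$, so the drift of $Y_t$ is $e^{-rt}\{u(c_t) - c_t\,\partial_R V - rV\}$. As the candidate rate $c_t = (\partial_R V)^{1/(\alpha-1)}$ is precisely the maximiser of $c\mapsto u(c)-c\,\partial_R V$, this bracket equals $u^*(\partial_R V) - rV$, which vanishes because the whole consumption phase lies in $\mathcal C$, where the HJB equation of Theorem~\ref{HJB.thm} reduces to $u^*(\partial_R V)=rV$. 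At an exploration date $\theta_n$ the state sits, by (\ref{strat.theta}), on the frontier $R=R^*(x_t)$, i.e.\ in $\mathcal E$ where $V=MV$; conditioning on $\mathcal F_{X_n}$ and using the memorylessness of the Poisson process, the expected jump of $Y$ across the zero-time episode equals $e^{-r\theta_n}\{MV-V\}$ on the frontier, which is $0$ by the very definition (\ref{exp.op}) of $M$ as the expected post-episode value net of exploration cost. Thus $Y$ has zero drift and zero expected jumps.

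For the drift computation to be consistent I must also confirm that formulas (\ref{strat.c0})--(\ref{strat.R}) really trace the Hotelling-optimal profile, i.e.\ that the shadow price $p_t:=\partial_R V(x_t,R_t)$ grows at rate $r$ inside each consumption phase. Differentiating $rV=u^*(\partial_R V)$ in $R$ and using $(u^*)'(p)=-(u')^{-1}(p)=-p^{1/(\alpha-1)}$ gives $r\,\partial_R V = -(\partial_R V)^{1/(\alpha-1)}\,\partial_{RR}V$ (here I use that inside $\mathcal C$ the smooth relation $\partial_R V=(u^*)^{-1}(rV)$ bootstraps $V(x,\cdot)$ to be $C^2$ in $R$); along the path $\dot p_t = \partial_{RR}V\cdot\dot R_t = -c_t\,\partial_{RR}V = r p_t$, whence $p_t=p_0e^{rt}$, $c_t=c_0e^{-rt/(1-\alpha)}$, and integrating $\dot R=-c$ reproduces exactly (\ref{strat.R0}) and (\ref{strat.R}).

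Finally, to upgrade the local martingale to a true one and pass to the limit, I would invoke the a priori bound $V(x,R)\le C(1+R^\alpha)$ of Proposition~\ref{apriori.prop} together with $R_t\le R+aN_x$ and the telescoping bound $k\sum(X_{n+1}-X_n)\le kx$ on cumulative exploration cost; these give uniform integrability and the transversality $\mathbb E[e^{-rt}V(x_t,R_t)]\to0$. Letting $t\to\infty$ in $V(x,R)=\mathbb E[Y_t]$ then identifies $V(x,R)$ with the candidate's payoff, which, being the supremum over $\mathcal A(x,R)$, proves that the strategy is optimal. I expect the main obstacle to be the rigorous handling of the instantaneous exploration episodes: each episode occupies zero calendar time yet reveals new information, so the martingale bookkeeping must be performed with respect to the explored-area filtration $(\mathcal F_{X_n})$ rather than a calendar-time filtration, and the optional-sampling and telescoping steps must be controlled over the random---almost surely finite but unbounded---number $N_x+1$ of episodes. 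Justifying the interchange of limit and expectation and the exact cancellation $V=MV$ at the jumps inside this episode-indexed framework is the delicate part.
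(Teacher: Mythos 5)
Your proposal is correct and follows essentially the same route as the paper: the paper proves Theorems \ref{HJB.thm} and \ref{strat.thm} together by a verification argument in which the discounted-value decomposition yields an inequality for arbitrary bang-bang strategies that collapses to equality for the candidate strategy, precisely because $u^*(\partial_R V)=rV$ with $c=(\partial_R V)^{1/(\alpha-1)}$ kills the drift in the consumption region and $V=MV$ at the frontier kills the expected jumps, followed by the same transversality argument from the bound $V\le C(1+R^\alpha)$. Your martingale formulation of $Y_t$ and your identification of the episode-indexed filtration as the delicate point match the paper's treatment (conditioning on $\mathcal F_{X_n}$ and summing over $\mathbf 1_{\theta_n\le T}$), so no substantive difference remains.
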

The agent consumes at an exponentially decaying rate until the reserve level drops below the critical reserve frontier. At this time, exploration starts and continues until either the reserve level is brought above the critical reserve frontier through one or several successful exploratory events, or the entire remaining area is explored. After the exploratory episode, the consumption resumes.

Up to now, we considered \emph{bang-bang} strategies, where the agent consumes until hitting the exploration frontier and then explores a finite area in zero time, until either a new deposit is found or the whole interval is explored. We finally prove that these strategies are optimal within a much larger class of strategies allowing for simultaneous exploration and consumption, defined below. 

\begin{definition}\label{defadm2}
We say that a couple of stochastic processes $(c,X)$, where $c_t$ stands for the consumption rate at time $t$, and $X_t$ stands for the area explored until time $t$, is an admissible  consumption-exploration strategy for initial data  $(x,R)$ if
\begin{itemize}
    \item The process $c$ takes values in $\mathbb R_+$, and the process $X$ is increasing and satisfies $X_0 = 0$ and $X_t \leq x$ for all $t\geq 0$. 
    \item The processes $X^{-1}$ and $u\mapsto \int_0^{X^{-1}(u)}c_s ds $ are $(\mathcal F_u)$-adapted, where $X^{-1}$ denotes the right-continuous inverse mapping of $t\mapsto X_t$, defined by $X^{-1}(u) = \inf\{t: X(t)>u\}$.
    \item The budget constraint 
    $$
    R - \int_0^t c_s ds + a N_{X_t} \geq 0
    $$
    is satisfied for all $t\geq 0$. 
\end{itemize}
\end{definition}

An admissible bang-bang strategy of Definition \ref{defbb} is characterized by the consumption process given in Equation (\ref{state.cons}) and explored area process
\begin{equation}
X_t = \sum_{n=0}^{N_x-1} X_{n+1} \mathbf 1_{[\theta_n,\theta_{n+1})}(t) + x \mathbf 1_{t\geq \theta_{N_x}}.\label{area.eq}
\end{equation}
It is easy to see that this strategy is also admissible according to the Definition \ref{defadm2}. 
The following theorem shows that no admissible strategy may procure a value larger than the value function defined with bang-bang strategies. The bang-bang strategy of Theorem \ref{strat.thm} is therefore optimal among all admissible strategies. 
\begin{theorem}\label{optimality}
Let $(c,X)$ be an admissible consumption-exploration strategy for initial data $(x,R)$. Then, 
$$
 \mathbb E\left[\int_0^\infty e^{-rt} u(c_t) dt - k\int_0^\infty e^{-rt} dX_t\right] \leq V(x,R). 
$$

\end{theorem}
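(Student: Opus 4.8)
The plan is to prove Theorem \ref{optimality} by a verification (super-solution) argument based on the ``classical'' form of the HJB equation (\ref{hjb.classic}), rather than on the impulse form $\max\{u^*(\partial_R V) - rV, MV - V\} = 0$. I would use (\ref{hjb.classic}) precisely because it exhibits the three infinitesimal generator terms --- the consumption term $u^*(\partial_R V) - rV$, the continuous exploration drift $-\partial_x V$, and the discovery term $\lambda(V(x, R+a) - V(x,R))$ --- that match the dynamics of a general admissible strategy of Definition \ref{defadm2}; by Theorem \ref{HJB.thm} and the Remark following it, $V$ is $C^1$ in both variables and (\ref{hjb.classic}) holds. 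Writing $R_t = R - \int_0^t c_s\, ds + aN_{X_t}$ for the reserve process driven by $(c,X)$, I would introduce the candidate
\[
Y_t = e^{-rt} V(x - X_t, R_t) + \int_0^t e^{-rs} u(c_s)\, ds - k\int_0^t e^{-rs}\, dX_s,
\]
and aim to show that $Y$ is a (local) supermartingale with $Y_0 = V(x,R)$.

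Since all driving processes are of finite variation (no diffusion), the $C^1$ regularity of $V$ suffices for a pathwise change-of-variables formula. I would split $X = X^c + \sum \Delta X$ into its continuous and jump parts. On the absolutely continuous part, collecting the discounting, the consumption drift $-\partial_R V\, c_t$, the continuous exploration drift $-\partial_x V\, \dot X_t^c$, the running utility $u(c_t)$ and the running cost $-k\dot X_t^c$, and compensating the discoveries occurring during continuous exploration (which contribute a drift $\lambda(V(x - X_t, R_t + a) - V(x - X_t, R_t))\dot X_t^c$), the drift of $Y$ becomes
\[
e^{-rt}\bigl[u(c_t) - c_t\,\partial_R V - rV\bigr] + e^{-rt}\bigl[\lambda(V(\cdot, R_t + a) - V(\cdot, R_t)) - \partial_x V - k\bigr]\dot X_t^c,
\]
up to a local martingale. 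The first bracket is $\le 0$ because $u(c_t) - c_t\,\partial_R V \le u^*(\partial_R V) \le rV$ by definition of the convex conjugate and the first HJB inequality; the second bracket is $\le 0$ by the second inequality in (\ref{hjb.classic}). The impulse part of $X$ I would handle separately: for an exploration of a fixed area $\Delta$ starting from area coordinate $y$ and reserves $\rho$, setting $g(s) = \mathbb E[V(y - s, \rho + aN'_s)]$ with $N'_s$ Poisson of mean $\lambda s$, differentiation together with the same second HJB inequality gives $g'(s) \le k$, hence $\mathbb E[V(y - \Delta, \rho + aN'_\Delta)] - V(y,\rho) \le k\Delta$; this shows that the expected jump of $e^{-rt}V$ across each impulse, net of the cost $k e^{-rt}\Delta X$, is non-positive. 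Together these establish that $Y$ is a local supermartingale.

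The main obstacle is the rigorous treatment of the \emph{strategy-dependent information structure}, exactly the feature highlighted in the introduction. Here the discoveries are driven by the Poisson process $N$ in its natural area-filtration $(\mathcal F_u)$, while the controlled processes evolve in calendar time, the two being linked through the time change $X^{-1}$. To make the change-of-variables and the compensation above legitimate, I would work with the time-changed filtration $\mathcal G_t = \mathcal F_{X_t}$ and verify, using the measurability requirements built into Definition \ref{defadm2} (adaptedness of $X^{-1}$ and of $u\mapsto \int_0^{X^{-1}(u)} c_s\, ds$), that $N_{X_t} - \lambda X_t$ is a $\mathcal G$-local martingale, so that $\lambda\, dX_t$ is indeed the compensator of the discovery process; threading these measurability conditions through both the continuous and the impulse parts is the delicate step. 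Finally, I would conclude by localization: taking a reducing sequence $T_m \uparrow \infty$, the supermartingale property gives $\mathbb E[Y_{t\wedge T_m}] \le V(x,R)$; since $V \ge 0$ (by $\mathcal U \le V$ from Proposition \ref{apriori.prop} and $\mathcal U \ge 0$) the term $e^{-r(t\wedge T_m)}V \ge 0$ may be dropped, and since $u \ge 0$ and $X$ is increasing and bounded by $x$, monotone and dominated convergence let me send $m\to\infty$ and then $t\to\infty$ to obtain $\mathbb E[\int_0^\infty e^{-rs}u(c_s)\,ds - k\int_0^\infty e^{-rs}\,dX_s] \le V(x,R)$, as claimed.
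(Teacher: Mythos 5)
Your proposal is correct and follows essentially the same route as the paper: a verification (supermartingale) argument that applies the two inequalities of the classical HJB equation (\ref{hjb.classic}) — made available by the $C^1$ smoothness of $V$ — to the continuous and jump parts of $X$ separately, and kills the discovery term via the compensated Poisson process under the area/time change $X^{-1}$. The only cosmetic difference is that the paper merges the continuous and impulse contributions into a single area-parameterized integral $\int_0^{X_t}(\cdot)\,(dN_u-\lambda\,du)$ before taking expectations, whereas you estimate the expected change across each impulse separately; the underlying inequalities and the localization/limit step are identical.
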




\section{Price trajectories and Hotelling in expectation}
\label{price.sec}
{Following the Hotelling framework, the \emph{shadow price} of reserves as function of unexplored area $x$ and reserve level $R$ in our model may  be defined as follows:
$$
p(x,R) = \frac{\partial  V}{\partial
    R}(x,R).
  $$
From the concavity of the value function, it follows that the price is decreasing as function of the reserve level $R$. On the other hand, in the exploration region, the price is increasing as function of unexplored area $x$. 
In this region, $V(x,R) = MV(x,R)$ and from the proof Theorem \ref{HJB.thm}, first part, it follows that
$$    p(x,R) = \int_0^{x} p(s,R+a) \lambda e^{-\lambda (x-s)}ds + \mathcal U'(R) e^{-\lambda x}.
$$
Differentiating both sides, we obtain the following representation for the derivative of the price function in the exploration region: 
$$\frac{\partial}{\partial x} p(x,R) = \lambda (p(x,R+a) - p(x,R))\geq 0,
$$
which means that  unsuccessful exploration (with constant $R$) leads to increased scarcity and pushes the price of reserves upwards. 
}    

{Given the optimal consumption-exploration strategy $(c,\theta)$ as described in Theorem \ref{strat.thm} the associated explored area process $(X_t)$ (defined by Equation (\ref{area.eq})) and reserve process $(R_t)$, we may define the \emph{price process} $p_t = p(x-X_t,R_t)$.}
  During a consumption period, the price process satisfies
  $$
  p_t = p_{\theta_{n_t}}e^{r(t-\theta_{n_t})},
  $$
  where $n_t = \max\{n:\theta_n\leq t\}$. In other words, the price grows according to the Hotelling rule until the next exploratory episode, when it can jump either upwards or downwards. {During the exploratory episode, there is an initial price increase (because $p(x,R)$ is decreasing in $x$ in the exploration region). If a new deposit is found, this initial increase is followed by a downward jump, since the exploration frontier $R^*(x)$ is decreasing in $x$, and the price $p(x,R)$ is decreasing in $R$.  When no deposit is found, there is no downward jump, which means that upon resource exhaustion the price always jumps upwards.}  Since $\frac{\partial V}{\partial R}$ is decreasing in $R$, one can argue that the consumption strategy consists in consuming until the price $p_t$ hits the critical price given by $\frac{\partial V}{\partial R}(x_t,R^*(x_t))$. 
  

Since the price jumps upwards or downwards at any exploratory episode, the Hotelling rule cannot hold for a specific price trajectory. However, the following result shows that in expectation, the Hotelling rule still holds. This shows that on average, downward jumps of the price process, which often occur in the beginning of exploration are compensated by the upward jumps, which usually occur at or near the end. 
\begin{proposition}
The discounted price process $p_t e^{-rt}$ is a martingale and in particular for all $t\geq 0$, we have $\mathbb E[p_t] = p_0e^{rt}$. 
\end{proposition}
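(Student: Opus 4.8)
The plan is to exploit that the discounted price $p_te^{-rt}$ changes only at exploratory episodes and that, across each such instantaneous episode, its conditional expectation is preserved; the engine is the fixed-point identity for the price in the exploration region. Differentiating $V=MV$ in $R$ exactly as in the proof of Theorem~\ref{HJB.thm}, and using $p=\partial V/\partial R$ together with $\mathcal U'(R)=p(0,R)$, gives for every $(x,R)\in\mathcal E$ the relation
\[
p(x,R)=\int_0^x p(x-s,R+a)\,\lambda e^{-\lambda s}\,ds+\mathcal U'(R)\,e^{-\lambda x}.
\]
Note that the exploration cost term in $MV$ is independent of $R$ and drops out upon differentiation, which is why $k$ does not appear. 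Because $V=MV$ throughout $\mathcal E$, this identity is available not only on the frontier $R=R^*(x)$ but also at any post-discovery state that still lies weakly below the frontier.

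First I would give this identity its probabilistic reading. Suppose an exploratory episode begins from a state $(y,\rho)$ with $\rho\le R^*(y)$, so $(y,\rho)\in\mathcal E$. The episode takes zero time and, by the memorylessness of $N$, the additional area $S$ explored before the next discovery is $\mathrm{Exp}(\lambda)$ and independent of the pre-episode information $\mathcal F_{X_n}$. The price immediately after the episode equals $p(y-S,\rho+a)$ if $S<y$ (a deposit is found) and $\mathcal U'(\rho)=p(0,\rho)$ if $S\ge y$ (the area is exhausted). The displayed identity states precisely that the conditional expectation of this post-episode price equals $p(y,\rho)$, the price just before the episode. Since the episode is instantaneous the common discount factor $e^{-r\theta_n}$ cancels, so $p_te^{-rt}$ is preserved in conditional expectation across the episode.

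Next I would assemble the global statement through the discrete skeleton indexed by episode number. Between consecutive episodes the agent only consumes and, by the relation $p_t=p_{\theta_{n_t}}e^{r(t-\theta_{n_t})}$, the discounted price $\widetilde p_t:=p_te^{-rt}$ is constant; hence it can change only at the episode times $\theta_n$, several of which may coincide in physical time when a discovery leaves reserves below the frontier. Writing $M_n$ for the discounted price at the completion of episode $n$ (with the convention $M_{-1}=p_0$ before the first episode), the two observations above yield $\mathbb E[M_n\mid\mathcal F_{X_n}]=M_{n-1}$, so $(M_n)$ is a martingale for the episode filtration $(\mathcal F_{X_{n+1}})$; all episode-start states lie in $\mathcal E$, so the identity is always applicable. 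Since $\widetilde p_t=M_{\nu_t}$ with $\nu_t=\max\{n:\theta_n\le t\}$ a stopping time of this filtration, and since $\mathcal F_{X_t}=\mathcal F_{X_{\nu_t+1}}$ because the explored area is constant between episodes, optional sampling gives $\mathbb E[\widetilde p_t\mid\mathcal F_{X_s}]=\widetilde p_s$ for $s\le t$. Taking $s=0$ yields $\mathbb E[p_te^{-rt}]=p_0$, that is $\mathbb E[p_t]=p_0e^{rt}$.

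The main obstacle I anticipate is not the algebraic identity, which is immediate once $V=MV$ is differentiated, but the rigorous handling of the strategy-dependent, piecewise-constant filtration together with the zero-time and possibly simultaneous episodes: one must check that $\nu_t$ is genuinely a stopping time of $(\mathcal F_{X_{n+1}})$, that episode-indexed conditioning is compatible with conditioning in physical time $t$, and that optional sampling is legitimate even though the total number of episodes $N_x+1$, while almost surely finite, is unbounded. For the last point I would show that $(M_n)$ is in fact a bounded martingale: the initial price $p_0=p(x,R)$ is a finite constant, and after the first episode every episode-start reserve level is bounded below by $\min\{R^*(x),a\}>0$ (a fresh episode starts on the frontier, which exceeds $R^*(x)$ on $[0,x]$, while an immediate re-exploration starts at a post-discovery level at least $a$). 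All such states therefore lie in a compact subset of the domain on which the continuous price function $p$ is bounded; boundedness makes the passage from the discrete skeleton to the continuous-time martingale, and hence the conclusion $\mathbb E[p_t]=p_0e^{rt}$, fully rigorous.
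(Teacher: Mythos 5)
Your proposal is correct and follows essentially the same route as the paper: both arguments observe that the discounted price is constant between exploratory episodes (the Hotelling ODE in the consumption region), use smooth pasting so that the pre-episode price equals $\partial MV/\partial R$ at the frontier, and then invoke the identity $\frac{\partial MV}{\partial R}(x,R)=\int_0^{x}\lambda e^{-\lambda h}\frac{\partial V}{\partial R}(x-h,R+a)\,dh+e^{-\lambda x}\,\mathcal U'(R)$ to show that each episode preserves the conditional expectation, concluding by iterated expectations. Your additional care with the episode-indexed optional sampling and the boundedness of the discrete martingale is a welcome tightening of a step the paper treats briefly, but it is not a different method.
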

\begin{proof}
The price process is continuous within the consumption region and jumps immediately when $R_t$ hits the boundary of the exploration region. Inside the consumption region the value function is smooth and its second derivative satisfies 
$$
\frac{\partial^2 V}{\partial R^2}(x-X_t,R_t) =-r\frac{\partial V}{\partial R} (x-X_t,R_t)I\left(\frac{\partial V}{\partial R} (x-X_t,R_t)\right)^{-1} = -rp_t c_t^{-1}
$$
The price process can then be written as follows.
\begin{eqnarray*}
e^{-rT}p_T &= & p_0 - \int_0^T rp_t e^{rt} dt+ \int_0^T e^{-rt}\frac{\partial^2 V}{\partial R^2}(x-X_t,R_t)  dR_t \\ 
 & & \qquad \qquad +\sum_{i=0}^{\infty}\mathbf 1_{\theta_i \leq T}\, e^{-r\theta_i}\{\frac{\partial V}{\partial R} (x-X_{\theta_i},R_{\theta_{i}}) -\frac{\partial V}{\partial R} (x-X_{\theta_i-},R_{\theta_{i}-}) \}\\
& = & p_0  +\sum_{i=0}^{\infty}\mathbf 1_{\theta_i \leq T}\, e^{-r\theta_i}\{\frac{\partial V}{\partial R} (x-X_{\theta_i},R_{\theta_{i}}) -\frac{\partial MV}{\partial R} (x-X_{\theta_i-},R_{\theta_{i}-}) \},
\end{eqnarray*}
where we have used the smooth pasting principle in the last line. From the proof Theorem \ref{HJB.thm}, first part, it follows that
$$
\frac{\partial MV}{\partial R}(x,R) = \int_0^{x} \lambda e^{-\lambda h}\frac{\partial V(x-h,R+a)}{\partial R} dh + e^{-\lambda x} U'(R).
$$
Therefore,
\begin{eqnarray*}
  \mathbb E\left[\frac{\partial V}{\partial R} (x-X_{\theta_i},R_{\theta_{i}}) \Big|\mathcal F_{\xi_i}\right] &=&\mathbb E\left[\frac{\partial V}{\partial R} ((x-\xi_{i+1}) \vee 0,R_{\theta_{i}-}+a \mathbf 1_{x- \xi_{i+1}>0}) \Big|\mathcal F_{\xi_{i}}\right]  \\ &=&\frac{\partial MV}{\partial R} (x-X_{\theta_i-},R_{\theta_{i}-}), 
\end{eqnarray*}
since $X_{\theta_i} = \xi_{i+1}$. We conclude by the law of iterated expectations that $e^{rt}p_t$ is a martingale. 
\end{proof}


\section{Numerical simulations}\label{numerics.sec}
In this section we present several numerical illustrations of our analytical solution. We use two sets of parameters corresponding to low and high intensity of discoveries. In the first case, the discovered amount is $a = 2.5$, the discovery intensity is $\lambda = 2$ (meaning that there is, on average, only two mines in the whole interval), and the exploration cost $k = 5$. In the second case, the find size is $a=0.5$, the find intensity $\lambda=10$ and the exploration cost $k=1$. This choice highlights the crucial role of the form of uncertainty by picking parameters where the expected number of discoveries $a \times \lambda = 5$ and the cost per unit of discovery $\frac{k}{\lambda} = 0.1$ are the same in both cases.  The only difference is that the uncertainty (standard deviation) of discoveries $a \sqrt{\lambda} = \frac{5}{\sqrt{\lambda}}$ is decreasing in $\lambda$. The utility parameter is $\alpha=0.5$ for both examples and the interest rate is $r=0.02$. 

Figure \ref{frontiers.fig} shows the critical level that separates the exploration and consumption regions (left graph) and the price at the critical reserve level (right graph) for the two parameter sets. The green line corresponds to the low intensity of discoveries, and the red line to high intensity of discoveries. The left graph displays the critical reserve stock when exploration starts as well as the exploration region (shaded area below the critical reserve level). Optimal exploration follows a bang-bang solution: it is zero if reserves exceed the critical reserve level (consumption region), but start at infinite speed as soon as consumption decreases proven reserves to the critical level. Exploration stops either if new discoveries bring the known reserves again above the critical reserve level or if all unexplored area is exhausted. The right graph displays the price of the resource at the critical level. Recall that expected total discoveries are the same in both cases, yet the critical reserve stock is different. In case of lower uncertainty (red line), the critical level when exploration starts is lower when the explored area is small, as it is almost impossible that no discoveries are made and it hence is preferable to defer exploration cost. However, as the explored area becomes large, the critical level actually becomes larger relative to the case of higher uncertainty.


Lastly, Figure  \ref{price10.fig} plots the average evolution and the quantiles of various quantities computed over 1000 simulations (in the second case).  Black lines display the average resource price, explored area, reserve level and the consumption rate over time. Purple lines display the averages taken only over the trajectories which have not reached exhaustion at a given time. The right part of purple trajectories is noisy because most of simulations have reached exhaustion by this time. Shaded areas display the distribution of outcomes. The price graph in the top left also displays a path that rises at the rate of interest in red, which equals the average price path.  However, the price average taken only over the trajectories which have not reached exhaustion rises at a much slower rate. 



\section{Conclusion}\label{conclusion.sec}
New stochastic discoveries are an important challenge for optimal  exploration and consumption strategies  for non-renewable commodities when the quantity of reserves hidden in unexplored resources is unknown. We are the first to fully and rigorously solve the problem of a model with these realistic features. If exploration is instantaneous and costless, the solution is to explore the entire resource immediately, and then follow Hotelling' deterministic rule, given total proven reserves.    But if exploration is costly, optimal timing of exploratory episodes    balances the value of getting more precise  information on the total  amount of reserves to be consumed against the present value of  delaying exploration cost into the future. 

We find that the optimal policy follows a {
on-off strategy, commuting between two regimes}. Once the known reserve stock falls to a critical level that increases as the  remaining unexplored area declines, exploration starts at infinite speed until either the proven reserve stock again exceeds the critical level or the entire remaining unexplored area is exhausted. We show that the critical reserve level is decreasing in the unexplored area. If proven reserves are above the critical level through new discoveries that follow a Poisson process, exploration stops and price follows a classical price path that rises at the rate of interest.

The paper provides several new insights into tests of Hotelling's rule on  price histories of non-renewable commodities resources, many of which have not risen as predicted in the simple deterministic Hotelling model. We show that while the price path always rises at the rate of interest in expectation, the expected  price path conditional on not having run out of unexplored area rises  at less than the rate of interest. This is explains why forward-looking tests of the Hotelling rule generally do not reject it \cite{MillerUpton1985}, while backward looking tests do \cite{HalvorsenSmith1991}. Starting below a positive threshold level  of unexplored resources, all price paths in the exploratory regime rise at less than the rate of interest. 
Moreover, since the critical  reserve stock is increasing as the unexplored area decreases,  a higher reserves level does not necessarily indicate lower scarcity. 


\ifx\undefined\bysame
\newcommand{\bysame}{\leavevmode\hbox to\leftmargin{\hrulefill\,\,}}
\fi

\clearpage
\begin{figure}
\caption{Sample Exploration, Reserve, Consumption, and Price Paths.}  \label{samplePath.fig} 
\includegraphics[width=0.49\textwidth]{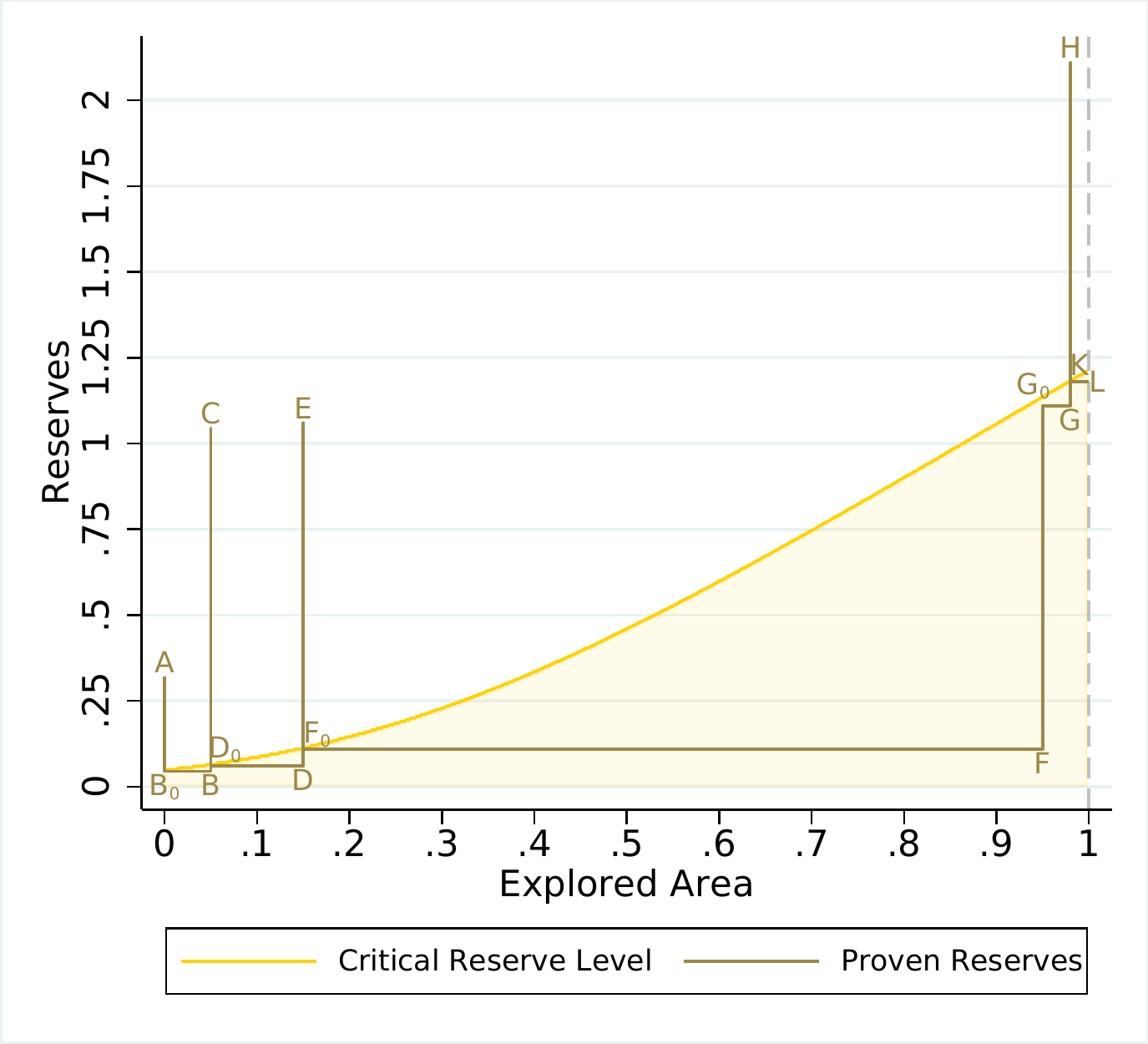}
\includegraphics[width=0.49\textwidth]{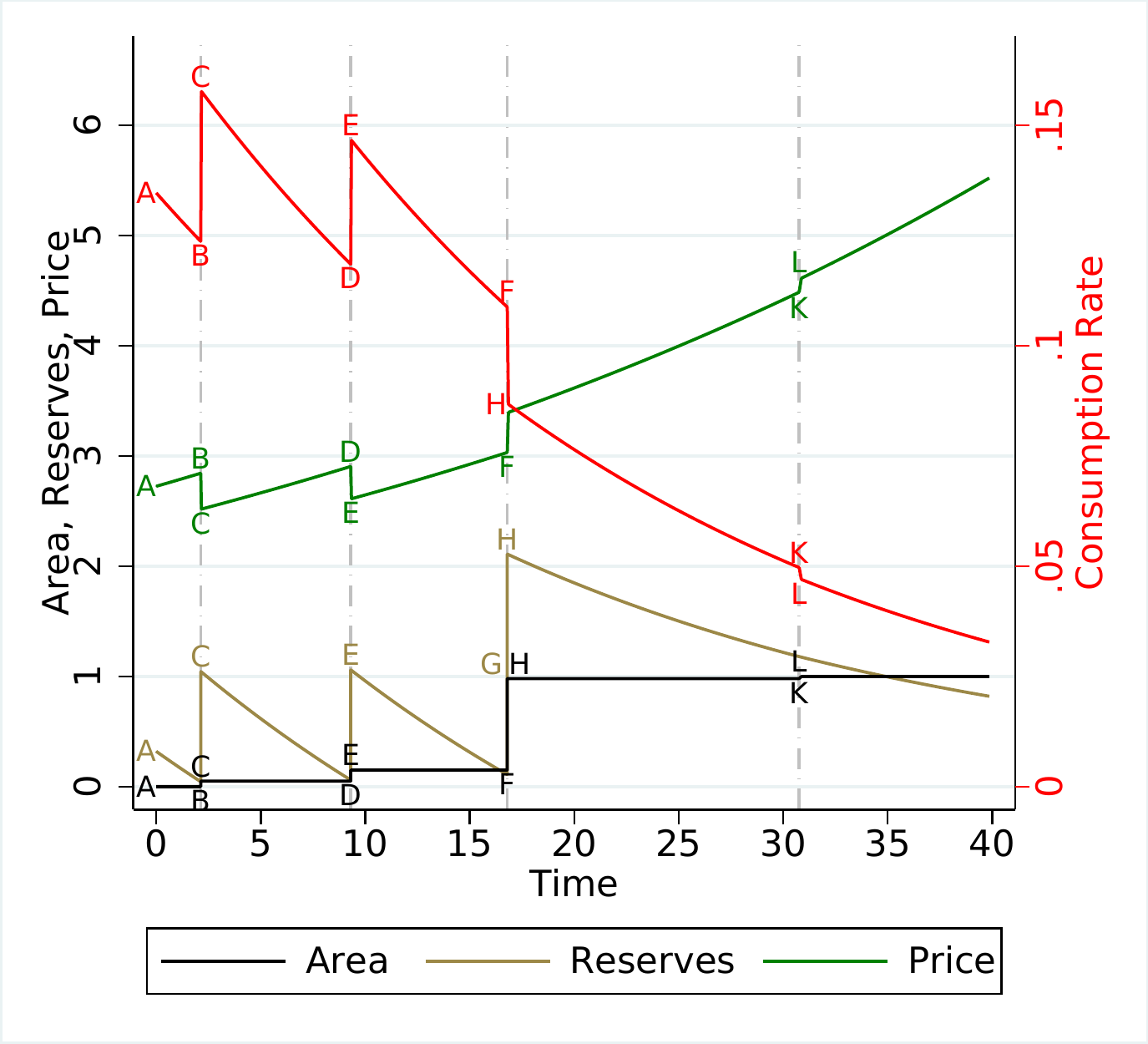}
\emph{Notes}: Left graph shows proven reserves $R$ against explored area as well as the critical reserve level when exploration starts in yellow. Right graph plots explored area, proven reserves $R$, price $P$ and consumption rate $c$ over time. Parameters are $\alpha= 0.5, r=0.02, a=1, \lambda=5, k=5$.
\end{figure}

\begin{figure} 
\caption{Critical Reserve Level and Price when Exploration Starts} \label{region1.fig}
\includegraphics[width=0.49\textwidth]{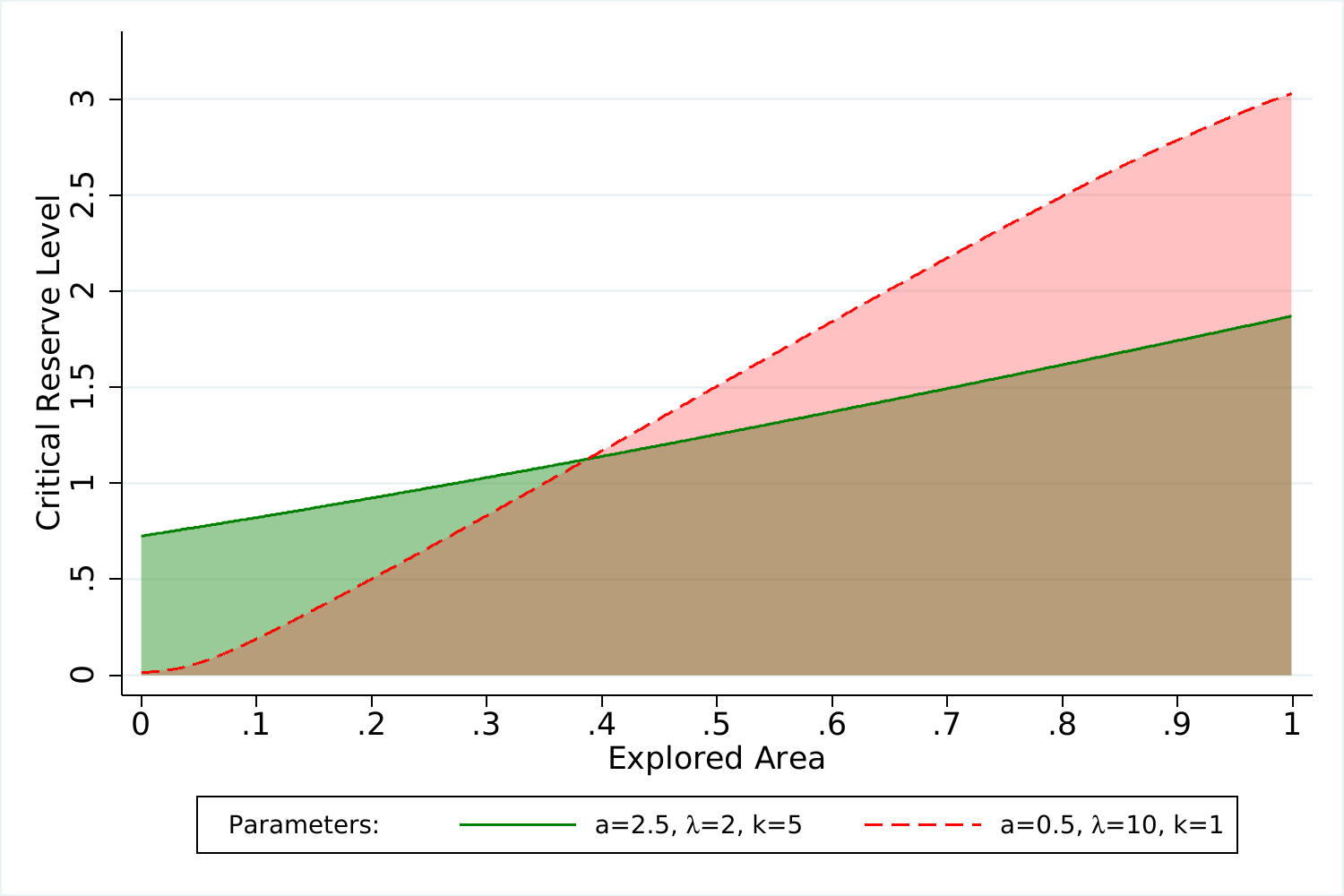}
\includegraphics[width=0.49\textwidth]{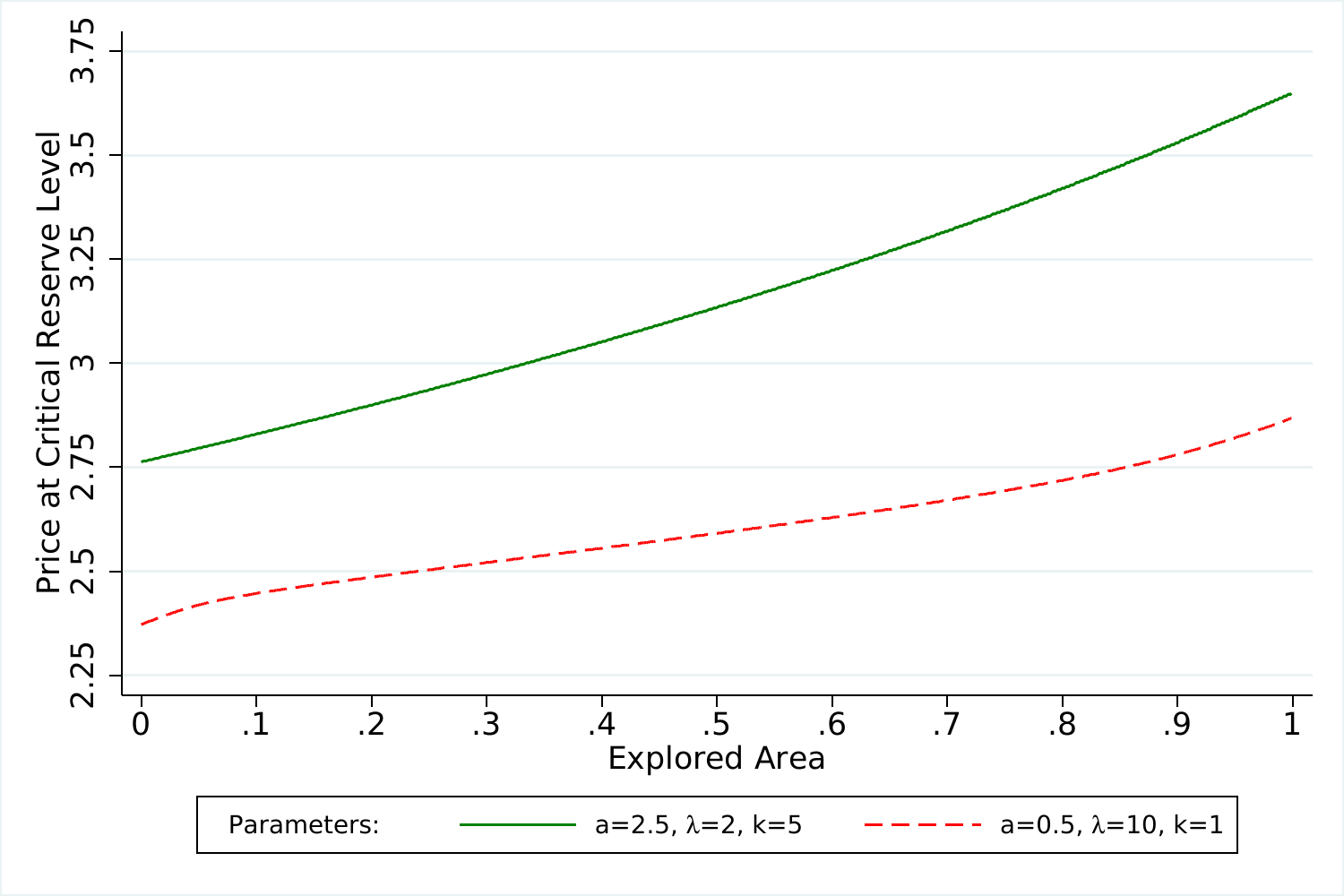}
\label{frontiers.fig}
\emph{Notes}: Left graph shows the critical reserve level when exploration starts under two different parameter sets. Exploration occurs at infinite speed (zero time) when the known reserves are below the critical reserve level, shown by the shaded area. The right graph shows the price at the critical reserve level. Note that the two sets of parameters have the same expected number of findings, but the variance is higher in case of the green line.
\end{figure}

\clearpage
\begin{figure}
\caption{Distribution of Variables - High Likelihood of Discoveries, Which are Small}  \label{price10.fig}
\includegraphics[width=\textwidth]{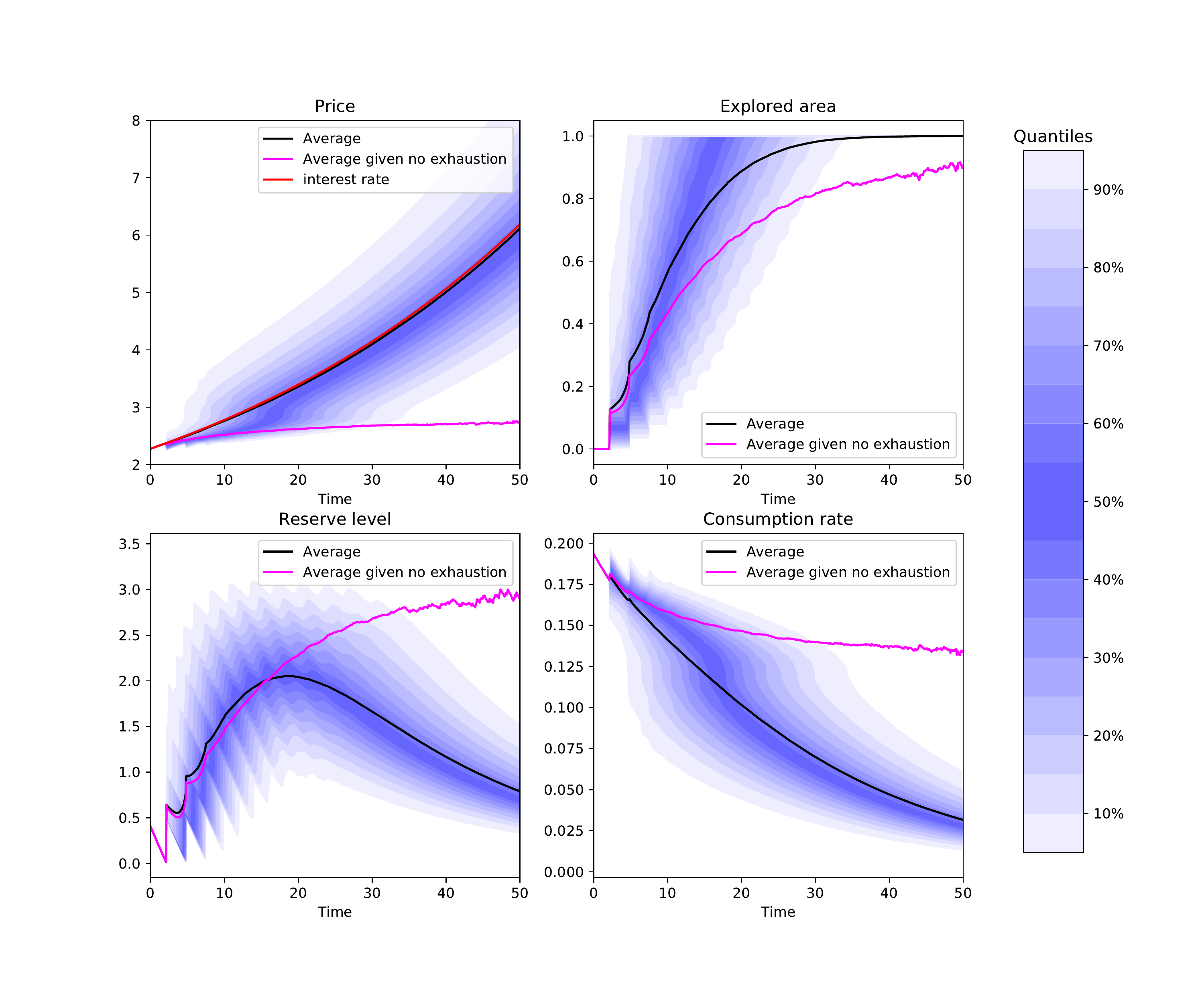}
\emph{Notes}: Graphs shows 1000 simulations of prices, explored area, reserve levels, and the consumption rate over time under the parameter assumptions with a high probability of discoveries (red line in Figure~\ref{region1.fig}).
\end{figure}

\clearpage
\renewcommand{\thepage}{\roman{page}}
\renewcommand{\thefigure}{A\arabic{figure}}
\renewcommand{\thetable}{A\arabic{table}}
\renewcommand{\theequation}{A\arabic{equation}}
\renewcommand{\thesection}{A\arabic{section}}
\setcounter{page}{1}
\setcounter{table}{0}
\setcounter{figure}{0}
\setcounter{section}{0}
\singlespace
	
\section{Appendix: Proofs} \label{proofs.sec}
\subsection{A priori results}
\begin{proof}[Proof of Proposition \ref{apriori.prop}] \textbf{Part i.}\ Let $(c,\theta,h)\in\mathcal A(x,R)$ and consider a consumption-exploration strategy $(\tilde c,\tilde\theta)$, which follows the same consumption profile as $(c,\theta)$, after having first explored all unexplored area at time $t=0$. Mathematically, this strategy writes: $\tilde\theta_n = 0$ and $\tilde c^n(t) = c_t \mathbf1_{n = N_x+1}$ for all $n\geq 0$.  Clearly, the new strategy is admissible: $(\tilde c,\tilde \theta) \in \mathcal A(x,R)$, and the consumption processes are equal: $\tilde c_t = c_t$ for all $t\geq 0$. Thus, the value function equals
$$
\sup_c \mathbb E\left[\int_0^\infty e^{-rt} u(c_t)
  dt\right],
$$
where the sup is taken over all mappings $c:\Omega\times \mathbb R_+$ which are $\mathcal F_{N_x+1}\times \mathcal B(\mathbb R_+)$-measurable and satisfy the admissibility condition
$$
R+aN_{x} - \int_0^t c_s ds\geq 0
$$
for all $t\geq 0$. Similarly to the original Hotelling problem, this supremum is found to be equal to $\mathbb E[\mathcal U(R + a N_{x})]$, from which the upper bound follows. 

\textbf{Part ii.}\ Let us first show that the value function is increasing in $R$. If a consumption-exploration strategy $(c,\theta)$ is admissible for initial data $(x,R)$, it is also admissible for initial data $(x,R')$ for all $R'\geq R$. Since the gain of a fixed consumption-exploration strategy does not depend on $R$, this shows that $V$ is increasing in $R$. Now, let $\delta>0$ and let $(c,\theta)\in \mathcal A(x,R+\delta)$ be a consumption-exploration strategy.  Then, clearly, 
$$
R - \int_0^t\frac{R}{R+h} c_s ds + a \sum_{n=0}^{N_{x}-1} \mathbf 1_{\theta_n\leq
t} \geq 0
$$
a.s. for all $t\geq 0$, so that $\left(\frac{R}{R+\delta} c,\theta\right)\in \mathcal A(x,R)$. Then,
\begin{eqnarray*}
V(x,R+\delta)-V(x,R) & \leq  & \sup_{(c,\theta)\in \mathcal A(x,R+\delta)}\mathbb E\left[\int_0^\infty
  e^{-rt}\left(u(c_t) -
                  u\left(\frac{R}{R+\delta}c_t\right)\right)dt\right]\\
& = & \left(1 - \left(\frac{R}{R+\delta}\right)^\alpha\right)\sup_{(c,\theta)\in \mathcal A(x,R+\delta)}\mathbb E\left[\int_0^\infty
  e^{-rt}u(c_t) dt\right]\\
&=& \left(1 - \left(\frac{R}{R+\delta}\right)^\alpha\right)\mathbb E[ U(R+\delta +
aN_{x})]\\
& \leq& \frac{\alpha \delta}{R} \mathbb E[ U(R+\delta +
aN_{x})].
\end{eqnarray*}
Since $V$ is increasing, it follows that it is locally Lipschitz continuous.

\textbf{Part iii.}\ Let us first show the upper bound. From equation (\ref{vf}), part i.~of the present proposition, and equation (\ref{uhotel}), it follows that
$$
V(x,R) \leq \sup_{(c,\theta)\in \mathcal A(x,R)} \mathbb E\left[\int_0^\infty e^{-rt} u(c_t)
  dt\right] = \mathbb E[\mathcal U(R + aN_{x})] = C \mathbb E[(R + aN_{x})^\alpha].
$$
for some $C<\infty$. To evaluate the expectation in the right-hand side, recall that $\alpha\in (0,1)$. Then, 
\begin{eqnarray*}
\mathbb E[(R+aN_{x})^\alpha] &\leq& R^\alpha + \mathbb E[(aN_{x})^\alpha]\\ &=& R^\alpha + \sum_{n=0}^\infty \frac{e^{-\lambda{x}}(\lambda{x})^n}{n!} (an)^{\alpha}\\
&=& R^\alpha + \sum_{n=0}^\infty \frac{e^{-\lambda{x}}(\lambda{x})^{n+1}}{n!} \frac{(a(n+1))^{\alpha}}{n+1}\\
&\leq & R^\alpha + \lambda{x}a^\alpha,
\end{eqnarray*}
which shows that the upper bound of the statement holds with a different constant $C$. 

To prove the lower bound, consider a strategy $(c,\theta)$, where $\theta_n = +\infty$ for all $n$, $c_0$ is given by Hotelling's rule and $c_n$ for $n\geq 1$ are arbitrary. This strategy consists in not doing any exploration and simply consuming the reserves available at time $t=0$ according to Hotelling's rule. Since is clearly admissible for initial data $(x,R)$ with any $x\geq 0$, this shows that the value function $V(x,R)$ is bouded from below by the Hotelling value function $\mathcal U(R)$ for all $x\geq 0$. 

\end{proof}

\subsection{A first approach at the HJB equation}
We start by deriving the following dynamic programming principle for our value function. 
\begin{lemma}[Dynamic programming principle]\label{dynpro}
The value function satisfies:
$$
V(x,R) = \sup_{c,\theta_1} \int_0^{\theta_1} e^{-rt}u(c_t) dt +
e^{-r\theta_1} M V(x,R-\int_0^{\theta_1} c_s \, ds),
$$
where the $\sup$ is taken over all measurable deterministic functions
$c:\mathbb R_+ \to \mathbb R_+$ and constants $\theta_1\in \mathbb R_+$  such that $\int_0^{\theta_1} c_s ds\leq R$.
\end{lemma}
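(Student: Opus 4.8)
The plan is to establish the two inequalities $V \leq W$ and $V \geq W$ separately, where $W(x,R)$ denotes the right-hand side of the claimed identity. The whole argument rests on two structural facts: first, that no information is revealed before the first exploration episode, so that the pre-exploration controls are forced to be deterministic; and second, the time-homogeneity of the problem together with the strong Markov property of the Poisson process $N$, which makes the reward accrued from the first exploration episode onward an instance of the same value-function problem with shifted initial data.

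For the bound $V \leq W$, I would fix an arbitrary admissible bang-bang strategy $(\theta,c)\in\mathcal A(x,R)$. Since $X_0 = \xi_0\wedge x = 0$, the $\sigma$-field $\mathcal F_{X_0} = \mathcal F_0$ is trivial, so by Definition \ref{defbb}(i) the first exploration time (denoted $\theta_0$ there, $\theta_1$ in the statement) and the pre-exploration consumption $c^0$ are almost surely deterministic. I would then split the objective in (\ref{vf}) into the deterministic contribution $\int_0^{\theta_1} e^{-rt}u(c^0_t)\,dt$ accrued before $\theta_1$ and the remainder accrued from $\theta_1$ onward, the latter including the first exploration cost term ($n=0$). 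Because exploration is instantaneous, the first deposit sits at $X_1 = \xi_1\wedge x$; conditioning on $\xi_1$, using time-homogeneity to factor out $e^{-r\theta_1}$, and bounding the continuation value by $V(x-s,R'+a)$ after a find at $\xi_1 = s < x$ and by $\mathcal U(R')$ when $\xi_1\geq x$, with $R' = R-\int_0^{\theta_1}c^0_s\,ds$, I would obtain that the remainder is at most $e^{-r\theta_1}MV(x,R')$ by the very definition (\ref{exp.op}) of $M$. Taking the supremum over all admissible strategies yields $V\leq W$.

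For the reverse bound $V\geq W$, I would fix a deterministic admissible pair $(c,\theta_1)$, set $R' = R-\int_0^{\theta_1}c_s\,ds\geq 0$ and $\varepsilon>0$, select for each post-find state $(x-s,R'+a)$ an $\varepsilon$-optimal admissible strategy, and paste together the following strategy for data $(x,R)$: consume $c$ without exploring on $[0,\theta_1)$; explore instantaneously at $\theta_1$; on $\{X_1 = \xi_1 = s < x\}$ follow the chosen $\varepsilon$-optimal continuation for $(x-s,R'+a)$, shifted in time by $\theta_1$ and in exploration coordinate by $s$; and on $\{\xi_1\geq x\}$ consume the remaining $R'$ according to Hotelling's rule. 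The strong Markov property of $N$ ensures that, conditional on $\xi_1 = s$, the residual process $N_{s+\cdot}-N_s$ is a fresh Poisson($\lambda$) process, so the shifted continuation is genuinely admissible for $(x-s,R'+a)$ and the budget constraint (\ref{budget}) is preserved by construction. By (\ref{exp.op}) this strategy has value at least $\int_0^{\theta_1}e^{-rt}u(c_t)\,dt + e^{-r\theta_1}MV(x,R')-\varepsilon$; letting $\varepsilon\downarrow 0$ and taking the supremum over $(c,\theta_1)$ gives $V\geq W$.

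The main obstacle will be the pasting step in the reverse inequality: the continuation depends on the random find location $\xi_1$, so assembling the $\varepsilon$-optimal continuations into a single strategy admissible in the sense of Definition \ref{defbb} requires a measurable-selection argument compatible with the strategy-dependent filtration $(\mathcal F_{X_n})$ emphasized in the introduction. I would handle this by exploiting that $\xi_1 = X_1$ is $\mathcal F_{X_1}$-measurable and that the continuation need only be adapted to the shifted filtration of $N_{s+\cdot}-N_s$, so the concatenated exploration times and consumption maps inherit the required $\mathcal F_{X_n}$-measurability. A secondary point worth recording is that the restriction to finite $\theta_1$ costs nothing: letting $\theta_1\to\infty$ along a Hotelling consumption path recovers the no-exploration value $\mathcal U(R)$, since $MV$ is bounded by Proposition \ref{apriori.prop}(iii) while $e^{-r\theta_1}\to 0$.
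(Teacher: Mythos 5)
Your overall strategy --- conditioning at the first (deterministic) exploration time and proving the two inequalities separately --- is genuinely different from the paper's route. The paper does not argue directly on $V$: it introduces the truncated value functions $V^n$ with at most $n$ exploration episodes, proves the recursion $\widetilde V^n(x,R)=\sup_{c,\theta}\{\int_0^\theta e^{-rt}u(c_t)\,dt+e^{-r\theta}M\widetilde V^{n-1}(x,R-\int_0^\theta c_s\,ds)\}$ by iterating conditional expectations with respect to $\mathcal F_{X_p}$, shows $V^n\uparrow V$, and passes to the limit. The point of that detour is precisely to neutralize the two difficulties your direct argument runs into. Your observation that $\mathcal F_{X_0}$ is trivial, forcing $\theta_0$ and $c^0$ to be deterministic, is correct and is also the reason the supremum in the lemma is over deterministic controls; your upper bound is essentially sound modulo a disintegration argument showing that, conditionally on $\xi_1=s$, the continuation is admissible for $(x-s,R'+a)$ with respect to the filtration of the fresh Poisson process.

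The genuine gap is in the lower bound. You need a family $\{\pi^s\}_{s\in[0,x)}$ of $\varepsilon$-optimal continuations for the data $(x-s,R'+a)$ that depends \emph{measurably on $s$}, so that the pasted object satisfies Definition \ref{defbb}(i); your proposed fix (that $\xi_1$ is $\mathcal F_{X_1}$-measurable and the continuation is adapted to the shifted filtration) addresses adaptedness of a \emph{given} continuation, not the measurable-selection problem of \emph{choosing} one for each $s$ simultaneously. A related and prior issue is that your argument already presupposes that $s\mapsto V(x-s,R'+a)$ is measurable --- otherwise neither $MV$ nor the integrated continuation bound in your upper step is well defined --- and at this stage of the paper only Lipschitz continuity in $R$ is available, not measurability or continuity in $x$. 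The paper's recursive construction resolves both problems at once: the supremum in the one-step problem defining $\widetilde V^n$ is shown to be \emph{attained} by an explicit deterministic control (via Lagrange multipliers), so the optimal strategy is assembled from explicit optimizers with no selection theorem needed, and each $\widetilde V^n$ is given by a closed recursive formula and is therefore continuous in both variables, which passes to $V$ in the limit. To repair your proof you would either need a measurable selection theorem (Jankov--von Neumann type) together with an a priori measurability argument for $V(\cdot,R)$, or a discretization of $s$ combined with continuity of $V$ in $x$ --- neither of which is free, and both of which the paper's approach is designed to avoid. Your closing remark that restricting to finite $\theta_1$ costs nothing (Hotelling consumption with $\theta_1\to\infty$ recovers $\mathcal U(R)$) is correct and worth keeping.
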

\begin{proof}
  We denote by $\mathcal A^{n}(x,R)$ the set of consumption-exploration strategies in $\mathcal A(x,R)$ such that $\theta_k=+\infty$ for all $k\geq n$, and by $V^n(x,R)$ the value function with at most $n$ exploration dates, defined as follows:
  $$
  V^n(x,R) = \sup_{(c,\theta)\in \mathcal A^n(x,R)} \mathbb E\left[\int_0^\infty e^{-rt} u(c_t)
  dt- k\sum_{j=0}^{N_{x}} e^{-r\theta_j}(X_{j+1} - X_j)\right].
$$
In particular, $V^0(x,R) = \mathcal U(R)$ for all $R\geq 0$. Also, it is easy to see that
\begin{eqnarray*}
  V^n(x,R) &=& \sup_{(c,\theta)\in \mathcal A^n(x,R)} \mathbb E\Bigg[\int_0^{\theta_{n-1\wedge N_{x}}} e^{-rt} u(c_t)
  dt + e^{-r\theta_{n-1\wedge N_{x}}} \mathcal U(R_{\theta_{n-1\wedge N_{x}}})\\ &-& k\sum_{j=0}^{n-1\wedge N_{x}} e^{-r\theta_j}(X_{j+1} - \xi_j)\Bigg].
\end{eqnarray*}
The rest of the proof is divided into three steps.


\textit{Step 1.}\ In this step we show that the value function $V^n(x,R)$ for $n\geq 1$ satisfies the dynamic programming principle. To this end, introduce the function $\widetilde V^n$ as follows:
\begin{equation}
\widetilde V^n(x,R) = \sup_{c,\theta} \int_0^{\theta} e^{-rt}u(c_t) dt +
e^{-r\theta} M\widetilde V^{n-1}(x,R-\int_0^{\theta} c_s \, ds),\label{dpn}
\end{equation}
with $\widetilde V^0 = V^0$, where the supremum is taken over all $\theta \in [0,\infty]$ and over all measurable functions $c:[0,\infty)\mapsto [0,\infty)$ such that $\int_0^\theta c_s ds \leq R$. It is easy to see that for every $n\geq 0$, $\widetilde V^n$ is continuous in $R$ and bounded from above by $C(1+R^\alpha)$ for some constant $C$. It is also not difficult to show, using the method of Lagrange multipliers, that the supremum above is attained, namely
$$
c_t = \frac{Q^*r}{1-\alpha} \frac{e^{-\frac{rt}{1-\alpha}}}{1-e^{\frac{r\theta^*}{1-\alpha}}}\mathbf 1_{t\leq \theta^*},
$$
where $\theta^*$ and $Q^*$ are given by
$$
(\theta^*,Q^*) = \arg\max_{0\leq Q\leq R, \theta\in[0,\infty]} \{\widetilde U(\theta,Q) +
e^{-r\theta} M\widetilde V^{n-1}(x,R-Q)\},
$$
with
$$
\widetilde U(\theta_1,Q) =
\frac{Q^\alpha}{\alpha}\left(\frac{1-\alpha}{r}\right)^{1-\alpha}(1-e^{-\frac{r\theta_1}{1-\alpha}})^{1-\alpha}. 
$$

Now, let $p< n$ and $(c,\theta)\in \mathcal A^n(x,R)$. Then, on $N_x\geq p$,
$$
\mathbb E[\widetilde V^{n-p-1}(x-X_{p+1},R_{\theta_p})|\mathcal F_{X_p}] = MV^{n-p-1}(x-X_{p},R_{\theta_p-})+k\mathbb E[X_{p+1}-X_{p}|\mathcal F_{X_p}].
$$
Therefore, by the law of iterated expectations, 
\begin{eqnarray*}
  && \mathbb E\Big[\int_0^{\theta_p\wedge N_{x}}e^{-rt} u(c_t) dt - k \sum_{j=0}^{p\wedge N_{x}}e^{-r\theta_j} (X_{j+1}-X_j) \\ &&\qquad \qquad\qquad\qquad+ e^{-r\theta_{p\wedge N_{x}}} \widetilde V^{n-p-1} (x- X_{p+1\wedge N_x+1},R_{\theta_{p\wedge N_{x}}})\Big]\\
  &&= \mathbb E\left[\int_0^{\theta_{p-1}\wedge N_{x}}e^{-rt} u(c_t) dt - k \sum_{j=0}^{{p-1}\wedge N_{x}}e^{-r\theta_j} (X_{j+1}-X_j) \right] \\
  &&\qquad \qquad + \mathbb E\left[\mathbf 1_{N_{x}\geq p}\Big\{\int_{\theta_{p-1}}^{\theta_p}e^{-rt} u(c_t) dt + e^{-r\theta_p} M\widetilde V^{n-p-1}\Big(x-X_p, R_{\theta_{p-1}} - \int_{\theta_{p-1}}^{\theta_{p}} c_s ds\Big)\Big\}\right]\\
 &&\qquad \qquad+ \mathbb E\left[\mathbf 1_{N_{x}< p} e^{-r\theta_{p-1\wedge N_{x}}} \widetilde V^{n-p} (x-X_{p\wedge N_x+1},R_{\theta_{p-1\wedge N_{x}}})\right] 
 \end{eqnarray*}
where in the last term we could replace $\widetilde V^{n-p-1}$ with $\widetilde V^{n-p}$ because on $N_x<p$, $X_{p\wedge N_x+1} = X_{N_{x}+1} = x$.

Therefore, by equation (\ref{dpn}), for any strategy,
\begin{eqnarray*}
  && \mathbb E\Big[\int_0^{\theta_p\wedge N_{x}}e^{-rt} u(c_t) dt - k \sum_{j=0}^{p\wedge N_{x}}e^{-r\theta_j} (X_{j+1}-X_j) \\ &&\qquad \qquad\qquad\qquad+ e^{-r\theta_{p\wedge N_{x}}} \widetilde V^{n-p-1} (x-X_{p+1\wedge N_x+1},R_{\theta_{p\wedge N_{x}}})\Big]\\
   &&\leq \mathbb E\Big[\int_0^{\theta_{p-1}\wedge N_{x}}e^{-rt} u(c_t) dt - k \sum_{j=0}^{{p-1}\wedge N_{x}}e^{-r\theta_j} (X_{j+1}-X_j)  \\ &&\qquad \qquad \qquad \qquad+ e^{-r\theta_{p-1\wedge N_x}} \widetilde V^{n-p}(x - X_{p\wedge N_x+1}, R_{\theta_{p-1\wedge N_x}})\Big]. 
\end{eqnarray*}
On the other hand, for the optimal strategy which achieves the supremum in (\ref{dpn}), inequality becomes equality.

Iterating this expression from $p=n-1$ to $p=0$, we finally obtain that for any strategy $(c,\theta)\in \mathcal A^n(x,R)$,
\begin{eqnarray*}
\mathbb E\Bigg[\int_0^{\theta_{n-1\wedge N_{x}}} e^{-rt} u(c_t)
  dt + e^{-r\theta_{n-1\wedge N_{x}}} \mathcal U(R_{\theta_{n-1\wedge N_{x}}})\\ - k\sum_{j=0}^{n-1\wedge N_{x}} e^{-r\theta_j}(X_{j+1} - X_j)\Bigg]\leq \widetilde V^n(x,R),
\end{eqnarray*}
with equality for the optimal strategy. Therefore, $V^n(x,R) =  \widetilde V^n(x,R)$ for all $x,R$.

\textit{Step 2.}\ In this step our goal is to show that for all $x\geq 0$ and $R\geq 0$, $V^n(x,R)\to V(x,R)$ as $n\to\infty$. For an admissible bang-bang consumption-exploration strategy $(c,\theta)\in \mathcal A(x,R)$ and initial data $(x,R)$, we define
$$
J(c,\theta) := \mathbb E\left[\int_0^\infty u(c_t) dt - k\sum_{n=0}^{N_x} e^{-r\theta_n} (X_{n+1}-X_n)\right].
$$
By definition of the value function, for every $\varepsilon>0$, there exists an admissible strategy $(c,\theta)\in \mathcal A(x,R)$ such that 
$$
J(c,\theta)\geq V(x,R) - \varepsilon. 
$$
Define a strategy $(c^n,\theta^n)\in \mathcal A^n(x,R)$ by taking $c^n_j = c_j$ and $\theta^n_j = \theta_j$ for $j=0,\dots,n$ and $c^n_j = 0$ and $\theta^n_j = +\infty$ for $j>n$. Then, since $N_x$ is finite, 
$$
\int_0^\infty u(c^n_t) dt - k\sum_{j=0}^{n-1} e^{-r\theta^n_j} (X^n_{j+1}-X_j)
$$
converges to 
$$
\int_0^\infty u(c_t) dt - k\sum_{n=0}^{N_x} e^{-r\theta_n} (X_{n+1}-X_n)
$$
as $n\to \infty$, the two expressions are actually equal for $n$ large enough. Then, by the dominated convergence theorem,
$$
J(c^n,\theta^n) \to J(c,\theta)
$$
as $n\to \infty$. This shows that $V^n(x,R)\geq V(x,R)-\varepsilon$, and since on the other hand $V^n(x,R)\leq V(x,R)$ and $\varepsilon$ is arbitrary, we conclude that $V^n(x,R)\to V(x,R)$ as $n\to \infty$. Moreover, since the sequence $(V^n(x,R))_{n\geq 1}$ is increasing, we conclude using Dini's theorem that the convergence is uniform in $n$.

\textit{Step 3.}\ It remains to prove the DPP for the original value function by passing to the limit. From Step 1,
$$
V^n(x,R) = \sup_{c,\theta} \int_0^{\theta} e^{-rt}u(c_t) dt +
e^{-r\theta} M V^{n-1}(x,R-\int_0^{\theta} c_s \, ds).
$$
From Step 2, as $n\to\infty$, $V^n(x,R)\to V(x,R)$, and the convergence is uniform in $x$ and uniform on compacts in $R$. This implies that $MV^{n-1}$ converges to $MV$, also uniformly in $x$ and uniformly on compacts in $R$, and so the supremum converges as well. 
\end{proof}

\begin{corollary}\label{cor1}
For all $\delta>0$, $\hat c:[0,\delta]\to \mathbb R_+$, such that $\int_0^\delta \hat
c_s ds\leq R$, 
$$
V(x,R)\geq \int_0^\delta e^{-rt} u(\hat c_t) dt + e^{-r\delta} V\left(x,R-\int_0^\delta
  \hat c_s ds\right).
$$
\end{corollary}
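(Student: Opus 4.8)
The plan is to establish this sub-optimality inequality by a direct concatenation argument, which is the ``easy'' half of a dynamic programming principle and requires no measurable selection. Write $R' := R - \int_0^\delta \hat c_s\,ds \geq 0$. For a fixed $\varepsilon > 0$, I would invoke the definition of $V(x,R')$ as a supremum to pick an admissible strategy $(\bar c,\bar\theta)\in\mathcal A(x,R')$ whose payoff exceeds $V(x,R') - \varepsilon$. I would then build a competitor strategy for the initial data $(x,R)$ that consumes deterministically according to $\hat c$ on $[0,\delta]$ without exploring, and from time $\delta$ onward replays $(\bar c,\bar\theta)$ shifted in time by $\delta$.

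Concretely, the composite strategy $(c,\theta)$ sets $\theta_n := \bar\theta_n + \delta$ for every $n\geq 0$, takes $c^0(t) := \hat c_t\,\mathbf 1_{t<\delta} + \bar c^0(t-\delta)\,\mathbf 1_{t\geq\delta}$, and $c^n(t) := \bar c^n(t-\delta)$ for $n\geq 1$. The two admissibility conditions of Definition \ref{defbb} then have to be checked. For condition i., since no exploration occurs on $[0,\delta)$ the explored-area process is unchanged, so the deposit locations $\xi_n$, the variables $X_n = \xi_n\wedge x$, and the $\sigma$-fields $\mathcal F_{X_n}$ are exactly those attached to $(\bar c,\bar\theta)$; adding the constant $\delta$ preserves $\mathcal F_{X_n}$-measurability of $\theta_n$, and $\hat c$ is deterministic, so $c^0$ remains $\mathcal F_{X_0}$-measurable. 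For condition ii., on $[0,\delta]$ the reserve satisfies $R_t = R - \int_0^t\hat c_s\,ds \geq R' \geq 0$ because $\hat c\geq 0$, while for $t\geq\delta$ the budget constraint reduces to that of $(\bar c,\bar\theta)$ started from $R'$, which holds by its admissibility in $\mathcal A(x,R')$.

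It then remains to evaluate the payoff of $(c,\theta)$ and optimize. Splitting the utility integral at $\delta$ and substituting $s = t-\delta$ decomposes it into the deterministic term $\int_0^\delta e^{-rt}u(\hat c_t)\,dt$ plus $e^{-r\delta}\int_0^\infty e^{-rs}u(\bar c_s)\,ds$; likewise $e^{-r\theta_n} = e^{-r\delta}e^{-r\bar\theta_n}$ while each $X_{n+1}-X_n$ is unchanged, so the exploration-cost sum carries an overall factor $e^{-r\delta}$. Taking expectations and pulling out the deterministic first term, the payoff of the composite strategy equals $\int_0^\delta e^{-rt}u(\hat c_t)\,dt$ plus $e^{-r\delta}$ times the payoff of $(\bar c,\bar\theta)$, hence is at least $\int_0^\delta e^{-rt}u(\hat c_t)\,dt + e^{-r\delta}\big(V(x,R') - \varepsilon\big)$. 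Since $(c,\theta)\in\mathcal A(x,R)$, the value function dominates this quantity; letting $\varepsilon\downarrow 0$ yields the claim.

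The only genuinely delicate point is the one flagged under condition i.: because the information filtration is indexed by explored area rather than calendar time, I must be certain that prepending a deterministic consumption window and translating all exploration epochs by the constant $\delta$ leaves the area-indexed $\sigma$-fields $\mathcal F_{X_n}$ untouched, so that measurability is inherited verbatim from $(\bar c,\bar\theta)$. Given the structure of the model this is immediate once stated, and the remainder is routine bookkeeping within the bang-bang formalism; no regeneration or strong-Markov input beyond the stationary, area-indexed nature of $N$ is required.
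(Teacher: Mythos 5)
Your proof is correct, but it takes a genuinely different route from the paper. The paper obtains the corollary as a quick consequence of the already-established dynamic programming principle (Lemma \ref{dynpro}): it plugs the concatenated deterministic consumption profile $t\mapsto \hat c_t\mathbf 1_{t<\delta}+\tilde c_{t+\delta}\mathbf 1_{t\geq\delta}$ with $\theta_1=\delta+\tilde\theta_1$ into the DPP, factors out $e^{-r\delta}$, and then takes the supremum over $(\tilde c,\tilde\theta_1)$ to recognize $V\bigl(x,R-\int_0^\delta\hat c_s\,ds\bigr)$ via a second application of the same DPP. You instead work directly at the level of admissible bang-bang strategies: you pick an $\varepsilon$-optimal element of $\mathcal A(x,R')$, prepend the deterministic window, translate the exploration epochs by $\delta$, and verify Definition \ref{defbb} by hand. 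Your observation that the area-indexed $\sigma$-fields $\mathcal F_{X_n}$ are determined by the Poisson process alone (since $X_n=\xi_n\wedge x$ does not depend on the strategy's timing), so that adding the constant $\delta$ preserves measurability, is exactly the point that makes the pasting legitimate, and the budget and payoff bookkeeping is as you describe. What each approach buys: the paper's derivation is two lines once Lemma \ref{dynpro} is in hand, but that lemma itself requires the nontrivial $V^n$-approximation machinery; your argument is the self-contained ``easy half'' of dynamic programming, needing only the definition (\ref{vf}) of $V$ as a supremum, and would work even before the full DPP is proved. Both are valid proofs of the stated inequality.
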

\begin{proof}
Let $\tilde c:\mathbb R_+ \to \mathbb R_+$ and $\tilde\theta_1\in \mathbb
R_+$. Applying the dynamic programming principle  to 
\begin{eqnarray*}
&c:&  \mathbb R_+ \to \mathbb R_+,\quad t\mapsto \hat c_t \mathbf 1_{t<\delta}
+ \tilde c_{t+\delta} \mathbf 1_{t\geq \delta}\\
&\theta_1 & = \delta + \tilde \theta_1,
\end{eqnarray*}
we get:
{\footnotesize\begin{eqnarray*}
V(x,R) & \geq & \int_0^\delta e^{-rt} u(\hat c_t) dt +
\int_{\delta}^{\delta+\tilde\theta_1} e^{-rt} u(\tilde c_{t+\delta}) dt +
e^{-r(\delta+\tilde\theta_1)} MV\left(x,R - \int_0^\delta \hat c_t dt +
  \int_\delta^{\delta+\tilde\theta_1} \tilde c_{t+\delta} dt\right)\\
& = & \int_0^\delta e^{-rt} u(\hat c_t) dt +
e^{-r\delta}\left\{\int_{0}^{\tilde\theta_1} e^{-rt} u(\tilde c_{t}) dt +
e^{-r\tilde\theta_1} MV\left(x,R - \int_0^\delta \hat c_t dt +
  \int_0^{\tilde\theta_1} \tilde c_{t} dt\right)\right\}.
\end{eqnarray*}}
Taking the sup over $\tilde c$ and $\tilde \theta_1$, we get the
statement of the corollary. 
\end{proof}

The following corollary follows easily by the method of Lagrange multipliers. 
\begin{corollary}\label{cor2} The value function satisfies:
$$
V(x,R) = \sup_{0\leq Q\leq R,\theta_1\geq 0} \{\widetilde U(\theta_1,Q) +
e^{-r\theta_1} MV(x,R-Q)\},
$$
where
$$
\widetilde U(\theta_1,Q) =
\frac{Q^\alpha}{\alpha}\left(\frac{1-\alpha}{r}\right)^{1-\alpha}(1-e^{-\frac{r\theta_1}{1-\alpha}})^{1-\alpha}. 
$$
\end{corollary}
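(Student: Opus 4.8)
The plan is to deduce this identity from the dynamic programming principle of Lemma \ref{dynpro} by reducing the optimization over consumption \emph{paths} to a two-parameter optimization over the exploration horizon $\theta_1$ and the total amount $Q$ consumed before the first exploration. The crucial structural observation is that in the expression
$$
V(x,R) = \sup_{c,\theta_1} \int_0^{\theta_1} e^{-rt} u(c_t)\,dt + e^{-r\theta_1} MV\Big(x, R - \int_0^{\theta_1} c_s\,ds\Big),
$$
the exploration term $MV(x, R - \int_0^{\theta_1} c_s\,ds)$ depends on the consumption profile only through its integral $Q := \int_0^{\theta_1} c_s\,ds$. I would therefore rewrite the supremum as an outer supremum over $(\theta_1, Q) \in [0,\infty)\times[0,R]$ and an inner supremum over all admissible $c:[0,\theta_1]\to\mathbb R_+$ subject to $\int_0^{\theta_1} c_s\,ds = Q$. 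For the inner problem the second summand $e^{-r\theta_1} MV(x,R-Q)$ is a fixed constant, so the inner problem reduces to maximizing $\int_0^{\theta_1} e^{-rt} u(c_t)\,dt$ under the single isoperimetric constraint, and the task is precisely to show that this maximum equals $\widetilde U(\theta_1,Q)$.

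The inner problem is a finite-horizon deterministic consumption problem, which I would solve by the method of Lagrange multipliers. Introducing a multiplier $p$ for the constraint $\int_0^{\theta_1} c_s\,ds = Q$ and maximizing the Lagrangian pointwise in $t$ gives the first-order condition $e^{-rt} u'(c_t)=p$, i.e. the exponentially decaying profile $c_t = c_0\,e^{-\frac{r}{1-\alpha}t}$ with $c_0 = p^{1/(\alpha-1)}$; the constant $c_0$ is then pinned down by the constraint as $c_0 = \frac{Qr}{(1-\alpha)\,(1-e^{-\frac{r\theta_1}{1-\alpha}})}$. Since $u$ is strictly concave and the constraint is linear, the objective is concave on the convex feasible set, so this stationary point is the unique global maximizer and the first-order conditions are sufficient. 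Substituting $c_t$ back and simplifying yields
$$
\int_0^{\theta_1} e^{-rt} u(c_t)\,dt = \frac{c_0^\alpha}{\alpha}\,\frac{1-\alpha}{r}\Big(1-e^{-\frac{r\theta_1}{1-\alpha}}\Big) = \frac{Q^\alpha}{\alpha}\Big(\frac{1-\alpha}{r}\Big)^{1-\alpha}\Big(1-e^{-\frac{r\theta_1}{1-\alpha}}\Big)^{1-\alpha} = \widetilde U(\theta_1,Q),
$$
which is exactly the claimed inner value.

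Combining the two layers gives the identity of the corollary. Since this is the same inner optimization already performed in Step~1 of the proof of Lemma \ref{dynpro} (where $\widetilde U$ was first defined), the corollary is essentially a restatement adapted to the limiting value function $V$, and no new estimate is needed. The points requiring (routine) care are the degenerate boundary cases—when $\theta_1=0$ the constraint forces $Q=0$ and both sides vanish, consistently with $(1-e^{0})^{1-\alpha}=0$, and when $Q=0$ with $\theta_1>0$ the optimizer is $c\equiv 0$—together with the verification that the candidate $c_t$ is nonnegative and integrable on $[0,\theta_1]$ so the inner supremum is genuinely attained. I expect the only real obstacle to be this bookkeeping: confirming that the path-supremum decomposes exactly into the two-parameter supremum, i.e. that every admissible path corresponds to some $(\theta_1,Q)$ and, conversely, that the pointwise Lagrangian optimizer is itself an admissible deterministic consumption path in the sense of Lemma \ref{dynpro}. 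Given the concavity structure, this is mild, which is why the result follows easily.
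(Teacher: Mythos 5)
Your proposal is correct and matches the paper's approach: the paper simply states that Corollary \ref{cor2} ``follows easily by the method of Lagrange multipliers'' from Lemma \ref{dynpro}, and the inner isoperimetric optimization you carry out (yielding the exponentially decaying profile and the closed form $\widetilde U(\theta_1,Q)$) is exactly the computation already displayed in Step~1 of the proof of that lemma. Your handling of the decomposition into outer and inner suprema and of the boundary cases is the intended argument, just written out in more detail than the paper bothers to give.
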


The following proposition is our first variant of the HJB equation, which does not yet require the value function to be everywhere differentiable. The differentiability of the value function will be shown in the following subsection.
\begin{proposition}\label{HJB.prop}
The value function $V(x,R)$ is the solution of the
  HJB equation
$$
\max\left\{u^*\left(\frac{\partial V}{\partial R}\right) -
    rV,MV-V\right\}=0,\quad V(1,R) = U(R),
$$
meaning that
\begin{itemize}
\item[i.] For all $R\geq 0$, $x\in [0,1]$, $V(x,R)\geq MV(x,R)$. 
\item[ii.] For all $R>0$, $x\in [0,1]$, 
$$
u^*\left(\liminf_{h\to 0} \frac{V(x,R+h)-V(x,R)}{h}\right)\leq
rV(x,R). 
$$ 
\item[iii.] At all points $(x,R)$ such that $R>0$ and
  $V(x,R)>MV(x,R)$, $V$ is differentiable in $R$ and satisfies
$$
u^*\left(\frac{\partial V}{\partial R}\right) = rV. 
$$
\end{itemize}
\end{proposition}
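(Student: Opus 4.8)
The plan is to read the three items as, respectively, the obstacle inequality, the subsolution inequality, and the interior (consumption-region) equation, and to extract each one from the dynamic programming tools already proved. Item~(i) is immediate: applying the dynamic programming principle of Lemma~\ref{dynpro} with the admissible choice $\theta_1=0$ makes the running integral vanish and leaves exactly $MV(x,R)$ on the right, so $V(x,R)\geq MV(x,R)$ because $V$ is the supremum over all $(c,\theta_1)$.

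For item~(ii) I would feed a constant consumption rate $\hat c_t\equiv c$ into Corollary~\ref{cor1}, and crucially apply the resulting inequality
\[
V(x,R')\geq u(c)\tfrac{1-e^{-r\delta}}{r}+e^{-r\delta}V(x,R'-c\delta)
\]
at \emph{two} base points, $R'=R$ and $R'=R+c\delta$. Rearranging, dividing by $h=c\delta$, and letting $\delta\downarrow 0$ (using that $V(x,\cdot)$ is Lipschitz by Proposition~\ref{apriori.prop}, so the quotients stay bounded), the first base point controls the backward difference quotient and gives $c\,\underline D^-V(x,R)\geq u(c)-rV(x,R)$, while the shifted base point controls the forward difference quotient and gives $c\,\underline D^+V(x,R)\geq u(c)-rV(x,R)$, where $\underline D^{\pm}V$ are the lower right/left Dini derivatives of $V(x,\cdot)$. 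Taking the supremum over $c$ yields $u^*(\underline D^+V)\leq rV$ and $u^*(\underline D^-V)\leq rV$ separately; since $u^*$ is continuous and strictly decreasing, $u^*(\min\{\underline D^+V,\underline D^-V\})=\max\{u^*(\underline D^+V),u^*(\underline D^-V)\}\leq rV$, and $\min\{\underline D^+V,\underline D^-V\}=\liminf_{h\to 0}\frac{V(x,R+h)-V(x,R)}{h}$, which is exactly~(ii). The nonobvious point is the shift: consumption can only lower $R$, so the naive argument controls only the backward quotient, and evaluating the same inequality at the slightly larger reserve $R+c\delta$ is what produces the forward bound needed for the genuinely two-sided liminf.

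For item~(iii), fix $(x,R)$ with $V(x,R)>MV(x,R)$. Since $MV(x,\cdot)$ inherits continuity (and monotonicity) in $R$ from $V$ and $\mathcal U$, the set $\{V>MV\}$ is open, so $V>MV$ on some interval $(R-\eta,R+\eta)$. On this interval I would establish the \emph{consumption-only} dynamic programming equality
\[
V(x,R')=\sup_{\hat c}\Big\{\int_0^\delta e^{-rt}u(\hat c_t)\,dt+e^{-r\delta}V\big(x,R'-\int_0^\delta \hat c_s\,ds\big)\Big\},
\]
valid for all small $\delta$. The justification is that starting exploration immediately is strictly suboptimal here: in the representation of Corollary~\ref{cor2} the choice $\theta_1=0$ returns $MV(x,R')<V(x,R')$ (using that $MV(x,\cdot)$ is increasing), so the optimal, or $\varepsilon$-optimal, first exploration time is bounded away from $0$; hence for small $\delta$ no exploration occurs on $[0,\delta]$ and the continuation from $\delta$ is again optimal for reserves $R'-\int_0^\delta\hat c$. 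This is the ordinary Hotelling relation with exploration switched off. At any $R'$ in the interval where $V(x,\cdot)$ is differentiable, evaluating this equality at an $\varepsilon$-optimal (positive) constant rate and expanding to first order in $\delta$ delivers the reverse inequality $u^*(V_R)\geq rV$; combined with~(ii) this gives $u^*(V_R)=rV$, i.e. $V_R=(u^*)^{-1}(rV)$, at almost every $R'$ in $(R-\eta,R+\eta)$, ``a.e.'' because a Lipschitz function is differentiable a.e.

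It remains to upgrade this to genuine differentiability at $R$ and recover the stated equation. Here I would use that $V(x,\cdot)$, being locally Lipschitz, is absolutely continuous, so $V(x,R+h)-V(x,R)=\int_R^{R+h}\partial_R V(x,s)\,ds$; substituting $\partial_R V(x,s)=(u^*)^{-1}(rV(x,s))$ a.e. and noting that $s\mapsto(u^*)^{-1}(rV(x,s))$ is continuous (a composition of continuous maps) shows the difference quotient converges, so $V(x,\cdot)$ is in fact $C^1$ on $(R-\eta,R+\eta)$ with $V_R=(u^*)^{-1}(rV)$, which is precisely $u^*(V_R)=rV$. I expect the main obstacle to be the rigorous justification of the local consumption-only equality: this is exactly where the strategy-dependent information filtration stressed in the introduction bites, since one must argue both that delaying exploration cannot be improved upon near an interior point and that the shifted continuation value is itself the value function. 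Once that is in place, the Dini-derivative/convex-conjugate bookkeeping and the absolute-continuity upgrade are routine.
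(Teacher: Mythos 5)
Your items (i) and (ii) follow the paper's own route: (i) is Lemma \ref{dynpro} with $\theta_1=0$, and (ii) is Corollary \ref{cor1} with a constant rate; the paper's displayed computation only writes out the forward quotient (it applies the inequality at the base point $R+\delta$), but your two-base-point bookkeeping to cover both Dini derivatives is the same idea made explicit. For item (iii) you share the paper's key step -- a local, consumption-only dynamic programming identity on the open set $\{V>MV\}$, justified exactly as the paper does via Corollary \ref{cor2} and the observation that the optimally consumed quantity before exploration is bounded below by the radius of the neighbourhood where $V>MV$ -- but you finish differently. The paper localizes in the \emph{reserve} variable: it fixes a small consumed quantity $\varepsilon$, optimizes the free consumption time $\tau$ explicitly, and the first-order condition yields the closed-form bound $\frac{V(x,R)-V(x,R-\varepsilon)}{\varepsilon}\le\bigl(\tfrac{\alpha}{1-\alpha}rV(x,R-\varepsilon)\bigr)^{\frac{\alpha-1}{\alpha}}$, which squeezes against the liminf bound of (ii) to give differentiability \emph{at every point} with no measure theory. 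You localize in \emph{time}, obtain $u^*(V_R)=rV$ only at a.e.\ point of differentiability of the Lipschitz function $V(x,\cdot)$, and then upgrade to $C^1$ by absolute continuity plus continuity of $(u^*)^{-1}(rV)$. Both finishes are valid; the paper's buys a pointwise, computation-driven argument exploiting the explicit power utility, yours is more robust to the form of $u$ but adds the a.e.-to-everywhere step. One detail you should not wave away in your version: when you expand $V(x,R'-\int_0^\delta\hat c)$ to first order for a near-optimal $\hat c$, you need the total consumed amount $Q_\delta$ to tend to $0$ with $\delta$ (otherwise the Taylor expansion at $R'$ is not licensed); this follows from Jensen's inequality, $\int_0^\delta e^{-rt}u(\hat c_t)\,dt\le\delta\,u(Q_\delta/\delta)\to0$, together with the strict monotonicity of $V$ in $R$, but it does need to be said. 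The paper's $\varepsilon$-localization sidesteps this issue entirely, which is presumably why it was chosen.
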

\begin{proof}${}$
It follows from Lemma \ref{dynpro} that $V\geq MV$. Let us prove the
property ii. In other words, we need to show that for all $c>0$, 
\begin{equation}
u(c) - c\liminf_{h\to 0} \frac{V(x,R+h)-V(x,R)}{h}  \leq rV.\label{liminf}
\end{equation}
Fix $\delta<R$ and $\bar c>0$. From Corollary \ref{cor1}, 
$$
V(x,R) \geq \int_0^{\delta/\bar c} e^{-rt} u(\bar c) dt  + e^{-r\delta/\bar c}
V(x,R-\delta),
$$
and therefore 
$$
\frac{V(x,R+\delta)-V(x,R)}{\delta}\geq u(\bar c)\frac{1-e^{-r\delta/\bar c}}{\delta r} -
\frac{1-e^{-r\delta/\bar c}}{\delta}V(x,R),
$$
from which equation~(\ref{liminf}) follows by passing to the limit. 

Let us now turn to property iii. By Corollary \ref{cor2} and
continuity of $V$ and $MV$, there exists $\varepsilon>0$ such that
$V(x,R')> M V(x,R')$ for all $R'$ with $|R-R'|<\varepsilon$ and 
$$
V(x,R) = \sup_{\varepsilon\leq Q\leq R, \theta_1\geq 0} \{\widetilde U(\theta_1,Q) +
e^{-r\theta_1} MV(x,R-Q)\}.
$$
Then,
\begin{eqnarray*}
V(x,R) & = & \sup_{c,\theta_1\geq 0: \int_0^{\theta_1} c_s ds >\varepsilon} \int_0^{\theta_1} e^{-rt}u(c_t) dt +
e^{-r\theta_1} MV(x,R-\int_0^{\theta_1} c_s \, ds)\\
& = & \sup_{c,\theta_1\geq \tau: \int_0^{\tau} c_s ds =\varepsilon} \int_0^{\tau} e^{-rt}u(c_t) dt +\int_\tau^{\theta_1} e^{-rt}u(c_t) dt+
e^{-r\theta_1} MV(x,R-\varepsilon-\int_\tau^{\theta_1} c_s \, ds)\\
& = & \sup_{c,\tau \tau: \int_0^{\tau} c_s ds =\varepsilon}
  \int_0^{\tau} e^{-rt}u(c_t) dt +e^{-r\tau} V(x,R-\varepsilon) \\
 & = &
  \sup_{\tau\geq 0}\{\widetilde U(\tau,\varepsilon) + e^{-r\tau}
  V(x,R-\varepsilon)). 
\end{eqnarray*}
Remark that 
$$
\widetilde U(\tau,Q) \leq \frac{Q^\alpha}{\alpha r^{1-\alpha}} (1-e^{-r\tau})^{1-\alpha}.
$$
Thus,
$$
V(x,R) \leq \sup_\tau \{\frac{\varepsilon^\alpha}{\alpha r^{1-\alpha}}
(1-e^{-r\tau})^{1-\alpha} +
e^{-r\tau} V(x,R-\varepsilon)\}
$$
The first-order condition for the maximization in the RHS writes:
$$
1-e^{-r\tau} = \varepsilon\left(\frac{V(x,R-\varepsilon) \alpha r^{1-\alpha}}{1-\alpha}\right)^{-\frac{1}{\alpha}},
$$
which provides an upper bound:
$$
\frac{V(x,R) - V(x,R-\varepsilon)}{\varepsilon} \leq \left(\frac{\alpha}{1-\alpha} rV(x,R-\varepsilon)\right)^{\frac{\alpha-1}{\alpha}}.
$$
On the other hand, from property ii., we have 
$$
\lim\inf_{\varepsilon\to 0} \frac{V(x,R)-V(x,R-\varepsilon)}{h} \geq
\left(\frac{\alpha}{1-\alpha} rV(x,R-\varepsilon)\right)^{\frac{\alpha-1}{\alpha}}.
$$
Together, the two inequalities show the differentiability of $V$ and
complete the proof of the first part. 

\end{proof}


\subsection{Characterization of consumption / exploration regions, smoothness of the value function}
Recall that $u^*$ denotes the the convex conjugate of $u$. We shall denote by $u_1$ the
inverse of $u^*$. This function is given explicitly by
$$
u_1(y) = \left(\frac{\alpha y}{1-\alpha}\right)^{1- \frac{1}{\alpha}}
$$
The following result establishes some useful properties of the exploration operator $M$ defined in (\ref{exp.op}). 
\begin{lemma}\label{cinf}
For all $(x,R) \in \mathbb R^2_+$, $MV(x,R)$ is infinitely differentiable in $R$ and satisfies
\begin{eqnarray*}
MV(x,R) & = & \mathbb E[V(x-X_{\tau_C},\widehat R(\tau_C))- k X_{\tau_C}]\\
MV^{(n)}_R(x,R) & = & \mathbb E[V^{(n)}_R(x-X_{\tau_C},\widehat R(\tau_C))],\quad n\geq 1,
\end{eqnarray*}
where $\tau_C = \inf\{j\geq 1: V(x-X_j,\widehat R(j))>MV(x-X_j,\widehat R(j)) \ \text{or}\ X_j=x\}$ and $\widehat R(j) = R + a(j\wedge N_{x}) $.
\end{lemma}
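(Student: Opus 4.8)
The plan is to realise $MV$ as the value of following the optimal bang-bang rule through a whole run of consecutive instantaneous exploration episodes, and then to read off its smoothness from the fact that such a run always terminates at a point where $V$ is already known to be smooth.

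First I would rewrite the definition (\ref{exp.op}) probabilistically. Recognising $\lambda e^{-\lambda s}\,ds$ on $[0,x]$ as the law of the first deposit location $\xi_1$, the factor $e^{-\lambda x}=\mathbb P(\xi_1>x)$ as the exhaustion event (on which $V(0,R)=\mathcal U(R)$), and $\tfrac{1-e^{-\lambda x}}{\lambda}=\mathbb E[X_1]$, the definition becomes the base identity $MV(x,R)=\mathbb E[V(x-X_1,\widehat R(1))-kX_1]$ with $\widehat R(1)=R+a\mathbf 1_{\xi_1<x}=R+a(1\wedge N_x)$. By the strong Markov (memoryless) property of $N$, the same identity applied to the process restarted after the $j$-th episode gives, on $\{X_j<x\}$, the one-step relation $MV(x-X_j,\widehat R(j))=\mathbb E[\,V(x-X_{j+1},\widehat R(j+1))-k(X_{j+1}-X_j)\mid\mathcal F_{X_j}]$; here I would verify that the restarted first-jump area equals $X_{j+1}-X_j$ and that the reserve bookkeeping yields exactly $\widehat R(j+1)=R+a((j+1)\wedge N_x)$.

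Next I would iterate this relation. Setting $Z_j:=V(x-X_j,\widehat R(j))-kX_j$ and using $V=MV$ on the exploration region $\mathcal E$, so that $V(x-X_j,\widehat R(j))=MV(x-X_j,\widehat R(j))$ for every $j<\tau_C$, the one-step relation shows that $(Z_{j\wedge\tau_C})_{j\geq1}$ is an $(\mathcal F_{X_j})$-martingale. Since $N_x<\infty$ a.s. we have $\tau_C\leq N_x+1<\infty$, and the a priori bound $V\leq C(1+R^\alpha)$ of Proposition \ref{apriori.prop} makes $\sup_j|Z_{j\wedge\tau_C}|$ integrable; optional stopping then gives $MV(x,R)=\mathbb E[Z_1]=\mathbb E[Z_{\tau_C}]=\mathbb E[V(x-X_{\tau_C},\widehat R(\tau_C))-kX_{\tau_C}]$, which is the first displayed identity.

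For the derivative formulas I would exploit that, by the very definition of $\tau_C$, the terminal state $(x-X_{\tau_C},\widehat R(\tau_C))$ always lies either in the consumption region or at exhaustion $x=0$; at both kinds of points $V$ is $C^\infty$ in $R$, since on the consumption region it solves $u^*(V_R)=rV$ (Proposition \ref{HJB.prop}), whose right-hand side is smooth in $V$ so that a bootstrap applies, and at $x=0$ it equals $\mathcal U\in C^\infty$. Differentiating the representation then amounts to differentiating $V$ at these smooth points and passing the derivative under the expectation. The main obstacle is that $\tau_C$ itself depends on $R$ through the free boundary, so a naive difference quotient carries a frontier-crossing term. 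I would neutralise it using the same optional-stopping flexibility: raising $R$ brings states into the consumption region only weakly sooner, so $\tau_C$ is monotone in $R$, and writing $MV(x,R)$ and $MV(x,R\pm h)$ with a common stopping time reduces the increment to $\mathbb E[V(x-X_\sigma,\widehat R(\sigma)+h)-V(x-X_\sigma,\widehat R(\sigma))]$; it is precisely the value-matching $V=MV$ across $\mathcal E$ that makes the two representations coincide, so no boundary term survives. On the one-sided limit for which $\sigma$ does not depend on $h$ the relevant evaluation point lies in the smooth region, giving pointwise convergence of the difference quotient to $V_R$ dominated by the local Lipschitz constant; on the other side I would use that $\tau_C(R+h)\to\tau_C(R)$ a.s.\ (the frontier is hit with probability zero, as $X_j$ admits a density and the frontier is a null set). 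Dominated convergence then yields $\partial_R MV=\mathbb E[V_R(x-X_{\tau_C},\widehat R(\tau_C))]$. Finally, since $V_R$ is again $C^\infty$ in $R$ at the terminal points, I would iterate this argument verbatim, the polynomial growth of the successive $R$-derivatives supplying the domination, to obtain $MV^{(n)}_R=\mathbb E[V^{(n)}_R(x-X_{\tau_C},\widehat R(\tau_C))]$ for every $n\geq1$.
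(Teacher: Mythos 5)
Your proposal is correct and follows essentially the same route as the paper: the one-step probabilistic identity $MV(x,R)=\mathbb E[V(x-X_1,\widehat R(1))-kX_1]$ iterated up to the stopping time $\tau_C$ (you phrase this as optional stopping of the stopped martingale $Z_{j\wedge\tau_C}$, the paper iterates explicitly and passes to the limit by dominated convergence), followed by differentiation under the expectation using that the terminal state lies in the consumption region or at exhaustion, where $V$ is smooth with $V_R=u_1(rV)$ bounded by $u_1(r\,\mathcal U(R))$. Your handling of the $R$-dependence of $\tau_C$ via a common stopping time and value matching is actually more explicit than the paper's (which silently freezes $\tau_C$); the only caution is that the monotonicity of $\tau_C$ in $R$ you invoke presupposes the threshold structure of the consumption region, which is proved later \emph{from} this lemma, but your $\sigma=\tau_C(R)\wedge\tau_C(R+h)$ device carries the argument without it.
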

\begin{proof}
  By definition of the operator $M$,
  \begin{eqnarray*}
  MV(x,R) &=& \mathbb E[V(x-X_1,\widehat R(1)) - kX_1]\\
 &=& \mathbb E[(MV(x-X_1,\widehat R(1)) -kX_1)\mathbf 1_{\tau_C \geq 2} \\&&\qquad \qquad + (V(x-X_{\tau_C},\widehat R(\tau_C)) -kX_{\tau_C})\mathbf 1_{\tau_C =1}].
  \end{eqnarray*}
Iterating this procedure $n$ times, we finally obtain for $n\geq 1$:
$$
MV(x,R) = \mathbb E[(V(x-X_n,\widehat R(n)) -kX_n)\mathbf 1_{\tau_C > n} + (V(x-X_{\tau_C},\widehat R(\tau_C)) -kX_{\tau_C})\mathbf 1_{\tau_C \leq n}]
$$
Passing to the limit $n\to \infty$ using the dominated convergence theorem, we obtain the first statement of the lemma.

To prove the second statement remark that for given $(x,R)$ either $V(x-X_{\tau_C},\widehat R(\tau_C))>MV(x-X_{\tau_C},\widehat R(\tau_C)) $ or $V(x-X_{\tau_C},\widehat R_{\tau_C}) = \mathcal U(\widehat R(\tau_C))$. Therefore, by Proposition \ref{HJB.prop}, the mapping $R\mapsto V(x-X_{\tau_C},\widehat R(\tau_C))$ is differentiable in the neighborhood of $R$, with the derivative given by
$$
\frac{\partial }{\partial R}V(x-X_{\tau_C},\widehat R(\tau_C)) = u_1(r V(x-X_{\tau_C},\widehat R(\tau_C))) . 
$$
Since the latter expression is positive and bounded from above by $u_1(r \mathcal U(R))$ (see Proposition \ref{apriori.prop}, part iii.), we conclude by the dominated convergence theorem, that the second statement of the lemma is true with $n=1$. Similar arguments can be used to finish the proof for arbitrary $n$. 
\end{proof}

The following proposition uses the lemma to establish the existence of an \emph{exploration frontier}. 
\begin{proposition}\label{boundary.prop}For every $x>0$ there exists $R^*(x)\in [0,\infty]$ such that $V(x,R)>MV(x,R)$ for all $R> R^*(x)$ and $V(x,R)= MV(x,R)$ for all $R\leq R^*(x)$. 
\end{proposition}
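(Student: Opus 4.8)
The plan is to define $R^{*}(x):=\sup\{R\ge 0:V(x,R)=MV(x,R)\}$ and verify it has the stated properties. Since $V\ge MV$ everywhere by Proposition \ref{HJB.prop}(i), the inequality $V(x,R)>MV(x,R)$ for every $R>R^{*}(x)$ is immediate from the definition of the supremum together with the continuity of $V$ and $MV$ (which makes $\{V=MV\}$ closed). The set $\{V=MV\}$ is non-empty: the computation preceding (\ref{assump2}) shows that condition (\ref{assump2}) forces $V(x,0)=MV(x,0)>0=\mathcal U(0)$, so $0$ belongs to it and $R^{*}(x)\ge 0$; conversely, for large $R$ the marginal gain of a discovery vanishes (by concavity $V(x,R+a)-V(x,R)\to 0$) while the exploration cost stays bounded away from $0$, so $V>MV$ there and the problem is non-degenerate. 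All the content therefore lies in the reverse inclusion: I must show that $\{R:V(x,R)=MV(x,R)\}$ is \emph{downward closed}, i.e.\ equals the whole interval $[0,R^{*}(x)]$. Equivalently, exploration being optimal at some reserve level must imply that it is optimal at every smaller level, so that the consumption region $\{V>MV\}$ contains no ``bubble'' with a finite right endpoint.

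The natural route is a single-crossing argument built on the idea that the marginal value of exploration decreases with reserves. Inside the consumption region, Proposition \ref{HJB.prop}(iii) gives $u^{*}(\partial_{R}V)=rV$, so $V(x,\cdot)$ solves the autonomous ODE $\partial_{R}V=u_{1}(rV)$, whose solutions are exactly the translates $\mathcal U(\cdot+C)$ of the Hotelling value, with $C\ge0$ by the lower bound $V\ge\mathcal U$. Differentiating the definition (\ref{exp.op}) of the exploration operator in $R$ (legitimate because $V$ is locally Lipschitz in $R$ and differentiable off the free boundary, the boundary terms being controlled by the envelope argument of Lemma \ref{cinf}) yields
$$
\frac{\partial MV}{\partial R}(x,R)=\int_{0}^{x}\frac{\partial V}{\partial R}(x-s,R+a)\,\lambda e^{-\lambda s}\,ds+\mathcal U'(R)\,e^{-\lambda x}.
$$
The driving mechanism is the strict concavity of $V$ in $R$: a discovery acts through the shift $R\mapsto R+a$, and strict concavity makes $\partial_{R}V(\,\cdot\,,R+a)$ strictly smaller than $\partial_{R}V(\,\cdot\,,R)$, so the exploration premium $MV-V$ should be non-increasing in $R$; combined with $MV(x,0)=V(x,0)$ this forces $\{V=MV\}=[0,R^{*}(x)]$. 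Concavity of $V$ in $R$ is itself obtained along the approximation $V^{n}\uparrow V$ of Lemma \ref{dynpro}: the operator $M$ is a positive mixture of the maps $R\mapsto V(x-s,R+a)$ plus the concave $\mathcal U$, hence preserves concavity, while the consumption step in (\ref{dpn}) is a partial supremum of an objective jointly concave in $(R,Q,\theta)$, hence preserves it as well.

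The hard part will be making the comparison $\partial_{R}V\ge\partial_{R}MV$ on the consumption region rigorous, since this is exactly where concavity and the discovery shift must be balanced. The term $\mathcal U'(R)e^{-\lambda x}$ alone already exceeds $\partial_{R}V(x,R)\,e^{-\lambda x}$ (because $V(x,\cdot)=\mathcal U(\cdot+C)$ with $C\ge0$ gives $\partial_{R}V(x,R)=\mathcal U'(R+C)\le\mathcal U'(R)$), so the inequality can only be recovered by exploiting the $R\mapsto R+a$ shift inside the integral, which requires quantitative control of how the consumption-region constant $C$ depends on the area $x$. A secondary technical point is the non-differentiability of $V$ across the free boundary: the comparison must be phrased with one-sided derivatives, and ``$MV-V$ non-increasing'' then follows from the a.e.\ derivative bound together with absolute continuity. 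The cleanest confirmation of the mechanism is the small-area regime, where $MV-V$ is governed at leading order by $\lambda\big(\mathcal U(R+a)-\mathcal U(R)\big)-k$, a strictly decreasing function of $R$ by strict concavity of $\mathcal U$; upgrading this first-order single-crossing to all $x>0$ is the crux of the proof.
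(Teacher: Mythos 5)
There is a genuine gap. Your plan reduces the proposition to the claim that the exploration premium $MV-V$ is non-increasing in $R$ (so that the set $\{V=MV\}$ is downward closed), but you never establish this, and you yourself exhibit the obstruction: writing
$\partial_R MV(x,R)=\int_0^x \partial_R V(x-s,R+a)\lambda e^{-\lambda s}ds+\mathcal U'(R)e^{-\lambda x}$,
the boundary term $\mathcal U'(R)e^{-\lambda x}$ already \emph{exceeds} $e^{-\lambda x}\partial_R V(x,R)$, so concavity of $V$ in $R$ and the shift $R\mapsto R+a$ do not by themselves yield $\partial_R V\geq \partial_R MV$ on the consumption region; the ``quantitative control of how the consumption-region constant $C$ depends on $x$'' that you invoke is precisely the missing ingredient, and you concede it is ``the crux.'' Note also that global monotonicity of $MV-V$ in $R$ is strictly stronger than the proposition and is not proved anywhere in the paper either (the paper's later characterization via $g(x,R)=\frac{d}{dR}(MV^{1/\alpha})$ in Lemma \ref{alter.lm} is a different, renormalized quantity), so there is no reason to expect this route to close without substantial new work.

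The paper's proof avoids the monotonicity claim entirely. It observes that $\mathcal C_x=\{R:V(x,R)>MV(x,R)\}$ is open, supposes for contradiction that it has a bounded component $(a,b)$ with $V=MV$ at both endpoints, and considers $f=V-MV$ on $[a,b]$. By Lemma \ref{cinf} and Proposition \ref{HJB.prop}, $f$ is smooth on $(a,b)$, so at an interior maximum $\bar R$ one has $f'(\bar R)=0$ and $f''(\bar R)\leq 0$. Using the stopped representation $MV^{(n)}_R(x,R)=\mathbb E[V^{(n)}_R(x-X_{\tau_C},\widehat R(\tau_C))]$ together with the consumption-region identity $V'_R=u_1(rV)$ (so that $V'_R\sim V^{1-1/\alpha}$ and $V''_R\sim V^{1-2/\alpha}$), these two conditions become $\mathbb E[Z]=V(x,\bar R)^{1-1/\alpha}$ and $\mathbb E[Z^{\frac{2-\alpha}{1-\alpha}}]\leq \mathbb E[Z]^{\frac{2-\alpha}{1-\alpha}}$ for $Z=V(x-X_{\tau_C},\widehat R(\tau_C))^{1-1/\alpha}$; strict convexity of $z\mapsto z^{\frac{2-\alpha}{1-\alpha}}$ and Jensen's inequality then force $Z$ to be deterministic, a contradiction. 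Your preliminary observations (openness of $\{V>MV\}$, concavity of $V$ in $R$ via the iteration $V^n\uparrow V$, the small-$x$ leading-order single crossing of $\lambda(\mathcal U(R+a)-\mathcal U(R))-k$) are all sound, but none of them substitutes for this maximum-principle/Jensen step, which is where the actual content of the proposition lies.
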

\begin{proof}
Fix $x>0$ and let $\mathcal C_x = \{R:V(x,R)>MV(x,R)\}$. Since $V$ and $MV$ are continuous in $R$, the set $\mathcal C_x$ is open, and is therefore a union of disjoint open intervals. To prove the proposition it is enough to show that none of these intervals is bounded. By way of contradiction, assume that $(a,b)\subset \mathcal C_x$ is a bounded interval such that $V(x,a) = MV(x,a)$ and $V(x,b) = MV(x,b$). Consider the function $f:[a,b]\to \mathbb R_+$ defined by $f(R) = V(x,R) - MV(x,R)$. By Proposition \ref{HJB.prop} and Lemma \ref{cinf}, $f$ is infinitely differentiable on $(a,b)$, hence there exists $\bar R\in (a,b)$ such that $f'(\bar R) = 0$ and $f^{\prime\prime}(\bar R)\leq 0$. By Lemma \ref{cinf}, this means that
\begin{eqnarray*}
V'_R(x,\bar R) & = & \mathbb E[V'_R(x-X_{\tau_C},\widehat R(\tau_C))]\\
V^{\prime\prime}_R(x,\bar R) &\leq & \mathbb E[V^{\prime\prime}_R(x-X_{\tau_C},\widehat R(\tau_C))],
\end{eqnarray*}
where $\tau_C$ and $R(\tau_C)$ are defined starting from $\bar R$. 
On the other hand, by Proposition \ref{HJB.prop}, this is equivalent to 
\begin{eqnarray*}
V(x,\bar R)^{1-\frac{1}{\alpha}} &= & \mathbb E[V(x-X_{\tau_C},\widehat R(\tau_C))^{1-\frac{1}{\alpha}}]\\
V(x,\bar R)^{1-\frac{2}{\alpha}} &\geq & \mathbb E[V(x-X_{\tau_C},\widehat R(\tau_C))^{1-\frac{2}{\alpha}}].
\end{eqnarray*}
Let $Z = V(x-X_{\tau_C},\widehat R(\tau_C))^{1-\frac{1}{\alpha}}$. The above estimates imply that 
$$
\mathbb E[Z^{\frac{2-\alpha}{1-\alpha}}]\leq \mathbb E[Z]^{\frac{2-\alpha}{1-\alpha}},
$$
and since $Z$ is positive and the function $x\mapsto x^{\frac{2-\alpha}{1-\alpha}}$ is convex on $\mathbb R_+$, Jensen's inequality implies that $Z$ is deterministic, which is a contradiction unless $x=1$. 
\end{proof}

The following lemma shows that the exploration frontier is bounded from above. 
\begin{lemma}\label{uppercons.prop}
There exists a constant $\check R <\infty$ such that the set $\{(x,R):
x\geq 0, R\geq \check R\}$ belongs to the consumption region. 
\end{lemma}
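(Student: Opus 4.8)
The plan is to show directly that $MV(x,R)\le V(x,R)$, with strict inequality when $x>0$, for every $R$ above the explicit threshold
$$
\check R := (\mathcal U')^{-1}\!\left(\frac{k}{a\lambda}\right),
$$
which is finite because $\mathcal U'$ is continuous and strictly decreasing from $+\infty$ to $0$. Since the consumption region is exactly $\{V>MV\}$, which by Proposition \ref{boundary.prop} equals $\{R>R^*(x)\}$, this places the whole half-plane $\{R\ge\check R\}$ in the consumption region (equivalently, it shows $R^*(x)\le\check R$ for every $x$). The intuition is the standard marginal one: by concavity the extra reserves from one deposit are worth at most $a\mathcal U'(R)$, whereas finding a deposit costs on average $k/\lambda$, so exploration cannot pay once $a\mathcal U'(R)\le k/\lambda$.

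The engine of the argument is a pair of monotonicity estimates that together neutralise the dependence on $x$. First I would prove the a priori gradient bound $\partial_R V(x,R)\le \mathcal U'(R)$ at every $(x,R)$: in the consumption region this is immediate from Proposition \ref{HJB.prop}(iii), which gives $\partial_R V=u_1(rV)$, combined with $V\ge\mathcal U$ and the fact that $u_1$ is decreasing; in the exploration region it follows from the representation $\partial_R MV=\mathbb E[\partial_R V(x-X_{\tau_C},\widehat R(\tau_C))]$ of Lemma \ref{cinf}, the same Hotelling identity at the stopping configuration, and $\widehat R(\tau_C)\ge R$. Integrating over $[R,R+a]$ gives $V(y,R+a)-V(y,R)\le \mathcal U(R+a)-\mathcal U(R)$ for every $y\ge0$. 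Second, $V$ is nondecreasing in $x$: a strategy admissible for $(x',R)$ with $x'\le x$ is admissible for $(x,R)$ by simply never exploring past $x'$, so $V(x-s,R)\le V(x,R)$ for $s\ge0$.

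Feeding these two bounds into the definition (\ref{exp.op}) of $MV$ collapses the integral against $V(x,R)$ itself. Using $V(x-s,R+a)\le V(x,R)+[\mathcal U(R+a)-\mathcal U(R)]$ together with $\int_0^x\lambda e^{-\lambda s}\,ds=1-e^{-\lambda x}$, one obtains
$$
MV(x,R)-V(x,R)\le e^{-\lambda x}\big(\mathcal U(R)-V(x,R)\big)+\big(1-e^{-\lambda x}\big)\Big(\mathcal U(R+a)-\mathcal U(R)-\tfrac{k}{\lambda}\Big).
$$
The first term is $\le0$ because $V\ge\mathcal U$ by Proposition \ref{apriori.prop}, and by strict concavity $\mathcal U(R+a)-\mathcal U(R)<a\mathcal U'(R)\le k/\lambda$ whenever $R\ge\check R$, so the second term is strictly negative for $x>0$. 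Hence $MV(x,R)<V(x,R)$ throughout $\{x>0,\ R\ge\check R\}$, and the case $x=0$ is the boundary where $V=\mathcal U=MV$; this is exactly the assertion.

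I expect the only genuinely delicate step to be the global gradient bound $\partial_R V\le \mathcal U'$, since it must be established before $V$ is known to be smooth across the free boundary. The safe route is to argue it almost everywhere, using the local Lipschitz continuity in $R$ from Proposition \ref{apriori.prop}(ii) together with Lemma \ref{cinf}, and then integrate, which requires only absolute continuity rather than pointwise differentiability. The monotonicity of $V$ in $x$ is conceptually trivial but deserves to be stated explicitly, because it is precisely what lets the otherwise $x$-growing integral defining $MV$ be dominated by $V(x,R)$, and hence what makes the threshold $\check R$ uniform in $x$.
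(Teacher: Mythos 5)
Your overall strategy---bounding $MV-V$ from above via a gradient bound in $R$ together with monotonicity of $V$ in $x$, then invoking strict concavity of $\mathcal U$---differs from the paper's, which simply shows $MV(x,R)<\mathcal U(R)\le V(x,R)$ for large $R$ by replacing $V(x-h,R+a)$ with the a priori upper bound $\mathbb E[\mathcal U(R+a+aN_{x-h})]$ of Proposition \ref{apriori.prop} and linearizing $\mathcal U$. Your first ingredient, the global bound $\partial_R V\le \mathcal U'$, is correct and is provable at this stage from Proposition \ref{HJB.prop} and Lemma \ref{cinf} exactly as you outline. The problem lies with the second ingredient.

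The claim that $V$ is nondecreasing in $x$ because ``a strategy admissible for $(x',R)$ is admissible for $(x,R)$ by simply never exploring past $x'$'' is not valid within the class of bang-bang strategies over which $V$ is defined (Definition \ref{defbb}): there, an exploration episode, once started, necessarily continues until the next find or until the \emph{entire} remaining area is exhausted, so the agent cannot halt at an intermediate point $x'$. The transplanted strategy is therefore not admissible for $(x,R)$; an episode that would have terminated by exhaustion at $x'$ now runs past $x'$, changing both the exploration cost and the information acquired. Monotonicity of $V$ in $x$ is in fact a nontrivial result in this paper: it is established only in Step 3 of the proof of Theorem \ref{cons.thm}, via the geometric argument with the curves $R^\circ$ and $R^\dag$, and that argument relies (through Step 2) on the very lemma you are proving, so invoking it here would be circular. (Monotonicity \emph{is} trivial in the larger class of Definition \ref{defadm2}, but the equivalence of the two classes is Theorem \ref{optimality}, proved last.) Since your key estimate $V(x-s,R+a)\le V(x,R)+\mathcal U(R+a)-\mathcal U(R)$ uses this monotonicity to pass from $V(x-s,R)$ to $V(x,R)$, the proof as written does not go through. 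The repair is the paper's own device: bound $V(x-s,R+a)$ by $\mathbb E[\mathcal U(R+a+aN_{x-s})]$ from Proposition \ref{apriori.prop}(iii) and use concavity of $\mathcal U$, which avoids comparing $V$ at different values of $x$ altogether.
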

\begin{proof}
It it enough to find a constant $\check R$ such that for all $R>\check R$,
$MV(x,R)<\mathcal U(R)$. By definition of $MV$, the
following estimate holds true. 
$$
MV(x,R) = \mathcal U(R) + \int_0^{x} \left\{V(x-h,R+a) - \mathcal U(R) - \frac{k}{\lambda}\right\}\lambda e^{-\lambda h} dh.
$$
From Proposition \ref{apriori.prop} and concavity of $ \mathcal U$, it follows then that,
\begin{eqnarray*}
MV(x,R) &\leq&  \mathcal U(R) + \int_0^{x} \mathbb E\left\{\mathcal U(R+a+a N_{x-h}) - \mathcal U(R) - \frac{k}{\lambda}\right\}\lambda e^{-\lambda h} dh\\
&\leq & \mathcal U(R) + \int_0^{x} \left\{a\mathcal U'(R)(1+\lambda(x-h)) - \frac{k}{\lambda}\right\}\lambda e^{-\lambda h} dh\
\end{eqnarray*}
Since $ \mathcal U'(x) = Cx^{\alpha-1}$ for some constant $C$, we have shown that the integral becomes negative starting from some $\check R$ 
and the proof is complete. 
\end{proof}

We now present a useful alternative characterization of the exploration frontier.
\begin{lemma}\label{alter.lm}
For all $x> 0$, the exploration frontier $R^*(x)$ is given by
 $$
R^*(x) = \inf\{R: g(x,R)< c\},
$$
where
$$
g(x,R) = \frac{d}{dR}(MV(x,R)^{\frac{1}{\alpha}})\quad \text{and}\quad c = \frac{1}{\alpha}\left(\frac{\alpha r}{1-\alpha}\right)^{1-\frac{1}{\alpha}}.
$$
\end{lemma}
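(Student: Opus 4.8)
The plan is to recast both branches of the HJB relation (Proposition \ref{HJB.prop}) as a statement about the slope of the power transform $R\mapsto V(x,R)^{1/\alpha}$, for which the dichotomy between exploration and consumption reduces to comparing a single slope with the constant $c$. The starting observation, special to the CRRA form of $u^*$ and $u_1$, is that $\frac1\alpha V^{\frac1\alpha-1}u_1(rV)=c$. Since $\frac{d}{dR}\bigl(V^{1/\alpha}\bigr)=\frac1\alpha V^{\frac1\alpha-1}\frac{\partial V}{\partial R}$, this gives at once
\[
u^*\!\left(\frac{\partial V}{\partial R}\right)=rV \iff \frac{d}{dR}\bigl(V^{1/\alpha}\bigr)=c ,
\]
and, because $u^*$ is decreasing so that $u^*(\partial_R V)\le rV\iff \partial_R V\ge u_1(rV)$,
\[
u^*\!\left(\frac{\partial V}{\partial R}\right)\le rV \iff \frac{d}{dR}\bigl(V^{1/\alpha}\bigr)\ge c .
\]
In particular, in the consumption region $\{V>MV\}$ Proposition \ref{HJB.prop}(iii) yields $\frac{d}{dR}(V^{1/\alpha})=c$, so there $V^{1/\alpha}$ is affine in $R$ with slope $c$.

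\textbf{Lower bound.} I first show $g(x,R)\ge c$ for $R<R^*(x)$, giving $\inf\{R:g(x,R)<c\}\ge R^*(x)$. For such $R$ one has $V=MV$ by Proposition \ref{boundary.prop}, hence $V$ inherits the $R$-smoothness of $MV$ (Lemma \ref{cinf}) and $g(x,R)=\frac{d}{dR}(MV^{1/\alpha})=\frac{d}{dR}(V^{1/\alpha})$. Since the constraint $MV-V=0$ is then the binding one in the HJB, the remaining inequality is $u^*(\partial_R V)\le rV$, and the reduction above gives $g(x,R)\ge c$.

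\textbf{Upper bound.} For the reverse inequality I argue by contradiction. Suppose $g(x,R)\ge c$ on $[R^*(x),R^*(x)+\delta]$ for some $\delta>0$. Since $MV^{1/\alpha}$ is $C^1$ (Lemma \ref{cinf}) and, by continuity of $V$ together with $V=MV$ on $\{R\le R^*(x)\}$, the value-matching identity $MV(x,R^*(x))^{1/\alpha}=V(x,R^*(x))^{1/\alpha}$ holds, integrating $g$ from $R^*(x)$ gives, for $R\in(R^*(x),R^*(x)+\delta]$,
\[
MV(x,R)^{1/\alpha}=V(x,R^*(x))^{1/\alpha}+\int_{R^*(x)}^{R}g(x,s)\,ds\ge V(x,R^*(x))^{1/\alpha}+c\,(R-R^*(x)) .
\]
Because $V^{1/\alpha}$ is affine with slope $c$ in the consumption region, the right-hand side equals $V(x,R)^{1/\alpha}$, so $MV(x,R)\ge V(x,R)$, contradicting $V>MV$ there. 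Hence every right neighborhood of $R^*(x)$ meets $\{g<c\}$, so $\inf\{R:g(x,R)<c\}\le R^*(x)$, and combining the two bounds proves the lemma.

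\textbf{Main obstacle.} The substantive step is the first one: recognizing that the CRRA structure collapses the two HJB branches into the single object $\frac{d}{dR}(V^{1/\alpha})$, which is identically $c$ in the consumption region and at least $c$ in the exploration region. After that, the argument is a short fundamental-theorem-of-calculus comparison pinned at the free boundary by value matching. The only delicate points are the regularity inputs --- $R$-differentiability of $V$ off the boundary (Proposition \ref{HJB.prop}) and $R$-smoothness of $MV$ (Lemma \ref{cinf}), which together also supply smoothness of $V=MV$ strictly inside the exploration region --- and the continuity of $V$ that transports value matching across the boundary.
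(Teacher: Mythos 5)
Your proof is correct, and for half of the argument it takes a genuinely different route from the paper's. The exploration-region half is the same: both you and the authors read off $MV'_R\ge u_1(rMV)$ from part ii of Proposition \ref{HJB.prop} (using that $V=MV$ is smooth there), which after the CRRA rescaling is exactly $g\ge c$. For the consumption-region half the paper argues pointwise: it invokes the Jensen/convexity argument from the proof of Proposition \ref{boundary.prop} (via the probabilistic representation of $MV'_R$ in Lemma \ref{cinf}) to get $V'_R>MV'_R$ wherever $V>MV$, and combines this with $V'_R=u_1(rV)<u_1(rMV)$ to conclude $g<c$ on the \emph{entire} consumption region. You instead integrate: since $V^{1/\alpha}$ is affine with slope exactly $c$ above the frontier and matches $MV^{1/\alpha}$ at $R^*(x)$, persistence of $g\ge c$ on any right neighborhood would force $MV\ge V$ there, contradicting $V>MV$. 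Your route is more elementary --- it needs only value matching, the explicit affine form of $V^{1/\alpha}$, and the fundamental theorem of calculus for the smooth $MV$ --- and it even recovers $g(x,R^*(x))=c$ by continuity. The trade-off is that it delivers slightly less than the paper's argument: you show only that $\{g<c\}$ accumulates at $R^*(x)$ from the right, not that $g<c$ throughout the consumption region. That weaker conclusion is exactly what the stated infimum characterization requires, so the lemma as written is fully proved; just be aware that the stronger pointwise inequality is what the paper's later derivative computations at and above the frontier lean on.
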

\begin{proof}
  Fix $x>0$. By Proposition \ref{HJB.prop}, $MV'_R(x,R)\geq u_1(r\,MV(x,R))$ for $R\leq R^*(x)$. On the other hand, for $R>R^*(x)$, the argument used in the proof of Proposition \ref{boundary.prop} shows that $V'_R(x,R)>MV'_R(x,R)$. Since, in this region, $V'_R(x,R) = u_1(rV(x,R))< u_1(rMV(x,R))$, we have that $MV'_R(x,R)< u_1(rMV(x,R))$ and thus the consumption region is characterized as follows.
$$
R^*(x) = \inf\{R: MV'_R(x,R)< u_1(rMV(x,R))\}.
$$
Substituting the explicit expression of the function $u_1$, the proof of the lemma is complete. 
\end{proof}  

The following lemma focuses on the smoothness of the exploration frontier and the value function, and establishes an alternative HJB equation satisfied by $V$. 
\begin{lemma}\label{prop.lm}
The value function $V$ is concave and continuously differentiable in $R$ and in $x$ in its entire domain. Its derivative in $x$ satisfies
\begin{equation}
\frac{\partial V}{\partial x} = -k +  \lambda(V(x,R+a)-V(x,R))\label{dervx1}
\end{equation}
in the exploration region and
\begin{equation}
\frac{\partial V(x,R)}{\partial x}  = V(x,R)^{1-\frac{1}{\alpha}} V(x,R^*(x))^{\frac{1}{\alpha}-1} \frac{\partial MV(x,R^*(x))}{\partial x},\label{dervx2}
\end{equation}
in the consumption region. 

The value function satisfies an alternative HJB equation (\ref{hjb.classic}).

The critical exploration frontier is differentiable in $x$. 
\end{lemma}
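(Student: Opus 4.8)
The plan is to bootstrap the regularity, breaking the circular dependence between smoothness of $V$ in $x$, smoothness of $MV$ in $x$, and differentiability of the frontier. First I would establish that $V$ is increasing and locally Lipschitz in $x$ by a direct comparison of admissible strategies: enlarging the explorable interval can only help (any strategy for $x$ remains admissible for $x'>x$), while the extra value produced by the added piece $[x,x']$ is controlled by the probability $1-e^{-\lambda(x'-x)}=O(x'-x)$ of a find there times the bounded marginal value of a find, yielding $|V(x',R)-V(x,R)|\le C(x'-x)$. Continuity in $x$ is exactly what is needed to differentiate the explicit integral in (\ref{exp.op}): after the substitution $h=x-s$ and differentiation in $x$ one obtains the identity
\[
\frac{\partial MV}{\partial x}(x,R)=\lambda\bigl(V(x,R+a)-MV(x,R)\bigr)-k,
\]
valid on the whole domain, so that $MV$ is $C^1$ in $x$ (and $C^\infty$ in $R$ by Lemma \ref{cinf}).

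For smooth pasting and concavity in $R$, fix $x$ and set $f(R)=V(x,R)-MV(x,R)\ge 0$, with $f\equiv 0$ on $\{R\le R^*(x)\}$ and $f>0$ just above (Proposition \ref{boundary.prop}). Since $MV$ is smooth and $V$ solves $V_R=u_1(rV)$ on the open consumption region (Proposition \ref{HJB.prop}), both one-sided derivatives of $f$ at $R^*(x)$ exist; the left derivative is $0$, while $f\ge 0$ with $f(R^*)=0$ forces $f'(R^*+)\ge 0$. On the other hand the exploration-region inequality $MV_R\ge u_1(rMV)$ used in Lemma \ref{alter.lm} gives $f'(R^*+)=u_1(rMV(x,R^*))-MV_R(x,R^*)\le 0$, so $f'(R^*)=0$; this is the smooth-pasting condition, and $V$ is $C^1$ in $R$ across the frontier. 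Concavity follows region by region: on the consumption region $V^{1/\alpha}$ is affine in $R$ (integrate $V_R=u_1(rV)$), hence $V$ is concave there; on the exploration region $V=MV$ and Lemma \ref{cinf} gives $MV_R''=\mathbb E[V_R''(x-X_{\tau_C},\widehat R(\tau_C))]<0$, since at $\tau_C$ one lands in the consumption region or at exhaustion, where $V$ is strictly concave. The $C^1$ matching in $R$ glues these into a globally concave function.

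On the exploration region $V=MV$, so the displayed identity is precisely (\ref{dervx1}). For the frontier I use the characterization $R^*(x)=\inf\{R:g(x,R)<c\}$ of Lemma \ref{alter.lm}, with $g=\partial_R(MV^{1/\alpha})$ now jointly $C^1$; applying the implicit function theorem to $g(x,R^*(x))=c$ gives that $R^*$ is $C^1$, provided $\partial_R g(x,R^*)\ne 0$. On the consumption region I write $V(x,R)^{1/\alpha}=cR+b(x)$ with $b(x)=MV(x,R^*(x))^{1/\alpha}-cR^*(x)$; differentiating in $x$, the terms carrying $R^*{}'(x)$ cancel because $\frac{1}{\alpha}MV^{1/\alpha-1}MV_R=g(x,R^*)=c$ at the frontier (the envelope/smooth-pasting cancellation), which leaves exactly (\ref{dervx2}). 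Evaluating (\ref{dervx2}) at $R=R^*$ reproduces $\partial_x MV(x,R^*)$, matching (\ref{dervx1}) there, so $\partial V/\partial x$ is continuous across the frontier and $V$ is $C^1$ in $x$ throughout.

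Finally, for the classical HJB equation (\ref{hjb.classic}) I again use $\partial_x MV=\lambda(V(x,R+a)-MV)-k$: on the exploration region $V=MV$ makes the second bracket vanish while $u^*(V_R)-rV\le 0$ (Proposition \ref{HJB.prop}); on the consumption region $u^*(V_R)-rV=0$ and the second bracket is $\le 0$ once one checks $\partial_x V\ge\partial_x MV$, i.e.\ that $V-MV$ is nondecreasing in $x$ (using $V\ge MV$), so that $\max\{\cdot,\cdot\}=0$. I expect the main obstacle to be the circularity just described, which is broken only by first securing continuity in $x$ via the strategy-comparison argument, independently of the frontier; the two remaining delicate points are the transversality $\partial_R g(x,R^*)\ne 0$ required by the implicit function theorem (to be extracted from the strict-concavity/Jensen argument already present in Proposition \ref{boundary.prop}) and the monotonicity of $V-MV$ in $x$ used to close the consumption-region inequality of (\ref{hjb.classic}).
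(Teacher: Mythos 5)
Your overall architecture matches the paper's: smooth pasting by squeezing the one-sided derivatives at $R^*(x)$ between $MV'_R$ and $u_1(rV)$; smoothness of the frontier via the characterization of Lemma \ref{alter.lm} plus the implicit function theorem, with the transversality $\partial_R g(x,R^*)\neq 0$ extracted from the strict Jensen-type inequality of Proposition \ref{boundary.prop}; and the envelope cancellation of the $R^{*\prime}(x)$ terms in the closed form $V^{1/\alpha}=cR+b(x)$ to obtain (\ref{dervx2}). Your computation $\partial_x MV=\lambda(V(x,R+a)-MV(x,R))-k$ is also correct and is exactly what drives (\ref{dervx1}). Two steps, however, contain genuine gaps.

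First, the opening claim that monotonicity and Lipschitz continuity of $V$ in $x$ follow because ``any strategy for $x$ remains admissible for $x'>x$'' does not work for bang-bang strategies. An exploration episode by definition continues until the next find or until the \emph{whole} remaining area is exhausted, so under the larger domain $x'$ the same $(\theta,c)$ explores strictly more area (every episode that would have terminated by exhaustion at $x$ now runs on), incurs more cost, and triggers additional episodes indexed $N_x<n\leq N_{x'}$ whose discoveries the unchanged consumption plan never uses; the gain of the transplanted strategy therefore \emph{decreases}, and the comparison yields at best $V(x',R)\geq V(x,R)-C(x'-x)$, not monotonicity, while the reverse direction runs into the fact that a $\mathcal F_{\xi_n\wedge x'}$-measurable $\theta_n$ need not be $\mathcal F_{\xi_n\wedge x}$-measurable. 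Monotonicity of $V$ in $x$ is in fact only obtained in the paper later, in Step 3 of the proof of Theorem \ref{cons.thm}, via the geometric argument with the auxiliary curves $R^\dag$ and $R^\circ$; it is not an a priori comparison fact.

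Second, and more seriously, your last step is not closed. For the consumption-region half of (\ref{hjb.classic}) you propose to verify $\partial_x V\geq\partial_x MV$, i.e.\ that $V-MV$ is nondecreasing in $x$, and you offer ``using $V\geq MV$'' as justification; but $V\geq MV$ says nothing about the sign of $\partial_x(V-MV)$, and the condition is strictly stronger than what is needed, since $\partial_x MV=\lambda(V(x,R+a)-MV(x,R))-k$ already dominates $\lambda(V(x,R+a)-V(x,R))-k$. The paper instead proves the required inequality directly: using (\ref{dervx2}) it reduces to showing that $R\mapsto V(x,R)^{\frac{1}{\alpha}-1}(V(x,R+a)-V(x,R))$ is decreasing on $[R^*(x),\infty)$, which, through the closed form of $V$ in the consumption region, becomes the elementary fact that $V\mapsto V^{\frac{1}{\alpha}-1}\bigl((V^{1/\alpha}+ac)^\alpha-V\bigr)$ has negative derivative. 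You correctly flagged this as the delicate point, but as proposed the step would not go through.
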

\begin{proof}${}$\\
\textit{Step 1: smoothness and concavity of $V$ in $R$.}\  This statement is obviously true for $x=0$; fix $x>0$.  By Proposition \ref{HJB.prop} and Lemma \ref{cinf}, the value function is concave and continuously differentiable on $(0,R^*(x))$ and on $(R^*(x),\infty)$. It remains then to check that the right and left derivatives at $R^*(x)$ coincide. We may assume without loss of generality that $R^*(x)\in(0,\infty)$. Since, on $[0,R^*(x)]$, $V(x,R) = MV(x,R)$, the left-hand derivative satisfies
$$
V'_-(x,R) = MV'(x,R),
$$
while for the right-hand derivative we have, 
$$
V'_+(x,R)  = u_1(rV(x,R)).
$$
By Proposition \ref{HJB.prop}, part ii., 
$$
MV'(x,R) \geq u_1(rV(x,R)),
$$
and since $V'_+(x,R)\geq MV'(x,R)$, we also have
$$
u_1(rV(x,R))\geq MV'(x,R), 
$$
so that 
$$
MV'(x,R) = u_1(rV(x,R)) = V'_+(x,R).
$$

\medskip

\noindent\textit{Step 2: smoothness of the exploration frontier.}\ We use the characterization of Lemma \ref{alter.lm}. It is enough to show that $g$ is continuously differentiable in $x$ and $R$, and that its derivative with respect to $R$ is strictly positive. We now proceed to compute the derivatives of $g$. A direct computation using the formula for $MV$ shows:
\begin{eqnarray*}
\frac{\partial MV(x,R)}{\partial x} &=&  -k +\lambda(MV(x,R+a) - V(x,R))\\
\frac{\partial MV'_R(x,R)}{\partial x} & = &  \lambda(MV'_R(x,R+a) - V'_R(x,R)),
\end{eqnarray*}

and in particular,
\begin{eqnarray*}
\frac{\partial MV(x,R)}{\partial x}\Big|_{R = R^*(x)} &= & -k +\lambda(V(x,R^*(x)+a) - V(x,R^*(x)))\\
\frac{\partial MV'_R(x,R)}{\partial x}\Big|_{R = R^*(x)} &= &  \lambda(V'_R(x,R^*(x)+a) - V'_R(x,R^*(x))).
\end{eqnarray*}

The derivative of $g$ with respect to $R$ satisfies
$$
\frac{\partial g(x,R)}{\partial R} = \frac{1}{\alpha} MV^{\frac{1-2\alpha}{\alpha}}(x,R)\left\{\frac{1-\alpha}{\alpha} (MV'_R(x,R))^2 + MV(x,R)MV^{\prime\prime}_R(x,R)\right\},
$$
which is clearly continuous. At the point $(x,R^*(x))$, $MV'_R(x,R) = V'_R(x,R)$, and by the same argument as in Proposition \ref{boundary.prop}, $MV^{\prime\prime}_R(x,R)<V^{\prime\prime}_R(x,R)$. 
Together with the smooth pasting, this leads to the following estimate (for $x>0$). 
$$
\frac{\partial g(x,R^*(x))}{\partial R}< \frac{1}{\alpha} V^{\frac{1-2\alpha}{\alpha}}(x,R)\left\{\frac{1-\alpha}{\alpha} (V'_R(x,R))^2 + V(x,R)V^{\prime\prime}_R(x,R)\right\} = 0,
$$
since in the consumption region, 
$$
V'_R(x,R) = \left(\frac{\alpha r V(x,R)}{1-\alpha}\right)^{\frac{\alpha-1}{\alpha}}. 
$$
On the other hand, the derivative with respect to $x$ writes: 
\begin{eqnarray*}
\frac{\partial g(x,R^*(x))}{\partial x} & =& \frac{1-\alpha}{\alpha^2} MV^{\frac{1-2\alpha}{\alpha}}(x,R^*(x)) \frac{\partial MV(x,R^*(x))}{\partial x} \frac{\partial MV(x,R^*(x))}{\partial R} \\ 
& & \qquad \qquad  + \frac{1}{\alpha}MV^{\frac{1-\alpha}{\alpha}}(x,R^*(x))\frac{\partial MV'_R(x,R^*(x))}{\partial x},
\end{eqnarray*}
which is also a continuous function. 

\medskip

\noindent\textit{Step 3: smoothness and derivatives of $V$ in $x$.}\ In the interior of the exploration region, $V(x,R) = MV(x,R)$, which means that $V$ is continuously differentiable in $x$ and satisfies (\ref{dervx1}). 
In the interior of the consumption region, 
    $$
    V(x,R) = (MV(x,R^*(x))^{\frac{1}{\alpha}} + c(R-R^*(x)))^\alpha.
    $$
Since $MV$ and $R^*$ are differentiable in $x$, $V$ is differentiable in $x$ in the interior of the consumption region, and its derivative is given by
\begin{eqnarray*}
\frac{\partial V(x,R)}{\partial x} &=& V(x,R)^{1-\frac{1}{\alpha}} MV(x,R^*(x))^{\frac{1}{\alpha}-1} \frac{\partial MV(x,R^*(x))}{\partial x} \\ &+& \alpha V(x,R)^{1-\frac{1}{\alpha}}\left\{\frac{1}{\alpha} MV(x,R^*(x))^{\frac{1}{\alpha}-1}\frac{\partial MV(x,R^*(x))}{\partial R} -c\right\}\frac{\partial R(x)}{\partial x}.
\end{eqnarray*}
Since $V$ and $\frac{\partial V}{\partial R}$ are continuous, this shows that the second term above is zero, and so in the consumption region, the derivative of $V$ satisfies (\ref{dervx2}) and is continuous across the frontier. 

\medskip

\noindent\textit{Step 4: alternative HJB equation.}\ To finish the proof, it remains to show that in the consumption region, 
$$
\frac{\partial V}{\partial x}\geq\lambda(V(x,R+a)-V(x,R)) - k, 
$$
or, in other words, given the result of Step 3, that 
$$
 \left(\frac{V(x,R^*(x))}{V(x,R)}\right)^{\frac{1}{\alpha}-1} (\lambda(V(x,R^*(x)+a)-V(x,R^*(x))) - k) \geq \lambda(V(x,R+a)-V(x,R)) - k.
$$
Since $V$ is increasing in $R$, it is sufficient to prove that 
$$
 \left(\frac{V(x,R^*(x))}{V(x,R)}\right)^{\frac{1}{\alpha}-1} (V(x,R^*(x)+a)-V(x,R^*(x))) \geq V(x,R+a)-V(x,R),
$$
or in other words that the mapping 
$$
R\mapsto V(x,R)^{\frac{1}{\alpha}-1}(V(x,R+a)-V(x,R))
$$
is decreasing for $R\geq R^*(x)$. Equivalently, using the explicit form of $V$ in the consumption region, one can consider the mapping
$$
V \mapsto V^{\frac{1}{\alpha}-1}((V^\frac{1}{\alpha} + ac)^\alpha-V),
$$
whose derivative is easily shown to be negative. 
\end{proof}

We now provide the proof of the second main theorem of the paper.
\begin{proof}[Proof of Theorem \ref{cons.thm}]
  The existence of the exploration frontier $R^*(x)$ was shown in Proposition \ref{boundary.prop}, and in Lemma \ref{prop.lm} we have shown that the frontier is smooth. It remains to prove that $R^*(x)$ is  strictly positive and decreasing  in $x$, and that the value function is increasing in $x$, as well as to study the behavior of the frontier for small $x$. 

\textit{Step 1. $R^*(x)$ is strictly positive.}\ We will show that there exists a function $\overline R:(0,\infty)\mapsto (0,\infty)$, such that for every $x>0$,
the points $\{(x,R): R\leq \overline R(x)\}$ belong to the exploration
region. Let us define
$$
\overline R(x) = \max\{R>0: (1-e^{-\lambda x}) ( \mathcal U(R+a) -
k/\lambda) - \{e^{\frac{\alpha\lambda}{1-\alpha}x} -
e^{-\lambda{x}}\} \mathcal U(R)>0\}
$$

It is easy to see that $\overline R(x)>0$. When $x$ converges to $0$, $\overline R(x)$
converges to the nonzero limit given by
$$
\overline R(0) = \max\{R>0: (1-\alpha)( \mathcal U(R+a) - k/\lambda)>
\mathcal U(R)\}. 
$$
With the aim of arriving at a contradiction, assume that there
exists a point $(x,\hat R)$ with $\hat R\leq \overline R(x)$ and
$V(x,\hat R)>MV(x,\hat R)$. In
other words, this point belongs to the consumption region. Let
$\check R = \max\{R<\hat R: V(x,R) = MV(x,R)\}$. In view of the
above remark, $\check R\geq 0$.  The points between $\check R$
and $\hat R$ belong to the consumption region, and therefore, the value
function satisfies the equation 
$$
\frac{\partial V(x,R)}{\partial R} = u_1(rV(x,R))
$$
on $(\check R, \hat R)$. Since $u_1$ is decreasing, for $R\in (\check R, \hat R)$,
\begin{eqnarray*}
u_1(rV(x,R)) & \leq & u_1(rV(x,\check R))\\
& = & u_1(rMV(x,\check R))\\ 
&\leq & u_1\left( r(1-e^{-\lambda{x}})
  \left( \mathcal U(\check R + a)   -
  \frac{k}{\lambda}\right) +re^{-\lambda{x}}
    \mathcal U(\check R)\right),
\end{eqnarray*}
and we have that 
$$
V(x,R) \leq MV(x,\check R) + (R - \check R) u_1\left( r(1-e^{-\lambda{x}})
  \left( \mathcal U(\check R + a)   -
  \frac{k}{\lambda}\right) +re^{-\lambda{x}}
    \mathcal U(\check R)\right).
$$
On the other hand, 
 since $V$ is increasing in $R$, 
{\footnotesize \begin{eqnarray*}
 MV(x,R) - MV(x,\check R) & = & \lambda\int_0^{x} e^{-\lambda h}
dh \left\{V(x-h,R+a) - V(x-h,\check R+a)\right\} +
e^{-\lambda{x}}( \mathcal U(R) -  \mathcal U(\check R))\\
&\geq & e^{-\lambda{x}}( \mathcal U(R) -  \mathcal U(\check R))
\end{eqnarray*}}
Combining the above estimates, passing to the limit $R\downarrow
\check R$, we get
{\footnotesize $$
\liminf_{R\downarrow \check R}\frac{MV(x, R) - V(x, R)}{ R - \check R} 
\geq   e^{-\lambda{x}} \mathcal U'(\check R)- u_1\left( r(1-e^{-\lambda{x}})
  \left( \mathcal U(\check R + a)   -
  \frac{k}{\lambda}\right) +re^{-\lambda{x}}
    \mathcal U(\check R)\right). 
$$}
On the other hand,
$$
 \mathcal U'(\check R) = u_1(r \mathcal U(\check R)), 
$$
and by definition of $\overline R(x)$, 
$$
e^{-\lambda{x}}u_1(r \mathcal U(\check R)) > u_1\left( r(1-e^{-\lambda{x}})
  \left( \mathcal U(\check R + a)   -
  \frac{k}{\lambda}\right) +re^{-\lambda{x}}
    \mathcal U(\check R)\right),
$$
which contradicts the assumption that $V(x,\hat R)>MV(x,\hat R)$. 

\textit{Step 2. Behavior near exhaustion.}
  By Proposition \ref{apriori.prop}, as $x\to0$, $V(R) = \mathcal U(R) + O(x)$ uniformly on compacts in $R$. Therefore,
  $$
  MV(x,R) = \mathcal U(R+a) \lambda x + \mathcal U(R) (1-\lambda x) - kx + O(x^2), 
  $$
  and
  $$
  MV'_R(x,R) = \mathcal U'(R+a) \lambda x + \mathcal U'(R) (1-\lambda x) + O(x^2),
  $$
  and the function $g$ of Lemma \ref{alter.lm} satisfies
{\footnotesize  $$
  g(x,R) = \mathcal U(R)^{\frac{1}{\alpha}-1} \mathcal U'(R) \, \left\{1 + x \frac{1-\alpha}{\alpha}\frac{\lambda(\mathcal U(R+a)-\mathcal U(R))-k}{\mathcal U(R)} + x\lambda \frac{\mathcal U'(R+a)-\mathcal U'(R)}{\mathcal U'(R)}\right\} + O(x^2). 
  $$}
  Since the exercise frontier is defined by $g(x,R^*(x)) = c$ with $c$ given in Lemma \ref{alter.lm}, or, equivalently,
  $$
   g(x,R^*(x)) = \mathcal U(R)^{\frac{1}{\alpha}-1} \mathcal U'(R),
   $$
   it follows that as $x\to 0$, the exercise frontier satisfies
   $$
   \frac{1-\alpha}{\alpha}\frac{\lambda(\mathcal U(R^*(x)+a)-\mathcal U(R^*(x)))-k}{\mathcal U(R^*(x))} + \lambda \frac{\mathcal U'(R^*(x)+a)-\mathcal U'(R^*(x))}{\mathcal U'(R^*(x))} = O(x). 
   $$
   Note that we could substitute $R^*(x)$ into the expansion for $g$ because the expansion is uniform on compacts and $R^*$ is bounded by Lemma \ref{uppercons.prop}.

\textit{Step 3. $R^*(x)$ is decreasing and the value function is increasing in $x$.}\   
Recall the alternative characterization of the exploration frontier in Lemma \ref{alter.lm}. Given the results of Lemma \ref{prop.lm}, to prove that $R^*$ is decreasing, it is enough to show that  
$$
\frac{\partial g(x,R)}{\partial x}\Big|_{R = R^*(x)}< 0. 
$$
Denoting by $\sim$ equality up to a multiplicative constant, this derivative satisfies:
\begin{eqnarray*}
\frac{\partial g(x,R^*(x))}{\partial x} & \sim & \frac{1-\alpha}{\alpha} V'_R(x,R^*(x))V^{\frac{1-2\alpha}{\alpha}}(x,R^*(x)) \frac{\partial MV(x,R^*(x))}{\partial x} \\ 
& & \qquad \qquad  + V^{\frac{1-\alpha}{\alpha}}(x,R^*(x))\frac{\partial MV'_R(x,R^*(x))}{\partial x}\\
& \sim & \frac{1-\alpha}{\alpha} V^{-1}(x,R^*(x)) \{-k +\lambda(V(x,R^*(x)+a) - V(x,R^*(x)))\} \\ 
& & \qquad \qquad  + V^{\frac{1-\alpha}{\alpha}}(x,R^*(x))\lambda(V^{\frac{\alpha-1}{\alpha}}(x,R^*(x)+a) - V^{\frac{\alpha-1}{\alpha}}(x,R^*(x)))\\
&\sim & \frac{1-\alpha}{\alpha} \left\{-\frac{k}{\lambda V(x,R^*(x))} -1 + \frac{V(x,R^*(x)+a)}{V(x,R^*(x))}\right\} \\ 
& & \qquad \qquad  - 1+ \left(\frac{V(x,R^*(x)+a))}{V(x,R^*(x))}\right)^{\frac{\alpha-1}{\alpha}}. 
\end{eqnarray*}
Using the explicit form of the value function in the consumption region, we can further write it as follows:
{\footnotesize 
$$
\frac{\partial g(x,R^*(x))}{\partial x}
\sim  -(1-\alpha) \frac{k}{\lambda V(x,R^*(x))} -1 + (1-\alpha)\left\{1+\frac{ca}{V(x,R^*(x))^{\frac{1}{\alpha}}}\right\}^{\alpha} 
     + \alpha\left\{1+\frac{ca}{V(x,R^*(x))^{\frac{1}{\alpha}}}\right\}^{\alpha-1}.
$$}
In other words,
$$
\frac{\partial g(x,R^*(x))}{\partial x}\sim \phi\left(\frac{(ca)^\alpha}{V(x,R^*(x))}\right)
$$
with
$$
\phi(y)  =  -(1-\alpha) \varepsilon y -1 + (1-\alpha)\left\{1+{y^{\frac{1}{\alpha}}}\right\}^{\alpha} 
     + \alpha\left\{1+{y^{\frac{1}{\alpha}}}\right\}^{\alpha-1}.
$$
where we recall that $\varepsilon = \frac{k}{(ca)^\alpha\lambda}<1$. 
Let us study the function $\phi$. It holds that $\phi(0)=0$, and its derivative is given by
$$
\phi'(y) = -(1-\alpha)\left\{\varepsilon - \frac{1}{\left(1+\frac{1}{ y^{1/\alpha}}\right)^{2-\alpha}}\right\}.
$$

The funciton $\phi'(y)$ is increasing, continuous, and satisfies $\phi'(0)= -(1-\alpha)\varepsilon<0$ and $\phi'(y)\to (1-\alpha)(1-\varepsilon)>0$ as $y\to \infty$. Therefore there exists a unique point $y^*>0$, depending only on $\varepsilon$, such $\phi(y)<0$ for $x\in (0,x^*)$ and $\phi(x)<0$ for $x>x^*$. 
This shows that $R^*(x)$ is strictly decreasing at all points $x$ such that $V(x,R^*(x))> V^*:=\frac{U(a)}{x^*}$, and strictly increasing at all points $x$ such that $V(x,R^*(x))< V^*$. 

Let $R^\dag(x)$ denote the unique solution of the equation $V(x,R^\dag(x))=V^*$. Since $V(x,R)$ is continuous in $x$,  it follows that $R^\dag$ is a continuous function of $x$. 
Moreover, by Step 2, 
$$
\phi\left(\frac{(ca)^\alpha}{\mathcal U(R_0)}\right) = 0,
$$
which means that $R^\dag(0) = R^*(0)$. To prove that $R^*$ is decreasing, it is thus sufficient to show that the curve $R^*(x)$ lays above the curve $R^\dag(x)$. Furthermore, let $V^\circ$ be the solution of 
$$
\left\{(V^\circ)^{\frac{1}{\alpha}} + ca\right\}^\alpha - V^\circ - \frac{k}{\lambda}=0,
$$
and let $R^\circ(x)$ be the unique solution of $V(x,R^\circ(x))=V^\circ$. It is easy to check that $V^\circ>V^*$ and hence $R^\circ(x)>R^\dag(x)$ for all $x$. In view of Equations  (\ref{dervx1}) and (\ref{dervx2}), to prove that $V$ is increasing in $x$, it is enough to show that the curve $R^*(x)$ lays below the curve $R^\circ(x)$.

We conclude the proof with the following geometric argument. 
\begin{itemize}
    \item $R^*(x)$ is below $R^\circ(x)$. Suppose that this is not the case. Then, there is a point $x_0$ such that $R^*(x_0)=R^\circ(x_0)$ and $\frac{\partial R^*(x_0)}{\partial x}\geq \frac{\partial R^\circ(x_0)}{\partial x}$. However, since $R^\circ(x_0)>R^\dag(x_0)$, $\frac{\partial R^*(x_0)}{\partial x}<0$, while by definition $\frac{\partial V(x,R^\circ(x))}{\partial x} = 0$ when $R^\circ(x)= R^*(x)$ and so $\frac{\partial R^\circ(x_0)}{\partial x}=0$ and we arrive to a contradiction. This shows that $V$ is increasing in $x$, and in particular that $R^\dag$ is decreasing. 
    \item $R^*(x)$ is above $R^\dag(x)$.  Assume otherwise, so that there exists $x_1>0$ such that $R^\dag(x_1)> R^*(x_1)$. Let $x_0 < x_1$  be the largest $x$ in on $[0,x_1]$ such that $R^\dag(x_0) \leq R^*(x_0)$. Then $R^\dag(x_0)=R^*(x_0)$. Since $R^\dag$ is decreasing, we must have $R^*(x_1)< R^*(x_0)$, which is not possible since $R^*$ is increasing on $[x_0,x_1]$. 
So  $R^\dag(x)\leq R^*(x)$ for all $x>0$,  which implies that $R^*$ is decreasing, as claimed.
\end{itemize}

\end{proof}

\subsection{Characterization of the value function and the optimal strategy}
We are now ready to prove the first and the third main theorems of this paper. 

\begin{proof}[Proof of Theorems \ref{HJB.thm} and \ref{strat.thm}]
Various properties of the value function have been shown in Proposition \ref{HJB.prop}, Lemma \ref{prop.lm} and the proof of Theorem \ref{cons.thm} above.

We now concentrate on the second part of Theorem \ref{HJB.thm} and on the characterization of the optimal strategy. 
Assume that the function $\widetilde V$ satisfies the assumptions of the theorem and let us show that it coincides
with the value function. Let $(\theta,c)\in \mathcal A(x,R)$ be an admissible bang-bang consumption-exploration strategy. Recall that the dynamics of the reserve process $(R_t)$ is given in equation (\ref{budget}), and the dynamics of the explored area process in equation \ref{area.eq}. 
Fix $T<\infty$. 
Then,
\begin{eqnarray*}
& & e^{-rT} \widetilde V(x-X_T,R_T) - \widetilde V(x,R)
= \int_0^T e^{-rt} (-r\widetilde V(x-X_t,R_t)-\frac{\partial
  \widetilde V}{\partial R} c_t) dt  \\
 & & + \sum_{n=0}^{\infty} \mathbf
  1_{\theta_n \leq T} e^{-r\theta_n}\{\widetilde V(x-X_{n+1},R_{\theta_n-}+a\mathbf 1_{X_{n+1}<x}) -\widetilde V(x-X_n,R_{\theta_n-})\}.
\end{eqnarray*}
Taking the expectation of both sides and using the fact that $\widetilde V$ satisfies the HJB equation, this implies that 
{\footnotesize \begin{eqnarray*}
\widetilde V(x,R) &\geq & \mathbb E\left[\int_{0}^{T}
       e^{-rt} u(c_t)dt\right] + 
\mathbb  E[e^{-rT}\widetilde V(x-X_T,R_T)]\label{ineq1}\\
& & \qquad-  \mathbb E\left[\sum_{n=0}^{\infty} \mathbf
  1_{\theta_n \leq T} e^{-r\theta_n}\mathbb E\{\widetilde V(x-X_{n+1},R_{\theta_n-}+a\mathbf 1_{X_{n+1}<x}) -\widetilde V(x-X_n,R_{\theta_n-})|\mathcal F_{X_n}\}\right]\nonumber\\
& = &  \mathbb E\left[\int_{0}^{T}
       e^{-rt} u(c_t)dt\right] +  \mathbb E[e^{-rT}\widetilde V(x-X_T,R_T)] \nonumber\\
& & \qquad-  \mathbb E\left[\sum_{n=0}^{\infty} \mathbf
  1_{\theta_n \leq T} e^{-r\theta_n}\{M\widetilde V(x-X_n,R_{\theta_n-})-\widetilde V(x-X_n,R_{\theta_n-}) + k\frac{1-e^{-\lambda X_n}}{\lambda}\}\right]\nonumber\\
&\geq & \mathbb E\left[\int_{0}^{T}
       e^{-rt} u(c_t)dt - k\sum_{n=0}^{\infty} e^{-r\theta_n}\mathbf
  1_{\theta_n \leq T} (X_{n+1}-X_n) \right] +
 \mathbb E[e^{-rT}\widetilde V(x-X_T,R_T)], \label{ineq2}
\end{eqnarray*}}
where we have used the fact that $\theta_n\in \mathcal F_{X_n}$ and
$X_{n+1}-X_n$ is independent from $\mathcal F_{X_n}$.  As $T\to \infty$,
\begin{eqnarray*}
0 & \leq & \lim_{T\to \infty}  \mathbb E[e^{-rT}\widetilde V(x-X_T,R_T)] \\
  & \leq & \lim_{T\to \infty}  C\mathbb E[e^{-rT}(1+ U(R_T))] \leq \lim_{T\to \infty}  C\mathbb E[e^{-rT}(1+ U(R+aN_{x}))] =0,
\end{eqnarray*}
so that 
$$
\widetilde V(x,R) \geq \mathbb E\left[\int_{0}^{\infty}
       e^{-rt} u(c_t)dt - k\sum_{n=0}^{\infty} e^{-r\theta_n}
  (X_{n+1}-X_n)\right] ,
$$
and since the consumption-exploration strategy was arbitrary,
$$
\widetilde V(x,R)  \geq V(x,R). 
$$

To prove the equality, remark that Lemma \ref{cinf} and Proposition \ref{boundary.prop} hold with $\widetilde V$ instead of $V$, which means that we can define a consumption-exploration strategy using formulas (\ref{strat.c0}--\ref{strat.Rn}) with $\widetilde V$ instead of $V$ and with $\widetilde R^*$, the exploration frontier defined from the function $\widetilde V$ instead of $R^*$. With this choice of consumption-exploration strategy, for all $n\geq 0$,  $R_{\theta_n-}\leq R^*(x-X_n)$ with the convention $\theta_{-1}=0$, so that 
$$
M\widetilde V(x-X_n,R_{\theta_n-})   =\widetilde  V(x-X_n,R_{\theta_n-}),
$$
This means that the inequality in (\ref{ineq2}) becomes an equality. On the other hand, for $R\geq R^*(x)$, the function $\widetilde V$ satisfies the equation $u^*\left(\frac{\partial \widetilde V}{\partial R}(x,R)\right) = r \widetilde V(x,R)$. From this, it is easy to deduce that
$$
c^n(t) = \left(\frac{\partial \widetilde V}{\partial R}(x-X_n, R^n(t))\right)^{\frac{1}{\alpha-1}},
$$
and so
$$
r\widetilde V (x-X_n, R^n(t)) + c^n(t)\frac{\partial \widetilde V}{\partial R}(x-X_n, R^n(t)) = e^{-rt} u(c^n(t)). 
$$
This means that the inequality in (\ref{ineq1}) also becomes an equality and we conclude that 
$$
\widetilde V(x,R) = V(x,R).
$$
This argument also establishes the optimality of the strategy (\ref{strat.c0}--\ref{strat.Rn}) and thus proves Theorem \ref{strat.thm}.
\end{proof}

\subsection{Optimality of bang-bang strategies}

\begin{proof}[Proof of Theorem \ref{optimality}]

Consider an admissible consumption-exploration strategy $(c,X)$ and its associated reserve process $R$, and let $t>0$. 
Since $V$ is continuously differentiable in its entire domain, 
\begin{eqnarray*}
e^{-rt} V(x-X_t,R_t) - V(x,R) &=& \int_0^t e^{-rs}\left\{-rV   - \frac{\partial V}{\partial R} c_s \right\}ds-\int_0^t e^{-rs} \frac{\partial V}{\partial x} dX^c_s \\&+& \sum_{0\leq s\leq t: \Delta X_{s} \neq 0\text{\ or\ }\Delta R_s\neq 0} e^{-r s}\left\{V(x-X_s,R_{s}) - V(x-X_{s-},R_{s-})\right\},
\end{eqnarray*}
where $X^c$ denotes the continuous part of $X$. 

Introduce the function $\theta(u) = X_{X^{-1}(u)}$, the process $\overline R_u = R_{X^{-1}(u)} + a(N_u-N_{\theta(u)})$ and the function
$$
F(y) = \int_0^{y}e^{-r X^{-1}(u)}\frac{\partial V(x-u,\overline R_u)}{\partial x} du. 
$$
Since
$$
F(X_t) = \int_0^t F'(X_s) dX^c_s + \sum_{0\leq s\leq t: \Delta X_s \neq 0}(F(X_s)-F(X_{s-})),
$$
it follows that 
\begin{eqnarray*}
\int_0^t e^{-rs} \frac{\partial V(x-X_s,R_s)}{\partial x} dX^c_s&=&\int_0^{X_t} e^{-rX^{-1}(u)}\frac{\partial V(x-u,\overline R_u)}{\partial x} du \\&-& \sum_{0\leq s\leq t: \Delta X_s \neq 0} e^{-rs}\int_{X_{s-}}^{X_s} \frac{\partial V}{\partial x} (x-u,\overline R_u) du.
\end{eqnarray*}
The last term equals
\begin{eqnarray*}
&&\sum_{0\leq s\leq t: \Delta X_s \neq 0} e^{-rs}\int_{X_{s-}}^{X_s} \frac{\partial V}{\partial x} (x-u,\overline R_{X_{s-}} + a(N_u - N_{X_{s-}})) du
\\
&&\qquad \qquad =- \sum_{0\leq s\leq t: \Delta X_s \neq 0} e^{-rs} \int_{X_{s-}}^{X_s} \{V(x-u,\overline R_{u-}+a) - V(x-u,\overline R_{u-}) \} dN_u \\ &&\qquad\qquad\qquad\qquad + \sum_{0\leq s\leq t: \Delta X_s \neq 0} e^{-rs}\left\{ V(x-X_s,R_s) - V(x-X_{s-},R_{s-})\right\}\\
&&\qquad\qquad =  -\int_0^{X_t} e^{-rX^{-1}(u)}  \{V(x-u,\overline R_{u-}+a) - V(x-u,\overline R_{u-}) \} dN_u \\&&\qquad\qquad\qquad\qquad+ \sum_{0\leq s\leq t: \Delta X_s \neq 0\text{\ or\ } \Delta R_s \neq 0} \{V(x-X_s,R_s) - V(x-X_{s-},R_{s-})\}. 
\end{eqnarray*}
Thus,
{\footnotesize\begin{eqnarray*}
&&e^{-rt} V(x-X_t,R_t) - V(x,R) = \int_0^t e^{-rs}\left\{-rV  - \frac{\partial V}{\partial R} c_s \right\}ds\\ &&-\int_0^{X_t} e^{-rX^{-1}(u)}\frac{\partial V(x-u,\overline R_u)}{\partial x} du +\int_0^{X_t} e^{-rX^{-1}(u)}  \{V(x-u,\overline R_{u-}+a) - V(x-u,\overline R_{u-}) \} dN_u.
\end{eqnarray*}}
Applying the inequality (\ref{dervx2}), we then find:
\begin{eqnarray*}
e^{-rt} V(x-X_t,R_t) &-& V(x,R) \leq -\int_0^t e^{-rs}u(c_s)ds + k\int_0^t e^{-rs}d X_s  \\&+& \int_0^{X_t} e^{-rX^{-1}(s)} \left\{V(x-u,\overline R_{u-} + a) - V(x-u,\overline R_{u-})\right\}(dN_u - \lambda du).
\end{eqnarray*}
Finally, taking the expectation and using the martingale property, we conclude:
$$
V(x,R)\geq \mathbb E[e^{-rt} V(x-X_t,R_t)] + \mathbb E\left[\int_0^t e^{-rs}u(c_s)ds - k\int_0^t e^{-rs}d X_s \right].
$$
Passing to the limit $t\to \infty$ and using the positivity of the value function, the proof is completed. 
\end{proof}
\end{document}